\tikzset{line/.style={draw, thick, -latex'}}
\renewcommand{\ALG@beginalgorithmic}{\footnotesize}
    \newtheorem{mydef}{Definition}[section]
    \newtheorem{mytheo}{Theorem}[section]
    \newtheorem{lemma}{Lemma}[section]
    \newtheorem{prop}{Proposition}[section]
    \newcommand{\rNum}[1]{\expandafter{\romannumeral #1\relax}}
    \newcommand{\rNUM}[1]{\uppercase\expandafter{\romannumeral #1\relax}}
    \newcommand{\bff}[1]{\mathbf{#1}}
    \newcommand{\mbb}[1]{\mathbb{#1}}
    \newcommand{\spc}[1]{\begin{spacing}{#1}}
    \newcommand{\spce}{\end{spacing}}
    \newcommand{\la}[0]{\langle}
    \newcommand{\ra}[0]{\rangle}
    \newcommand{\rmn}[1]{(\romannumeral#1)}
    \newcommand{\sig}[0]{\varsigma}
    \newcommand{\act}[0]{\mathit{evt}}
    \newcommand{\noth}[0]{\mathbf{1}}
    \newcommand{\halt}[0]{\mathbf{0}}
    \newcommand{\true}[0]{\mathit{tt}}
    \newcommand{\false}[0]{\mathit{ff}}
    \newcommand{\SP}[0]{\mathbf{SP}}
    \newcommand{\Cl}[0]{\mathit{Cl}}
    \newcommand{\Var}[0]{\mathit{Var}}
    \newcommand{\fus}[0]{\cdot}
    \newcommand{\val}[0]{\mathit{val}}
    \newcommand{\dVar}[0]{\mathit{DV}} 
    \newcommand{\Par}[0]{\mathit{Par}}
    \newcommand{\BV}[0]{\mathit{BV}}
    \newcommand{\FV}[0]{\mathit{FV}} 
    \DeclareMathOperator{\seq}{;}
    \DeclareMathOperator{\cho}{\cup}
    \DeclareMathOperator{\para}{\cap}
    \newcommand{\CEvt}[0]{\mathit{CPrg}}
    \newcommand{\none}[0]{\Lambda}
    \newcommand{\Trec}[0]{\bff{Trec}}
    \newcommand{\irep}[0]{\infty}
    \newcommand{\SE}[0]{\mathit{SExe}}
    \newcommand{\Com}[0]{\mathit{rSExe}}
    \newcommand{\Must}[0]{\mathit{Must}}
    \newcommand{\Cannot}[0]{\mathit{Cant}}
    \newcommand{\ASP}[0]{\SP^{\mathit{at}}}
    \newcommand{\Chk}[0]{\mathit{Chk}}
    \newcommand{\Match}[0]{\mathit{Mat}}
    \newcommand{\Closed}[0]{\mathit{Clo}}
    \newcommand{\Signal}[0]{\mathit{Sig}}
    \newcommand{\OC}[0]{\mathit{OC}}
    \newcommand{\rfval}[0]{\mathit{rval}_t}
    \newcommand{\valt}[0]{\val_t}
    \newcommand{\join}[0]{\bowtie}
    \DeclareMathOperator{\sep}{|}
    \DeclareMathOperator{\nex}{.}
    \newcommand{\mval}[0]{\mathit{val}_m}
    \newcommand{\Merge}[0]{\mathit{Mer}}
    \newcommand{\rMerge}[0]{\mathit{rMer}}
    \newcommand{\append}[0]{\triangleleft}
    \newcommand{\place}[0]{\_}
    \newcommand{\red}[0]{\rightsquigarrow}
    \newcommand{\ToSeq}[0]{\mathit{Brz}}
    \newcommand{\strr}[0]{\mathit{str}}
    \newcommand{\trec}[0]{\mathit{trec}}
    \DeclareMathOperator{\link}{\unlhd}
    \newcommand{\comb}[0]{\mathit{comb}}
    \newcommand{\Sigs}[0]{\mathit{Sig}}
    \newcommand{\lev}[0]{\mathit{lev}}
    \newcommand{\reld}[0]{\mathit{\vdash^+}}
    \newcommand{\parp}[0]{\mathit{par}}
    \newcommand{\SV}[0]{\Xi}
    \newcommand{\seqArrow}[0]{\Rightarrow}
    \newcommand{\Can}[0]{\mathit{Can}}
    \newcommand{\getCan}[0]{\mathit{getCan}}
    \newcommand{\rgetCan}[0]{\mathit{rgetCan}}
    \newcommand{\dddef}[0]{=_{df}}
    \newcommand{\ddef}[0]{::=}
\begin{document}
\title{A Dynamic Logic for Verification of Synchronous Models based on Theorem Proving}
\author[1]{Yuanrui Zhang}
\affil[1]{School of Mathematics and Statistics, Southwest University, China}
\date{202011}

\maketitle

\tikzset{
    state/.style={
           rectangle,
           rounded corners,
           draw=black, semithick,
           minimum height=2em,
           minimum width=2em,
           inner sep=2pt,
           text centered,
           },
    state2/.style={
           rectangle,
           draw=black, semithick,
           draw=black, semithick,
           minimum height=2em,
           minimum width=2em,
           inner sep=2pt,
           text centered,
           },
    state3/.style={
           circle,
           draw=black, semithick,
           minimum height=1em,
           minimum width=1em,
           inner sep=1pt,
           text centered,
           },
    state4/.style={
           circle,
           draw=black, thick,
           minimum height=2em,
           minimum width=2em,
           inner sep=2pt,
           text centered,
           },
}

\ifx
[Basic ideas:

We have decided to propose a synchronous model following the spirit of SCCS or concurrent kleen algebra for reactive systems,
where the `synchrony' (i.e., behaviours of systems are modelled as discrete events and several events can be triggered simultaneously) has been stressed in its syntax.

The reason to take this approach is mainly that imperative synchronousl anguages, such as Esterel, are not suitable for compositional reasoning because of their `preemption operators' (this calling might not be correct and should be confirmed later).

The best is that we can propose a set of rewrite rules for reducing each Esterel statement to a structure of the synchroous model we propose.

The semantics of concurrent programs is largely inspired from \cite{???} but is quite different since the synchronous mechanism we consider.
]

[
    synchronous model is a popular model for modeling real-time systems in which physical time of system events has been abstracted away and only the notion of logical time exists.
    Synchronous programming languages such as... have been widely used in design and modelling of real-time embedded systems in industries.

]
\fi

\begin{abstract}
  Synchronous model is a type of formal models for modelling and specifying reactive systems.
  It has a great advantage over other real-time models that its modelling paradigm supports a deterministic concurrent behaviour of systems.
  Various approaches have been utilized for verification of synchronous models based on different techniques, such as model checking, SAT/SMT sovling, term rewriting, type inference and so on.
  In this paper, we propose a verification approach for synchronous models based on compositional reasoning and term rewriting.
  Specifically, we initially propose a variation of dynamic logic, called synchronous dynamic logic (SDL).
  SDL extends the regular program model of first-order dynamic logic (FODL) with necessary primitives to capture the notion of synchrony and synchronous communication between parallel programs, and
  enriches FODL formulas with temporal dynamic logical formulas to specify safety properties --- a type of properties mainly concerned in reactive systems.
  To rightly capture the synchronous communications, we define a constructive semantics for the program model of SDL.
  We build a sound and relatively complete proof system for SDL.
  Compared to previous verification approaches, SDL provides a divide and conquer way to analyze and verify synchronous models based on compositional reasoning of the syntactic structure of the programs of SDL.
  To illustrate the usefulness of SDL, we apply SDL to specify and verify a small example in the synchronous model SyncChart, which shows the potential of SDL to be used in practice.

\end{abstract}

\section{Introduction}
\label{section:Introduction}

Synchronous model is a type of formal models for modelling and specifying reactive systems~\cite{Harel85} --- a type of computer systems which maintain an on-going interaction with their environment at a speed determined by this environment.
The notion of synchronous model was firstly proposed in 1980's along with the invention and development of synchronous programming languages~\cite{Benveniste03}.
Among these synchronous programming languages,
the most famous and representative ones are Signal~\cite{Benveniste91}, Lustre~\cite{Halbwachs91} and Esterel~\cite{Berry92}, which have been widely used in academic and industrial communities (for instance, cf.~\cite{Espiau90,Berry00,Gamatie04,Qian15}).

\ifx
[
papers about application of synchronous languages in industries: 1. A synchronous approach for control sequencing in robotics applications. 2. a
formal method applied to avionic software development. 3. System level design and verification using a
synchronous language.
4. 12 years later?
5. Specifying a Cryptographical Protocol in Lustre and SCADE.
]
\fi

One crucial modelling paradigm taken in synchronous models is \emph{synchrony hypothesis}~\cite{Benveniste03},
which considers the behaviour of reactive systems as a sequence of \emph{reactions}.
At each reaction, all events of sensing from inputs, doing computations and actuating from outputs are considered to occur \emph{simultaneously}.
In such a model, time is interpreted as a discrete sequence of instants where at each instant, several events may occur in a logical order (in the sense that they still can have data dependencies).
The physical time of events no longer plays a crucial role, but the simpler \emph{logical time}, i.e., the instant at which the event is executed. 
Since events supposed to be nondeterministic in a system are now all assumed to be simultaneous,
in a synchronous model, a concurrent behaviour is \emph{deterministic}. 
Determinism has proved to be of great advantage not only in system testing (as mentioned in~\cite{Lee15,Lohstroh19}), but also in system verifications like model checking and theorem proving because it can largely decrease the size of the model of a concurrent system.

\ifx
[
program verification: while model checking is commonly taken as a main technology in verification of synchronous models, theorem-proving-based program verification has its own advantages over model checking.

Dynamic logic: A good formalism for capturing both system models and properties and recently has been used to specifying and verifying hybrid systems [Platzer's papers???].
]
\fi

In the verification of synchronous models, model checking has been adopted as the dominant technique (cf. \cite{Halbwachs93b,Pilaud88,Jagadeesan95,Andre09,Berry85,Halbwachs91}).
Although it has advantages like decidability and high efficiency, it is well-known that model checking suffers from the so-called state-space explosion problem.
Moreover, since model checking is based on the enumeration of states, it is hard for it to support \emph{modular verification} --- a ``divide and conquer'' way which proves to be important in verifying large-scale systems.

Theorem-proving-based program verification~\cite{ProgramVerifications09}, however, is a complement to model checking which can remedy these two disadvantages mentioned above.
It is not based on the enumeration of states, but the decomposition of programs according to their structures.
Dynamic logic~\cite{Pratt76} is a mathematical logic that supports theorem-proving-based program verification.
Unlike the three-triple forms $\{\phi\}p\{\psi\}$ in Hoare logic~\cite{Hoare63}, it integrates a program model $p$
directly into a logical formula so as to define a \emph{dynamic logical formula} in a form like $\phi\to [p]\psi$ (meaning the same thing as the triple $\{\phi\}p\{\psi\}$).
Such an extension on formulas allows to express the negations of program properties, e.g. $\neg [p]\psi$, which makes dynamic logic more expressive than Hoare logic.
Dynamic logical formulas can well support compositional reasoning on programs.
For example, by applying a compositional rule:
$$
\infer[]{
[p\cup q]\psi
}
{
[p]\psi \wedge [q]\psi
}
,$$ the proof of the property $[p\cup q]\psi$ of a choice program $p\cup q$, can be discomposed into the proof of the property $[p]\psi$ of the subprogram $p$ and
the proof of the property $[q]\psi$ of the subprogram $q$.

Dynamic logic has proven to be a strong method for verifying programs in a theorem-proving approach and has been applied or extended to be applied in
different types of programs and systems (e.g.~\cite{Feldman84,Rustan07,Platzer07b,Platzer07}).
In this paper, we introduce dynamic logic in the verification of reactive systems.
Specifically, we initially use dynamic logic for specifying and verifying the synchronous models of reactive systems.
However, directly using the traditional dynamic logics for our goal would yield several problems.
Firstly,  the program models of the traditional dynamic logics, called \emph{regular programs}~\cite{Harel00}, are too general to express the execution mechanism in synchronous models, i.e., the simultaneous executions of events in one reaction.
Of course, these features of synchronous models can be encoded as regular programs, but the information will be lost during the proving process when the program structures are broken down.
Secondly, regular programs only support modelling sequential programs.
It cannot specify the communications in synchronous models, which is a main characteristic of reactive systems.
Thirdly, the traditional dynamic logics can only specify state properties, i.e., properties held after the terminations of a program.
In reactive systems, however, since a system is usually non-terminating, people care more about temporal properties (especially safety properties~\cite{Halbwachs91}), i.e., properties concerning each reaction of a system.

\ifx
[
when using rogram verification in synchronous models, we face two problems:
The hard point: 1. synchronous programming languages are usually not modular for theorem proving because they contain preemptive and signal-triggered sentences;
2. In reactive systems, since the system is always non-terminated, people always care about its temporal properties (especially safety properties [???]), rather than its state properties after the termination of the system.
]

[
our approach:
our work is to remedy this short comming of program verification approach in synchronous models.
We propose a verification framework....
SDL provides a framework for verifying synchronous models in a theorem-proving approach.
1. how do we solve the hard points (discussed above);
]
\fi

In this paper, to solve these problems, we propose a variation of dynamic logic, called \emph{synchronous dynamic logic} (SDL), for the verification of synchronous models based on theorem proving.
SDL extends first-order dynamic logic (FODL)~\cite{Harel79} in two aspects: (1) the program model of SDL, called \emph{synchronous programs} (SPs), extends regular programs with the notions of reactions and signals in order to model the execution mechanism of synchronous models, and with the parallel operator to model the communications in synchronous models; (2) SDL formulas extends FODL formulas with a type of formulas from~\cite{Platzer07} called \emph{temporal dynamic logical formulas} in order to capture safety properties in reactive systems.
We build a sound and relatively complete~\cite{Cook78} proof system for SDL, which extends the proof system of FODL with a set of rules for the new primitives introduced in SPs, and with a set of rewrite rules for the parallel programs in SPs.
We illustrate how SDL can be used to specify and verify synchronous models by simple examples, where we encode one of popular synchronous models, syncCharts~\cite{Andre03}, as SPs and specify their properties as SDL formulas.
We also display how to prove a simple SDL formula in the proof system of SDL.

Our proposed SP model of SDL follows an SCCS~\cite{Milner83} style in its syntax. But instead of considering all events executed at an instant as non-ordered elements,
in SPs we also consider the logical before-and-after order between these events, as also considered in synchronous models like~\cite{Lohstroh19}.
SP model is not an imperative synchronous programming language, like Esterel and Quartz~\cite{Schneider17}, which were aimed at implementing reactive systems in a convenient and hierarchy way.
These languages contain more advanced features such as preemption and trap-exit statement~\cite{Berry92}, in order to make the programming of some high-level behaviour of circuits easier.
However, these features are not the essences of synchronous models and are redundant in terms of expressive power.
Our SPs do not provide a support of them, because they cannot be verified in a compositional way in a theorem-proving approach.
Besides, the goal of SDL is not to verify actual synchronous programs at industrial level in reality, but
to provide a verification framework for general synchronous models at early stages of design of a reactive system.
In our opinion, SDL also provides a theoretical foundation for building more complex logics with richer syntax and semantics for verifying synchronous programming languages in practice.

The rest of this paper is organized as follows.
In Sect.~\ref{section:First-order Dynamic Logic}, we give a brief introduction to FODL.
We propose SDL and build a proof system for SDL in Sect.~\ref{section:Synchronous Dynamic Logic} and ~\ref{section:Proof System of SDL} respectively, and
in Sect.~\ref{section:Soundness and Relative Completeness of SDL Calculus}, we analyze the soundness and relative completeness of SDL calculus.
In Sect.~\ref{section:SDL in Specification and Verification of SyncCharts}, we show how SDL can be used in specifying and verifying synchronous models through an example.
Sect.~\ref{section:Related Work} introduces related work, while a conclusion is made in Sect.~\ref{section:Conclusion and Future Work}.

\section{First-order Dynamic Logic}
\label{section:First-order Dynamic Logic}
\ifx
[
Give an intuitive explanation of the semantics of FODL but not the formal definition.
]
\fi

The syntax of FODL is as follows:
$$
\begin{aligned}
p\ddef &\ \psi?\ |\ x:=e\ |\ p\seq p\ |\ p\cup p\ |\ p^*, \\
\phi\ddef &\ \true\ |\ a\ |\ [p]\phi\ |\ \neg \phi\ |\ \phi\wedge\phi\ |\ \forall x.\phi.
\end{aligned}
$$
It consists of two parts, a program model called \emph{regular programs} $p$ and a set of logical formulas $\phi$.

$\psi?$ is a \emph{test}, meaning that at current state the proposition $\psi$ is true. 
$x:=e$ is an \emph{assignment}, meaning that assigning the value of the expression $e$ to the variable $x$. 
The syntax of $\psi$ and $e$ depends on the discussed domain. For example, $e$ could be an arithmetic expression and $\psi$ could be a quantifier-free first-order logical formula defined in Def.~\ref{def:Terms}, \ref{def:First-order Logical Formulas}.
$\seq$ is the \emph{sequence operator}, $p\seq q$ means that the program $p$ is first executed, after it terminates, the program $q$ is executed.
$\cup$ is the \emph{choice operator}, $p\cup q$ means that either the program $p$ or $q$ is executed.
$*$ is the \emph{star operator}, $p^*$ means that $p$ is nondeterministically executed  for a finite number of times.

$\true$ is Boolean true. (We also use $\false$ to represent Boolean false. )
$a$ is an atomic formula whose definition depends on the discussed domain.
For example, it could be the term $\theta$ defined in Def.~\ref{def:First-order Logical Formulas}.
Formulas $[p]\phi$ are called \emph{dynamic formulas}.
$[p]\phi$ means that after all executions of $p$, formula $\phi$ holds.

The semantics of FODL is based on Kripke frames~\cite{Harel00}.
A Kripke frame is a pair $(S, \val)$ where $S$ is a set of states, also called \emph{worlds}, and $\val$ is an interpretation, also called a \emph{valuation}, whose definition depends on the type of logic being discussed.
In FODL, $\val$ interprets a program as a set of pairs $(s, s')$ of states and interprets a formula as a set of states.
Intuitively, each pair $(s, s')\in \val(p)$ means that starting from the state $s$, the program $p$ is executed and terminates at the state $s'$.
For each state $s$, $s\in \val(\phi)$ means that $s$ satisfies the formula $\phi$; $s\in \val([p]\phi)$ means that for all pairs $(s,s')\in \val(p)$, $s'\in \val(\phi)$.
For a formal definition of the semantics of FODL, one can refer to~\cite{Harel00}.

The proof system of FODL is sound and relatively complete. Except for the the assignment $x:=e$ and test $\psi?$, the rules for other operators and logical connectives are also defined as a part of the proof system of SDL below in Table~\ref{table:Rules for Closed SPs} and \ref{table:Rules of FOL}. Refer to~\cite{Harel00} for more details.

\section{Synchronous Dynamic Logic}
\label{section:Synchronous Dynamic Logic}
In this section, we propose the synchronous dynamic logic (SDL).
SDL extends FODL to provide a support for specification and verification of synchronous models of reactive systems.
In Sect.~\ref{section:Syntax of SDL}, we firstly define the syntax of SDL.
Then in Sect.~\ref{section:Semantics of SDL}, we give the semantics of SDL.

\subsection{Syntax of SDL}
\label{section:Syntax of SDL}

SDL consists of a program model defined in Def.~\ref{definition: Synchronous Programs} and a set of logical formulas defined in Def.~\ref{def:sDTL Formulas}.

Before defining the program model of SDL, we first introduce the concepts of terms and first-order logical formulas.

\begin{mydef}[Terms]
\label{def:Terms}
	The syntax of a term $e$ is an arithmetical expression given as the following BNF form:
	$$\begin{aligned}
		e\ddef &\ x\ |\ c\ |\ f(e, e),
	\end{aligned}$$
	where $x\in \Var$ is a variable, $c\in \mbb{Z}$ is a constant, $f\in \{+, -, \cdot, /\}$ is a function.
\end{mydef}
The symbols $+,-,\cdot, /$ represent the usual binary functions in arithmetic theory: addition, substraction, multiplication and division respectively.

\begin{mydef}[First-order Logical Formulas]
\label{def:First-order Logical Formulas}
	The syntax of an arithmetical first-order logical formula $\psi$ is defined as follows:
	$$\begin{aligned}
	\psi \ddef &\ \true\ |\ \theta(e, e)\ |\ \neg \psi\ |\ \psi\wedge \psi\ |\ \forall x. \psi,
	\end{aligned}$$
where 
$\theta\in \{<, \le, =, >, \ge\}$ is a relation.
\end{mydef}
The symbols $<, \le, =, >, \ge$ represent the usual binary relations in arithmetic theory, for example, $<$ is the ``less-than'' relation, $=$ is equality, and so on.
As usual, the logical formulas with other connectives, such as $\vee, \exists, \to$, can be expressed by the formulas given above.

The formulas defined in Def.~\ref{def:First-order Logical Formulas} contains the terms and relations interpreted as in Peano arithmetic theory, so
in this paper, we also call them \emph{arithmetic first-order logical (AFOL) formulas}.

The program model of SDL, called \emph{synchronous program} (SP), extends the regular program of FODL with primitives to support the modelling paradigm of synchronous models.
In order to capture the notion of reactions, we introduce the \emph{macro event} in an SP to collect all events executed at the same instant.
In order to describe the communications in synchronous models, we introduce the  \emph{signals} and \emph{signal conditions} as in synchronous programming languages like Esterel~\cite{Berry92} to send and receive messages between SPs,
and we introduce a parallel operator to express that several SPs are executed concurrently.

\begin{mydef}[Synchronous Programs]
\label{definition: Synchronous Programs}
The syntax of a synchronous program $p$ is given as the following BNF form:
$$\begin{aligned}
p\ddef &\ \noth\ |\ \halt\ |\ \alpha\ |\ p\seq p\ |\ p \cho p\ |\ p^*\ |\ \para(p,...,p),
\end{aligned}$$
where $\alpha$ is defined as:
$$\begin{aligned}
\alpha \ddef&\ \epsilon\ |\ \act\nex \alpha,\\
\act \ddef &\ \psi?\ |\ \varrho?\ |\ \sig!e\ |\ x:=e, \\
\varrho\ddef&\ \hat{\sig}(x)\ |\ \bar{\sig}.
\end{aligned}$$
\end{mydef}

The set of all synchronous programs is denoted by $\SP$.
We call $\noth, \halt$ and $\alpha$ \emph{atomic programs},
and call the programs of other forms \emph{composite programs}.
We denote the set of all atomic programs as $\ASP$.

$\noth$ is a program called \emph{nothing}.
As implied by its name, the program neither does anything nor consumes time.
The role it plays is similar to the statement ``nothing'' in Esterel~\cite{Berry92}.

$\halt$ is a program called \emph{halting}. It causes a deadlock of the program.
It never proceeds and the program halts forever.
The program $\halt$ is similar to the statement ``halt'' in Esterel.

The program $\alpha$ is called a \emph{macro event}.
It is the collection of all events executed at the current instant in an SP.
A macro event consists of a sequence of events linked by dots $\nex$ one by one and it always ends up with a special event $\epsilon$ called \emph{skip}.
The event $\epsilon$ skips the current instant and forces the program move to the next instant.
$\epsilon$ is the only term that consumes time in SPs.
It plays a similar role as the statement ``skip'' in Esterel.
An event $\act$, sometimes also called a \emph{micro event}, can be either a test $\psi?$, a \emph{signal test} $\varrho?$, a \emph{signal emission} $\sig!e$ or an assignment $x := e$.
The tests $\psi?$ and assignments $x:=e$ have the same meanings as in FODL.
A signal test $\varrho?$ checks the signal condition $\varrho$ at the current instant.
$\varrho$ has two forms. $\hat{\sig}(x)$ means that the signal $\sig$ is emitted at the current instant. $x$ is a variable used for storing the value of $\sig$.
$\bar{\sig}$ means that the signal $\sig$ is absent at the current instant.
A signal emission $\sig!e$ emits a signal $\sig$ with a value expressed as a term $e$ at the current instant.
We stipulate that
a signal $\sig$ can emit with no values, and we call it a \emph{pure signal}, denoted by $\sig$.
Sometimes we also simply write $\sig!e$ as $\sig$ when the value $e$ can be neglected in the context.

The sequence programs $p\seq q$, choice programs $p\cup q$ and the star programs (or called \emph{finite loop programs}) $p^*$ have the same meanings as in FODL.
$\para$ is the parallel operator.
$\para(p_1,...,p_n)$ means that the programs $p_1,...,p_n$ are executed concurrently.
When $n=2$, we also write $\para(p_1,p_2)$ as $p_1\para p_2$.
We often call an SP without a parallel operator $\para$ a \emph{sequential program}, and call an SP that is not a sequential program a \emph{parallel program}.

In a parallel program $\para(p_1,...,p_n)$, at each instant, events from different programs $p_1,...,p_n$ are executed simultaneously based on the same environment, while events in each program $p_i$ are executed simultaneously, but in a sequence order.
SP model follows the style of SCCS~\cite{Milner83} in its syntax, but differs in the treatments to the execution order of events at one instant.
In SCCS, all events at one instant are considered as disordered elements that are commutative.

\ifx
The programs defined in Def.~\ref{definition: Synchronous Programs} is a synchronous model that follows the style of SCCS by Milner~\cite{Milner83},
but differs in the treatments to the order of events executed at one instant.
In SCCS, all events executed at one instant are linked by an commutative operator $\times$ so they are disordered.
For example, $a_1\times a_2$ and $a_2\times a_1$ are considered as the same behaviour in SCCS.
However, in SPs, events executed at the same instant still have logical orders, which means that $a_1\nex a_2$ and $a_2\nex a_1$ might be two different behaviours in SPs.
As will be seen later in this paper, logical order plays a crucial role in specifying behaviours in reactive systems.
\fi


A closed SP is a program that does not interfere with its environment.
In this paper, as the first step to propose a dynamic logic for synchronous models, we only focus on building a proof system for closed SPs.
This makes it easier for us to define the semantics for SPs because there is no need to consider the semantics for signals.
Similar approach was taken in~\cite{Peleg87} when defining a concurrent propositional dynamic logic.

\ifx
In this paper, we will only focus on building a proof system for closed systems, i.e., systems that do not interfere with their environment.
We take signals in SPs just as the channels for communications between parallel SPs so that we can avoid giving the semantics for signals.
Therefore, we need to introduce the notion of \emph{closed SPs},

As the approach taken in~\cite{concurrent dynamic logic}, in this paper, we take signals just as channels for communications, so a signal
\fi
\ifx
Our first version of SDL will only focus on the derivation of closed programs.
A closed program is a program that does not interfere with its environment.
Therefore there are not any sentences concerning signals and signal tests in the program.
\fi
\begin{mydef}[Closed SPs]
\label{def:Closed SPs}
	Closed SPs are a subset of SPs, denoted as $\Cl(\SP)$, where
	a closed SP $q\in \Cl(\SP)$ is defined by the following grammar:
	$$\begin{aligned}
	q\ddef \noth\ |\ \halt\ |\ \alpha'\ |\ q\seq q\ |\ q\cho q\ |\ q^*\ |\ \para(p,...,p),
	\end{aligned}$$
where $p\in \SP$ is an SP, $\alpha'$ is defined as:
	$$\begin{aligned}
\alpha' \ddef&\ \epsilon\ |\ \act'\nex \alpha',\\
\act' \ddef &\ \psi?\ |\ x:=e.
\end{aligned}$$
\end{mydef}

As indicated in Def.~\ref{def:Closed SPs}, a closed SP is an SP in which signals and signal tests can only appear in a parallel program.

We often call a macro event $\alpha'$ a \emph{closed macro event}, call an event $\act'$ a \emph{closed event}.
We call an SP which is not closed an \emph{open SP}.
We denote the set of all closed atomic programs as $\Cl(\ASP)$.
For convenience, in the rest of the paper, we also use $p$ to represent a closed SP and use $\alpha$, $\act$ to represent a closed macro event and a closed event respectively.

SDL formulas extend FODL formulas with temporal dynamic logical formulas of the form $[p]\Box\phi$ from~\cite{Platzer07} to capture safety properties in reactive systems.
The syntax of SDL formulas is given as follows.

\begin{mydef}[SDL Formulas]
    \label{def:sDTL Formulas}
	The syntax of an SDL formula $\phi$ is defined as follows:
$$\begin{aligned}
	\phi \ddef &\ \true\ |\ \theta(e,e)\ |\ \neg \phi\ |\ \phi\wedge \phi\ |\ \forall x.\phi\ |\ [p]\phi\ |\ [p]\Box\phi,
\end{aligned}$$
where $\theta\in \{<, \le, =, >, \ge\}$, $p\in \Cl(\SP)$.
\end{mydef}

$\phi$ is often called a \emph{state formula}, since its semantics concerns a set of states.
A property described by a state formula, hence, is called a \emph{state property}.

The formula $[p]\Box\phi$ captures that the property $\phi$ holds at each reaction of the program $p$.
Intuitively, it means that all execution traces of $p$ satisfy the temporal formula $\Box\phi$, which means that
all states of a trace satisfy $\phi$.
The dual formula of $[p]\Box\phi$ is written as $\la p\ra \Diamond\phi$, we have $\neg [p]\Box\phi = \la p\ra\Diamond\neg \phi$.
Intuitively, $\la p\ra \Diamond\phi$ means that there exists a trace of $p$ that satisfies the temporal formula $\Diamond \phi$, which means that there exists a state of a trace that satisfies $\phi$.

We assign precedence to the operators in SDL: the unary operators $*$, $\neg$, $\forall x$ and $[p]$ bind tighter than binary ones.
$\seq$ binds tighter than $\cup$.
For example, the formula $\neg \psi\wedge [\alpha\seq p^*\cup q]\phi\wedge \psi'$ should be read as $(\neg \psi)\wedge ([(\alpha\seq (p^*))\cup q]\phi)\wedge \psi'$.

\begin{mydef}[Bound Variables in SPs]
In SDL, a variable $x$ is called a bound variable if it is bound by a quantifier $\forall x$, an assignment $x := e$ or a signal test $\hat{\sig}(x)?$,
otherwise it is called a free variable.
\end{mydef}

We use $\BV(p)$ and $\FV(p)$ (resp. $\BV(\phi)$ and $\BV(\phi)$) to represent the sets of bound and free variables of an SP $p$ (resp. a formula $\phi$) respectively.

We introduce the notion of substitution in SDL.
A substitution $\phi[e/x]$ replaces each free occurrence of the variable $x$ in $\phi$ with the expression $e$.
A substitution $\phi[e/x]$ is \emph{admissible} if there is no variable $y$ that occurs freely in $e$ but is bound in $\phi[e/x]$.
In this paper, we always guarantee admissible substitutions by bound variables renaming mentioned above.

In this paper, we always assume \emph{bound variables renaming} (also known as $\alpha-$ conversion) for renaming bound variables when needed to guarantee admissible substitutions.
For example, for a substitution $([z:=x-1\nex \epsilon]x > z)[x/z]$, it equals to $[z:=y-1\nex \epsilon]y > x$ by renaming $x$ with a new variable $y$.

In SPs, we stipulate that signals are \emph{the only way} for communication between programs.
Therefore, any two programs running in parallel cannot communicate with each other through reading/writing a variable in $\Var$.
To meet this requirement, 
we put a restriction on the bound variables of SPs that are running in parallel.

\begin{mydef}[Restriction on Parallel SPs]
	\label{def:Restriction on Parallel SPs}
	In SPs, any parallel program $\para(p_1,...,p_n)$ satisfies that
	$$\BV(p_i)\cap \FV(p_j) = \BV(p_j)\cap \FV(p_i) = \emptyset $$
for any $i, j$, $1\le i<j\le n$.
\end{mydef}

We assume bound variables renaming for renaming bound variables when needed to solve the conflicts.
For example, given two programs $p = (x := 1 \seq v := x + 2)$ and $q = (\epsilon\seq y := x + 1)$, we have $p\para q = (z := 1\seq v := z + 2)\para (\epsilon\seq y := x + 1)$ by renaming the bound variable $x$ of $p$ with a new variable $z$.


\subsection{Semantics of SDL}
\label{section:Semantics of SDL}
Because of the introduction of temporal dynamic logical formulas $[p]\Box\phi$ in SDL, it is not enough to define the semantics of SPs as a pair of states like the semantics of the regular program of FODL~\cite{Harel00}, because except
recording the states where a program starts and terminates, we also need to record all states during the execution of the program.

Before defining the semantics of SDL, we first introduce the notions of states and traces.

\begin{mydef}[States]
	A state $s : \Var\to \mbb{Z}$ is a mapping from the set of variables $\Var$ to the domain of integers $\mbb{Z}$.
\end{mydef}

We denote the set of all states as $\bff{S}$.

\begin{mydef}[Traces]
	A trace $tr$ is a finite sequence of states: $$s_1s_2...s_n, \mbox{ where }n\ge 1.$$
	$n$ is the length of the trace $tr$, denoted by $l(tr)$.
\ifx
	The length of a finite trace $tr$, denoted by $l(tr)$, is defined as the length of its sequence of states. Formally,
	\begin{enumerate}
		\item $l(s) \dddef 1$ for any state $s$;
		\item $l(tr)\dddef l(tr^b) + l(tr^2)$ for any finite trace $tr$.
	\end{enumerate}
\fi
\end{mydef}

	We use $tr(i)$ ($i\ge 1$) to denote the $i$th element of $tr$.
	We use $tr^i$ ($i\ge 1$) to denote the prefix of $tr$ starting from the $i$th element of $tr$, i.e.,
	$tr^i \dddef tr(i)tr(i+1)...tr(n)$.
	We use $tr^b$ to denote the first element of $tr$, so $tr^b = tr(1)$; we use
	$tr^e$ to denote the last element of $tr$, so $tr^e = tr(l(tr))$.

\begin{mydef}[Concatenation between Traces]
	\label{def:Concatenation between Traces}
	\ifx
	The fusion of a finite trace $tr_1 = s_1s_2...s_n$ and a trace $tr_2 = u_1u_2...u_m...$, denoted by $tr_1\fus tr_2$, is defined as
	$$tr_1\fus tr_2 \dddef s_1s_2...s_nu_2...u_m...,$$
	provided that $s_n = u_1$.
	\fi
	
	Given two traces $tr_1 = s_1...s_n$ and $tr_2 = u_1u_2...u_m$ ($n\ge 1$, $m\ge 1$),
	the concatenation of $tr_1$ and $tr_2$, denoted by $tr_1\circ tr_2$, is defined as:
	$$tr_1\circ tr_2\dddef
		s_1...s_nu_2...u_m,\mbox{ provided that $s_n = u_1$}.$$
	
	The concatenation operator $\circ$ can be lifted to an operator between sets of traces.
	Given two sets of traces $T_1$ and $T_2$, $T_1\circ T_2$ is defined as:
	$$T_1\circ T_2\dddef \{tr_1\circ tr_2\ |\ tr_1\in T_1, tr_2\in T_2\}.$$
\end{mydef}

Note that if $tr^e_1\neq tr^b_{2}$, $tr_1\circ tr_2$ is undefinable. Therefore, $\circ$ is a partial function.

\ifx
\begin{mydef}[Infinite Concatenation between Traces]
	The fusion of an infinite (but countable) number of finite traces
	$$\begin{aligned}&tr_1 = s_{1,1}s_{1,2}...s_{1,k_1},\\
					&tr_2 = s_{2,1}s_{2,2}...s_{2,k_2},\\
					&...,\\
					&tr_n = s_{n,1}s_{n,2}...s_{n,k_n},\\
					&...,
	\end{aligned}$$
	denoted by $tr_1\fus tr_2\fus...\fus tr_n\fus...$, is defined as
	$$tr_1\fus tr_2\fus...\fus tr_n\fus...\dddef \underbrace{s_{1,1}s_{1,2}...s_{1,k_1}}_{tr_1}\underbrace{s_{2,2}...s_{2,k_2}}_{tr_2}...\underbrace{s_{n,2}...s_{n,k_n}}_{tr_n}...,$$
	provided that $s_{1,k_1} = s_{2,1}$, $s_{2,k_2} = s_{3,1}$,...,$s_{n,k_n} = s_{n+1,1}$,....
	
	The concatenation of an infinite (but countable) number of traces $tr_1$, $tr_2$,..., $tr_n$,..., denoted by
	$tr_1\circ tr_2\circ ...\circ tr_n\circ ...$, is defined as:
	$$tr_1\circ tr_2\circ ...\circ tr_n\circ...\dddef \left\{
		\begin{array}{ll}
		tr_1\fus tr_2\fus...\fus tr_n \fus ..., &\mbox{if all traces $tr_i$ ($i\ge 1$) are finite}\\
		tr_1\fus tr_2\fus...\fus tr_k,  &\mbox{if $tr_k$ ($k\ge 1$) is infinite, and $tr_1, ..., tr_{k-1}$ are finite}
		\end{array}
	\right..$$
	
	The infinite concatenation between traces can be lifted to the concatenation between an infinite (but countable) number of sets of traces $T_1$, $T_2$, ..., $T_n$,..., denoted as
	$T_1\circ T_2\circ ...\circ T_n\circ ...$, which is defined as:
	$$T_1\circ T_2\circ ...\circ T_n\circ ...\dddef \{tr_1\circ tr_2\circ ...tr_n\circ...\ |\ tr_1\in T_1, tr_2\in T_2,..., tr_n\in T_n, ...\}.$$
\end{mydef}
\fi

The semantics of SDL formulas is defined as a Kripke frame $(\bff{S}, \val)$ where the interpretation $\val$ maps each SP to a set of traces on $\bff{S}$ and each SDL formula to a set of states in $2^\bff{S}$.
In the following subsections, we define the valuation $\val(\phi)$ of an SDL formula $\phi$ step by step, by simultaneous induction in Def.~\ref{def:Valuation of sDTL Formulas}, Def.~\ref{def:Valuation of Closed SPs} and Def.~\ref{def:Par}.

In Sect.~\ref{section:Valuations of Terms and Closed Programs}, we first define the semantics of closed SPs in Def.~\ref{section:Valuation of Parallel Programs}, in which
we leave the detailed definition of the semantics of parallel programs in Def.~\ref{def:Par} of Sect.~\ref{def:Valuation of sDTL Formulas}.
With the semantics of closed SPs, we define the semantics of SDL formulas in Def.~\ref{def:Valuation of sDTL Formulas} of Sect.~\ref{section:Valuation of Formulas}.


\subsubsection{Valuations of Terms and Closed Programs}
\label{section:Valuations of Terms and Closed Programs}
In this subsection, we define the valuations of terms  and closed SPs in SDL.

\begin{mydef}[Valuation of Terms]
	\label{def:Valuation}
	The valuation of terms of SDL under a state $s$, denoted by $\val_s$, is defined as follows:
	\begin{enumerate}
		\item $\val_s(x) \dddef s(x)$;
		\item $\val_s(c) \dddef c$;
		\item $\val_s(f(e_1, e_2))\dddef f(\val_s(e_1), \val_s(e_2))$, where $f\in \{+, -, \cdot, /\}$;
	\end{enumerate}
\end{mydef}
Note that we assume that $+, -, \cdot, /$ are interpreted as their normal meanings in arithmetic theory.

The semantics of closed SPs is defined as a set of traces in the following definition.

\begin{mydef}[Valuation of Closed SPs]
	\label{def:Valuation of Closed SPs}
	The valuation of closed SPs is defined inductively based on the syntactic structure of SPs as follows:
	\begin{enumerate}
		\item $\val(\noth)\dddef \bff{S}$;
		\item $\val(\halt)\dddef \emptyset$;
		\item $\val(\alpha)\dddef \{ss'\ |\ tr\in \mval(\alpha), tr^b = s, tr^e = s'\}$, where $\mval(\alpha)$ is defined as:
		\begin{enumerate}[(i)]
		    \item $\mval(\epsilon)\dddef \{ss\ |\ s\in \bff{S}\}$;
		    \item $\mval(\psi?\nex \alpha')\dddef \{ss\ |\ s\in \val(\psi)\}\circ \mval(\alpha')$, where $\val(\psi)$ is defined in Def.~\ref{def:Valuation of sDTL Formulas};
		    \item $\mval(x:=e\nex \alpha')\dddef \{ss'\ |\ s' = s[x\mapsto \val_s(e)]\}\circ \mval(\alpha')$, where $s[y\mapsto v]$ is defined as
            $$
            \begin{aligned}
            s[y\mapsto v](z)\dddef \left\{
            \begin{array}{ll}
            v & \mbox{if $z = y$}\\
            s(z) & \mbox{otherwise}
            \end{array}
            \right\};
            \end{aligned}
            $$
		\end{enumerate}
		\ifx
		\item $\val(\alpha)\dddef \left\{ss'\ \left|\ \begin{aligned}&s, s'\in \bff{S}; \\
					&\mbox{for each } (x:=e)\in \alpha, s'(x)=\val_s(e);\\
					&\mbox{for all }y\in \Var\mbox{ but }y\notin \dVar(\alpha),  s'(y)=s(y)\end{aligned}\right.\right\}$;
		\fi
		\item $\val(p\seq q)\dddef \val(p)\circ \val(q)$;
		\item $\val(p\cho q)\dddef \val(p)\cup \val(q)$;
		\item $\val(p^*) \dddef \bigcup^{\infty}_{n=0}\val^n(p)$, where $\val^n(q) \dddef \underbrace{\val(q)\circ ...\circ\val(q)}_{n}$, $\val^0(q)\dddef \bff{S}$
	for any $q\in \SP$;	
		
		\item $\val(\para(p_1,...,p_n))\dddef \Par(\para(p_1,...,p_n))$, where $\Par(\para(p_1,...,p_n))$ is defined in Def.~\ref{def:Par}.
	\end{enumerate}
	
\end{mydef}

The semantics of $\noth$ is defined as the set of all traces of length $1$.
For any set $A$, we have $\bff{S}\circ A = A\circ \bff{S} = A$, which meets the intuition that $\noth$ does nothing and consumes no time.
The definitions of the semantics of the programs $\halt$, $p;q$, $p\cup q$ and $p^*$ are easy to understand.

The key point of the definition lies in the definitions of the semantics of the macro event $\alpha$ and the parallel program $\para(p_1,...,p_n)$.
The semantics of $\para(p_1,...,p_n)$ is given later in Sect.~\ref{section:Valuation of Parallel Programs}.

The semantics of $\alpha$ reflects the time model of synchronous models.
It consists of a set of pairs of states, we call them \emph{macro steps}.
Each macro step starts at the beginning of the macro event $\alpha$, and ends after the execution of the last event of $\alpha$ --- the skip $\epsilon$.
In each macro step $ss'$ of a macro event, the sequential executions of all events in the macro event form a trace $tr$ with $tr^b = s$ and $tr^e = s'$.
We call the execution of a micro event a \emph{micro step}.
The set of the execution traces of all micro events of $\alpha$ is defined by $\val_m(\alpha)$.
The definition of the semantics of $\epsilon$ means that from any state $s$, $\epsilon$ just skips the current instant and jumps to the same state without doing anything more.
By simultaneous induction, we can legally put the definition of $\val(\psi)$ in Def.~\ref{def:Valuation of sDTL Formulas} since $\psi$ is a pure first-order logical formula which does not contain any SPs.
The semantics of an assignment $x:=e$ is defined just as in FODL.

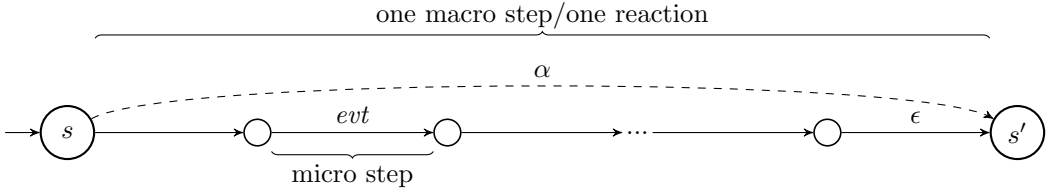
\begin{figure}[htpb]
	\centering
\scalebox{1}{
	\begin{tikzpicture}[->,>=stealth', node distance=3cm]
	\node[state4, initial, initial where = west, initial text= ] (s1) {$s$};
	\node[state3, xshift=2.5cm, yshift=0cm] at (s1) (t1) {};
    \node[state3, xshift=2.5cm] at (t1) (t2) {};
    \node[state3, xshift=2.5cm, draw=none] at (t2) (t3) {...};
	\node[state3, xshift=2.5cm] at (t3) (t4) {};
    \node[state4, xshift=2.5cm] at (t4) (s2) {$s'$};

	\path
    (s1) edge  (t1)
    (t1) edge node[above] {$\act$} (t2)
    (t2) edge  (t3)
    (t3) edge  (t4)
    (t4) edge node [above] {$\epsilon$} (s2)
    (t2) edge [snake=brace, raise snake=0.25cm, -] node [below=0.3cm] {micro step}  (t1)
    (s1) edge [snake=brace, raise snake=1.25cm, -] node[above=1.25cm] {one macro step/one reaction} (s2)
    (s1) edge [bend left, dashed, looseness=0.25] node [above] {$\alpha$} (s2)
	;
	\end{tikzpicture}
} 
\caption{Macro Steps vs. Micro Steps}
\label{figure:Macro Steps vs. Micro Steps}
\end{figure}

According to Def.~\ref{def:Valuation of Closed SPs}, it is easy to see that each trace represents one execution of an SP, and each transition between two states of a trace exactly captures the notion of one ``reaction'' in synchronous models.
During a reaction, all events of a macro event are executed in sequence in different micro steps.
Fig.~\ref{figure:Macro Steps vs. Micro Steps} gives an illustration.

We can call the valuation given in Def.~\ref{def:Valuation of Closed SPs} the \emph{macro-step valuation}.
On the contrary, we can define the \emph{micro-step valuation} for closed SPs, stated as the following definition.

\begin{mydef}[Micro-Step Valuation of Closed SPs]
    The micro-step valuation of a closed SP $p$, denoted by $\mval(p)$, is defined inductively based on the syntactic structure of $p$ as:
        $$\mbox{$\mval(p)\dddef \val(p)$ for all programs but a macro event}.$$
    For a macro event $\alpha$, $\mval(\alpha)$ is already defined in Def.~\ref{def:Valuation of Closed SPs}.
\end{mydef}

In the micro-step valuation of closed SPs, each transition between two states exactly captures a micro step.

\subsubsection{Valuation of Parallel Programs}
\label{section:Valuation of Parallel Programs}

\paragraph{Definition of the Function $\Par$}
In this subsection, we mainly define the semantics of parallel programs, which is $\Par(\para(p_1,...,p_n))$ in Def.~\ref{def:Valuation of Closed SPs}.
To this end, we need to define how the programs $p_1,...,p_n$ communicate with each other at one instant.
However, because of the choice program and the star program, an SP turns out to be non-deterministic, which makes the direct definition of the communication rather complicated.
To solve this problem, we adopt the idea proposed  in~\cite{Peleg87}.
We firstly split each program $p_i$ ($1\le i\le n$) into many deterministic programs $r_{i1}, r_{i2}, ...$, each of which captures one deterministic behaviour of $p_i$.
Then we define the communication of the parallel program $\para(r_{1k_1},...,r_{nk_n})$ (where $k_1,...,k_n\in \mbb{N}^+$) for each combination $r_{1k_1},...,r_{nk_n}$ of the deterministic programs of $p_1,..., p_n$.

We call a deterministic program a \emph{trec}, a name inherited from~\cite{Peleg87}\footnote{where trec means a ``tree computation''.}.

\begin{mydef}[Trecs]
	\label{def:Trecs}
	A trec is an SP whose syntax is given as follows:
    $$
    \begin{aligned}
    \trec\ddef&\ \halt\ |\ \noth\ |\ \strr\ |\ \parp, \\
    \parp\ddef&\ \para(\trec,...,\trec)\ |\ \parp\seq \trec\ |\ \trec\seq \parp,\\
    \strr\ddef&\ \alpha\ |\ \alpha\seq \strr.
    \end{aligned}
    $$
	
    We denote the set of all trecs as $\Trec$.
	
\end{mydef}

\ifx
Note that in Def.~\ref{def:Trecs}, trec $p^\irep$ is different from SP $p^\omega$ on the syntax level.
The former is not an SP, but a shorthand of an infinite string (, or we can call it an \emph{infinite SP}), while the latter is an SP.
\fi

A trec is an SP in which there is no choice and star programs. In a trec, a sequential program must be $\halt$, $\noth$ or in the form of a string $\strr$,
from which the macro event at the current instant is always visible; a parallel program can be combined with a trec by the sequence operator $\seq$.

Different from the trec defined in~\cite{Peleg87}, where when a parallel program $p\para q$ is combined with a program $r$ by the sequence operator $\seq$,
$r$ is always combined with $p$ and $q$ separately in their branches as: $(p\seq r) \para (q\seq r)$.
The trec defined in Def.~\ref{def:Trecs}, however, allows the form like $(p\para q)\seq r$.
Unlike the program model of concurrent propositional dynamic logic, in our proposed SP model, the behaviour of $(p\para q)\seq r$ is not equivalent to that of $(p\seq r)\para(q\seq r)$, because generally, a program $r$ is not equivalent to
a parallel program $r\para r$.
\ifx
Different from the trec defined in~\cite{Peleg87}, the behaviour of the trec defined in Def.~\ref{def:Trecs} is no longer a tree-like structure.
As an example, Fig.~\ref{???} shows the behaviour of the trec ???.

[???intuitive explanation of trecs]

A trec defined in Def.~\ref{def:Trecs} is still a tree-like structure, but is different from the structure defined in~\cite{???}, where a trec is constructed based on the equivalence of the behaviours of the programs $(p\para q)\seq r$ and $(p\seq r)\para(q\seq r)$. However, in SPs, the behaviour of $(p\para q)\seq r$ is not equivalent to that of $(p\seq r)\para(q\seq r)$, because generally, a program $r$ is not equivalent to
a parallel program $r\para r$.
\fi
For example, consider a simple SP $r = (x:=1 \seq x:= x+1)$, according to the syntax of SPs in Def.~\ref{definition: Synchronous Programs} and the restriction defined in Def.~\ref{def:Restriction on Parallel SPs}, we have $r\para r = (y:=1\seq y:=y+1)\para(x:=1\seq x:=x+1)$, whose behaviour is not equivalent to that of $r$.

Intuitively, an SP has a set of trecs, each of which exactly represents one deterministic behaviour of the SP.
The relation between SPs and their trecs is just like the relation between regular expressions and their languages.
For example, an SP $\sig_1\nex \epsilon\seq (\sig_2\nex \epsilon\cup \sig_3\nex \epsilon)$ has two trecs: $\sig_1\nex \epsilon\seq \sig_2\nex \epsilon$ and $\sig_1\nex \epsilon \seq \sig_3\nex \epsilon$.
Below, we introduce the notion of the trecs of an SP to capture its deterministic behaviours.

Before introducing the trecs of an SP, we need to introduce an operator which links two trecs with the sequence operator $\seq$ in a way so that the resulted program is still a trec.

\begin{mydef}[Operator $\link$]
    \label{def:Operator link}
    Given two trecs $p$ and $q$, the trec $p\link q$ is defined as:
    $$
    p\link q\dddef \left\{\begin{array}{ll}
    \halt & \mbox{if $p=\halt$ or $q=\halt$}\\
    q & \mbox{if $p = \noth$}\\
    p & \mbox{if $q=\noth$}\\
    \alpha\seq (p'\link q) & \mbox{if $p = \alpha\seq p'$}\\
    p\seq q & \mbox{otherwise}
    \end{array}
    \right.
    $$
\end{mydef}

Intuitively, $\link$ links $p$ and $q$ in a way that does not affect the meaning of $p\seq q$.
In the definition above, the first case assumes the equivalence of the behaviours between $\halt\seq q$ (or $p\seq \halt$) and $\halt$.
The second (resp. third) cases assumes the equivalence of the behaviours between $\noth\seq q$ and $q$ (resp. $p\seq \noth$ and $p$).
The fourth case assume the equivalence of the behaviours between $(\alpha \seq p')\seq q$ and $\alpha\seq (p'\seq q)$.
All these assumptions actually hold for closed SPs according to the definition of the operator $\circ$ given above.
Given $p, q$ are trecs, it is not hard to see that the resulted program $p\link q$ is a trec.

\begin{mydef}[Trecs of SPs]
	\label{def:Trecs of SPs}
	The set of trecs of an SP is defined inductively as follows:
	\begin{enumerate}
		\item $\tau(a) \dddef \{a\}$, where $a\in \ASP$ is an atomic program;
		\item $\tau(p\seq q)\dddef \{r_1\link r_2\ |\ r_1\in \tau(p), r_2\in \tau(q)\}$;
		\item $\tau(p\cho q)\dddef \tau(p)\cup \tau(q)$;
		\item $\tau(p^*)\dddef \bigcup^\infty_{n=0} \tau(p^n)$, where $p^0\dddef \noth$, $p^n\dddef \underbrace{p\seq p\seq ...\seq p}_{n}$ for $n\ge 1$;
		
		
		\item $\tau(\para(p_1,...,p_n))\dddef \{\para(r_1,...,r_n)\ |\ r_1\in \tau(p_1),..., r_n\in \tau(p_n)\}$.
	\end{enumerate}
\end{mydef}

\ifx 
\begin{mydef}[Current Programs of Trecs]
	\label{def:Current Events of a Trec}
	Given a trec $p\in \Trec$, function $\CEvt(p)$ is inductively defined as follows:
	
	\begin{enumerate}
		\item If $p=q\para r$, then
		$$\CEvt(p) \dddef (\CEvt(q)[1]\uplus \CEvt(r)[1], r'), $$
		where $r' = \left\{
		\begin{array}{ll}
		\CEvt(q)[2]\para \CEvt(r)[2], &\mbox{if $\CEvt(q)[2] \neq \none$ and $\CEvt(r)[2] \neq \none$}\\
		\CEvt(r)[2],  &\mbox{if $\CEvt(q)[2] = \none$}\\
		\CEvt(q)[2],  &\mbox{if $\CEvt(r)[2] = \none$}\\
		\none, &\mbox{otherwise}
		\end{array}
		\right\}$.
		
		\item If $p = a$, where $a\in \ASP$ is an atomic program, then
		$$\CEvt(p)\dddef (\{ p \}, \none), $$
		where $\none$ is a special symbol different from any trec.
	
		\item If $p=q\seq r$, then
		$$\CEvt(p)\dddef\left\{
		\begin{array}{ll}
		(\CEvt(q)[1];\CEvt(r)[1], \CEvt(r)[2]) &\mbox{if $\CEvt(q)[1]$}\\
		(\CEvt(q)[1], r'),  &\mbox{otherwise}
		\end{array}
		\right\},
		$$
		where $r' = \left\{
		\begin{array}{ll}
		\CEvt(q)[2]\seq r, &\mbox{if $\CEvt(q)[2] \neq \none$}\\
		r,  &\mbox{otherwise}
		\end{array}
		\right\}$.
		
		\item If $p = q^\omega$, then
		$$\CEvt(p)\dddef \left\{
		\begin{array}{ll}
		(\CEvt(q)[1], r), &\mbox{if $\OC(\CEvt(q)[1], \noth)$}\\
		(\{\noth\}, \none), &\mbox{otherwise}
		\end{array}
		\right\},
		$$
		where $r = \left\{
		\begin{array}{ll}
		q^\omega, &\mbox{if $\CEvt(q)[2] = \none$}\\
		\CEvt(q)[2]\seq q^\omega,  &\mbox{otherwise}
		\end{array}
		\right\}$.

	\end{enumerate}
	
\end{mydef}
\fi

With the notion of the trecs of an SP, as indicated at the beginning of Sect.~\ref{section:Valuation of Parallel Programs},
we give the definition of the function $\Par$ as follows.

\begin{mydef}[Function $\Par$]
	\label{def:Par}
	The function $\Par(\para(p_1,...,p_n))$ appeared in Def.~\ref{def:Valuation of Closed SPs} is defined as
	$$\Par(\para(p_1,...,p_n))\dddef  \bigcup_{r\in \tau(\para(p_1,...,p_n))} \valt(r), $$
where $\valt(r)$ is defined in Def.~\ref{def:Valuation of the Trecs}.
\end{mydef}

Def.~\ref{def:Par} says that the semantics of a parallel program $\para(p_1,...,p_n)$ is defined as the union of the semantics of all deterministic parallel programs of $\para(p_1,...,p_n)$.

Now it remains to give the detailed definition of $\valt(r)$ for a deterministic parallel program $r$ of the form $\para(p_1,...,p_n)$, where $p_1,...,p_n\in\Trec$ .
The central problem to define its semantics is to define how the programs $p_1,...,p_n$ communicate with each other at one instant.
Below we firstly consider the most interesting case when each program $p_i$ ($1\le i\le n$) is of the form $\alpha_i \seq q_i$, where the macro event $\alpha_i$ to be executed at the current instant is visible.
As will be shown later in Def.~\ref{def:Valuation of the Trecs}, other cases are easy to handle.

\paragraph{Definition of the Function $\Merge$}
In SPs, when $n$ programs $p_1,...,p_n$ are communicating at an instant, all events at the current instant are executed simultaneously. 
Signals emit their values by broadcasting. Once a signal is emitted, all programs observe its state immediately at the same instant.
In such a scenario, it is natural to think that during one reaction, all programs $p_1,...,p_n$ should hold a consistent view towards the state of a signal.
Such a consistency is stated as the following law, which is called \emph{logical coherence law} in Esterel~\cite{Berry99}.

\begin{enumerate}[]
\item Logical coherence law: At an instant, a signal has a unique state, i.e., it is either emitted or absent, and has a unique value if emitted.
\end{enumerate}

In Esterel, the programs satisfying this law are called \emph{logically correct}~\cite{Berry99}.
Despite the simultaneous execution of all events at an instant, in SPs, the events of each macro event are also considered to be executed in a logical order that preserves data dependencies.
In Esterel, the logically correct programs that follow the logical execution order in one reaction are called \emph{constructive}~\cite{Berry99}.
In synchronous models, it is important to ensure the constructiveness when defining the semantics of the communication behaviour.

It is possible for an SP to be non-constructive.
For example, let $p_1 = \bar{\sig_1}?\nex \sig_1!5\nex \epsilon$, $p_2 = \epsilon$, then the program $p_1\para p_2$ is logically incorrect because the signal $\sig_1$ can neither be emitted nor absent at the current instant.
If $\sig_1$ is emitted, $\bar{\sig_1}$ cannot be matched, so by the logical order $\sig_1!5$ is never executed.
If $\sig_1$ is absent, then $\bar{\sig_1}$ is matched but then $\sig_1!5$ is executed.

In this paper, we propose a similar approach as proposed in~\cite{Berry99} for defining a constructive semantics of the communication between SPs.
The communication process, defined as the function $\Merge$ in Def.~\ref{def:Function Merge} below, ensures the logical coherence law while preserving the logical execution order in each macro event at the current instant.
Our method is based on a recursive process of executing all events at the current instant step by step based on their micro steps.
At each recursion step, we analyze all events of different macro events at the current micro step and execute them according to different cases and based on the current information obtained about signals.
After the execution we update the information and use it for the next recursion step.
This process is continued until all events at the current instant are executed.
\ifx
To ensure the consistency law, we borrow the idea from the constructive semantics of Esterel~\cite{EsterelConstructiveBook}.
During recursive steps, we maintain some extra information about the signals (i.e., the parameters $\Must$, $\Cannot$ and $\SV$ in Def.~\ref{def:Function Merge} below)
so that the consistency law can be checked in the end. This will be fully introduced below.
\fi

\ifx
Before introducing the function $\Merge$, we need to define an operator $\append$ and an auxiliary functions $\Match$ in Def.~\ref{label:Operator append} and \ref{def:Function Match} respectively.

\begin{mydef}[Function $\Chk$]
	\label{def:Function Chk}
	Given a multi-set $X$ of programs of the form $X = (\varrho_1?\nex \beta_1\seq q_1\sep ...\sep \varrho_n?\nex \beta_n\seq q_n)$ and a multi-set $Y$ of signals,
    then the function $\Chk(X, Y)$ is defined as:
	$$\Chk(X, Y)\dddef (b, C, R, \SV), $$
	where
    $$
    \begin{aligned}
    b\dddef&\  \left\{
        \begin{array}{ll}
            \false & \mbox{if $\Match(\varrho_i, Y)=\false$ for all $1\le i\le n$}\\
            \true & \mbox{otherwise}
        \end{array}
    \right\},
    \\
	C\dddef&\ \{\sig\ |\ \varrho_i = \bar{\sig}, \Match(\varrho_i, Y) = \true, 1\le i\le n\},
	\\
    \SV \dddef&\ \{(\sig, A)\ |\ \Match(\varrho_i, Y)=\true, \varrho_i = \hat{\sig}(x), A = \{\sig!e\ |\ \sig!e\in Y\}, 1\le i\le n\},
    \end{aligned}
    $$
	$$
	R\dddef (r_1\sep...\sep r_n)\mbox{ with }r_i = \left\{\begin{array}{ll}
	    \varrho_i?\nex \beta_i\seq q_i &\mbox{if $\Match(\varrho_i, Y) = \false$}\\
	    (\beta_i\seq q_i)[\comb(\{e\}_{\sig!e\in Y})/x] &\mbox{if $\Match(\varrho_i, Y)=\true$ and $\varrho_i = \hat{\sig}(x)$}\\
	    \beta_i\seq q_i &\mbox{if $\Match(\varrho_i, Y)=\true$ and $\varrho_i = \bar{\sig}$}
	\end{array}\right\}
	$$
	for any $1\le i\le n$.
	\ifx
	\begin{enumerate}
		\item If there exists a guarded event $a\in X$ s.t. $\Match(\varrho_a, Y) = \none$, then
		$$\Chk(X, Y) \dddef \none;$$
		\item Otherwise 
		 $$\Chk(X, Y)\dddef (M, C, \bigcup_{a\in X}\Match(\varrho_a, Y)), $$
		 where
		 $$\begin{aligned}
		 M =&\ \Signal(X),\\
		 C =&\ \{\sig\ |\ \bar{\sig}\in \varrho_b\}.
		 \end{aligned}		
		 $$
		 \end{enumerate}
    \fi
		
\end{mydef}

The function $\Chk$ checks whether a multi-set of signals $Y$ matches the signal tests $\varrho_1,...,\varrho_n$ of the set of programs $X$,
and records the set of  resulted programs after the checking.
It returns a 4-tuple $(b, C, R, \SV)$, each of which is introduced as follows.

$b$ indicates whether $\Sigs$ can match one of the signal tests.
$C$ is the set of signals that cannot emit.
When $Y$ matches a signal test of the form $\bar{\sig}$, it means that during the current reaction $\sig$ cannot appear in the environment, or the consistency law will be violated.
So we add $\sig$ to the set $C$.
$\SV$ is a multi-set of pairs $(\sig, A)$ where $A$ is a multi-set of signals.
It records that when a signal test $\hat{\sig}(x)$ is checked, the set of the signals $\sig$ it can observe is $A$.

\ifx
$C$ is the set of signals that cannot emit obtained according to the signal tests $\varrho_1,...,\varrho_n$;
$R$ represents the resulted $n$ programs after the match, corresponding to the programs $\alpha_1\seq q_1,...,\alpha_n\seq q_n$;
$\SV'$ records the set of signals observed by each signal test of the form $\hat{\sig}(x)$.
\fi

$R$ is the multi-set of resulted programs corresponding to $X$ after the checking.
In $R$, each program $r_i$ depends on the signal test $\varrho_i$ and whether $Y$ can match $\varrho_i$.
If $Y$ does not match $\varrho_i$, $r_i$ keeps the same as in $X$.
If $Y$ matches $\varrho_i$ and $\varrho_i$ is of the form $\hat{\sig}(x)$, we substitute the variable $x$ in the program $\beta_i\seq q_i$ with a value computed by a combinational function $\comb$, which accepts the values of all signals with the same name that are emitted in $Y$ and outputs a value.
A similar approach was taken in~\cite{Berry92}.

\fi

In the definitions given below, we sometimes use a \emph{pattern} of the form $p_1\sep...\sep p_n$ to represent a finite multi-set $\{p_1,...,p_n\}$ of the programs $p_1,...,p_n$,
as it is easier to express certain changes made to the programs at certain positions separated by $\sep$.

\begin{mydef}[Function $\Merge$]
    \label{def:Function Merge}
            Given a pattern of the form $A = (p_1\sep...\sep p_n)$ ($n\ge 2$), where $p_i = \alpha_i\seq q_i$ for any $1\le i\le n$, the function $\Merge(A)$ is defined as
    $$\Merge(A)\dddef \rMerge(A, \rho_0, R_0, \Must_0, \SV_0), $$
    where $\rho_0 = \epsilon$, 
    $R_0 = \Must_0 = \SV_0 = \emptyset$.

    The recursive procedure $\rMerge(A, \rho, r, \Must, \SV)$, in which
    $\rho$ is an atomic program, $R$ is a pattern,
    $\Must$ is multi-set of signals, $\SV$ is a multi-set of signal-set pairs, is defined as follows:
    \begin{enumerate}
        \item If $\alpha_i = \epsilon$ for some $1\le i\le n$, then
        $$\rMerge\left(\begin{gathered}(...\sep p_{i-1}\sep \alpha_i\seq q_{i}\sep p_{i+1}\sep...),\\
        \rho, R, \Must, \SV
        \end{gathered}\right)
        \dddef
        \rMerge\left(\begin{gathered}(...\sep p_{i-1}\sep p_{i+1}\sep...),\\
        \rho, (R\sep q_i), \Must, \SV
        \end{gathered}\right)
        ;
        $$

        \item If $\alpha_i = a\nex \beta$ for some $1\le i\le n$, where $a\in \{\psi?, x:=e\}$, then
        $$\rMerge\left(\begin{gathered}(p_1\sep ...\sep \alpha_i\seq q_{i}\sep...\sep p_n),\\
        \rho, R,  \Must, \SV
        \end{gathered}\right)
        \dddef
        \rMerge\left(\begin{gathered}(p_1\sep ...\sep \beta\seq q_{i}\sep...\sep p_n),\\
        \rho\append a, R, \Must, \SV
        \end{gathered}\right)
        ;
        $$

        \item If $\alpha_i = \sig!e \nex \beta$ for some $1\le i\le n$, then
        $$\rMerge\left(\begin{gathered}(p_1\sep ...\sep \alpha_i\seq q_{i}\sep...\sep p_n),\\
        \rho, R, \Must, \SV
        \end{gathered}\right)
        \dddef
        \rMerge\left(\begin{gathered}(p_1\sep ...\sep \beta\seq q_{i}\sep...\sep p_n),\\
        \rho, R, \Must\uplus \{\sig!e\}, \SV
        \end{gathered}\right)
        ;
        $$
        \ifx
        where
        $
        \Sigs' = \left\{\begin{array}{ll}
            \Sigs\cup \{i\mapsto (\Sigs(i) - \{\sig!e'\})\cup \{\sig!e\}\} & \mbox{if $\sig!e'\in \Sigs(i)$}\\
            \Sigs\cup \{i\mapsto \Sigs(i) \cup \{\sig!e\}\} & \mbox{otherwise}
        \end{array}\right.
        $
        \fi
        \item If $\alpha_i = \varrho_i? \nex \beta_i$ for all $1\le i\le n$, let $\Can = \getCan(A, \Must)$, then
        \begin{enumerate}[(i)]
        \item if $\varrho_j = \hat{\sig}(x)$ and $\Match(\varrho_j, \Must)=\true$ for some $1\le j\le n$, then
        $$\rMerge\left(\begin{gathered}(p_1\sep ...\sep \varrho_j\nex \beta_j\seq q_{j}\sep...\sep p_n),\\
        \rho, R, \Must, \SV
        \end{gathered}\right)
        \dddef
        \rMerge\left(\begin{gathered}(p_1\sep ...\sep r\sep...\sep p_n),\\
        \rho, R, \Must, \SV\cup \{(\sig, \Must_\sig)\}
        \end{gathered}\right)
        ,
        $$
        where $r = (\beta_j\seq q_j)[\comb(\{e\}_{\sig!e\in \Must})/x]$, $\Must_{\sig} = \{\sig!e\ |\ \sig!e\in \Must\}$;

        \item
        if $\varrho_j = \bar{\sig}$ and $\Match(\varrho_j, \Can)=\true$ for some $1\le j\le n$,
        then
        $$\rMerge\left(\begin{gathered}(p_1\sep ...\sep \varrho_j\nex \beta_j\seq q_{j}\sep...\sep p_n),\\
        \rho, R, \Must,  \SV
        \end{gathered}\right)
        \dddef
        \rMerge\left(\begin{gathered}(p_1\sep ...\sep \beta_j\seq q_j\sep...\sep p_n),\\
        \rho, R, \Must,  \SV
        \end{gathered}\right)
        ,
        $$

        \item if $\Match(\varrho_j, \Must)=\false$ for all $1\le j\le n$, then
        $$\rMerge\left(\begin{gathered}(p_1\sep ...\sep \varrho_j\nex \beta_j\seq q_{j}\sep...\sep p_n),\\
        \rho, R, \Must,  \SV
        \end{gathered}\right)
        \dddef
        \rMerge\left(\begin{gathered}\emptyset,\\
        \halt, R, \Must,  \SV
        \end{gathered}\right)
        ,
        $$

        \item otherwise,
        $$\rMerge\left(\begin{gathered}(p_1\sep ...\sep p_n),\\
        \rho, R, \Must, \SV
        \end{gathered}\right)
        \dddef
        (\false, \rho, R)
        .
        $$

        \end{enumerate}

        \item If $A = \emptyset$, then
        $$\rMerge(
        A, \rho, R, \Must, \SV)
        \dddef
        (b, \rho, R),
        $$
        where
        $$
        b = \left\{\begin{array}{ll}
            \true , &\begin{aligned}\mbox{if for any $(\sig, A)\in \SV$, $A = \Must_\sig = \{\sig!e\ |\ \sig!e\in \Must\}$}\end{aligned}\\
            \false, &\mbox{otherwise}
        \end{array}
        \right\}.
        $$
    \end{enumerate}

\end{mydef}

The function $\Merge$ calls a recursive function $\rMerge$, which executes the events at each micro step of different macro events according to different cases (the cases 1 - 5).
$\rMerge$ returns a triple $(b, \rho, R)$, where $b$ indicates whether the communication behaviour between the programs $p_1,...,p_n$ is constructive or not.
The parameters $\rho, R, \Must, \SV$ keep updated during the recursion process of $\rMerge$.
$\rho$ is the behaviour after the communication between the macro events $\alpha_1,...,\alpha_n$.
It collects all closed events executed at the current instant in a logical order.
$R$ represents the resulted $n$ programs after the communication, corresponding to the original programs $p_1,...,p_n$.
$\Must$, $\SV$ (and $\Can$, computed from $\Must$ in the case 4,) maintain the information about signals that is necessary for analyzing the communication in order to obtain a constructive behaviour.
$\Must$ and $\Can$ record the sets of signals that \emph{must} emit and that \emph{can} emit at the current instant respectively.
$\SV$ records the set of observed signals for each signal test of the form $\hat{\sig}(x)$.
Their usefulness will be illustrated below.

\ifx
$\Sigs$ captures the set of all signals emitted at the current instant.
It helps check the signal tests at each micro step of macro events.
$\Must$, $\Cannot$ and $\SV$ are parameters for checking whether the communication behaviour at an instant follows the principle 1.
$\Must$ and $\Cannot$ record the set of signals that \emph{must} emit and the set of signals that \emph{cannot} emit at the current instant respectively.
$\SV$ records the set of observed signals for each signal tests of the form $\hat{\sig}(x)$.
\fi

The recursion process of $\rMerge$ is described as follows.

The case 1 is trivial. The skip $\epsilon$ has no effect on the communication between other programs so we simply remove the program and add the resulted part $q_i$ after the execution to the set $R$.

In the case 2, a closed event $a$ at the current micro step of a macro event $\alpha_i$ is executed.
The operator $\append$, defined later in Def.~\ref{label:Operator append}, appends $a$ to the tail of the sequence of events $\rho$.
Note that the order to put the closed events of different macro events into $\rho$ at each recursion step is irrelevant, because as indicated in Def.~\ref{def:Restriction on Parallel SPs}, all variables are local and
do not interfere with each other. For example, in a program $(x>1?\nex x := x+1\nex \epsilon)\para (y := 1\nex \epsilon)$,
the logical execution orders between $x>1?$ and $y:=1$, and between $x:=x+1$ and $y:=1$, are irrelevant.
What really matters is the order between $x>1?$ and $x:=x+1$, which is preserved in $\rho$.

In the case 3, we emit a signal at the current micro step of a macro event and put it in the set $\Must$.
Note that the order to put the signals of different macro events into $\Must$ is irrelevant.
This is because we do the signal-test matches (in the case 4) always after the emissions of all signals at the current micro step.
This stipulation makes sure that we collect \emph{as many as we can} the signals that \emph{must} emit at the current instant before checking the signal tests.
For example, let $p_1 = \sig_1!3\nex \epsilon$, $p_2 = \sig_2!5\nex \epsilon$, $p_3 = \bar{\sig_2}(x)?\nex \epsilon$,
then in the program $\para(p_1, p_2, p_3)$ the execution order of $\sig_1!3$ and $\sig_2!5$ is irrelevant, because
$\bar{\sig_2}(x)$ will be matched after the executions of $\sig_1!3$ and $\sig_2!5$.
If $\bar{\sig_2}(x)?$ does not wait till $\sig_2!5$ is executed, it would hold a wrong view towards the state of the signal $\sig_2$ because $\bar{\sig_2}(x)$ is matched.


The case 4 checks the signal tests $\varrho_1,...,\varrho_n$ after all signals at the current micro steps are emitted.
Here, based on the idea originally from~\cite{Berry99}, we propose an approach, called \emph{Must-Can approach} in this paper, to check the signal tests one by one according to two aspects of information currently obtained about signals:
the set of signals ($\Must$) that are certain to occur because they have already been executed in the case 3, and
the set of signals ($\Can$) that possibly occur because the current information about signals cannot decide their absences.
The basic idea of our approach is illustrated as the following steps, corresponding to the different cases \rmn{1} - \rmn{4} of the case 4.
\begin{enumerate}[(i)]
\item We firstly try to match all signal tests of the form $\hat{\sig}(x)$ with the current set $\Must$.
The result of a match is returned by the function $\Match$ (defined later in Def.~\ref{def:Function Match}), which checks whether a multi-set of signals satisfy the condition of a signal test.
If a signal test $\hat{\sig}(x)$ is successfully matched by $\Must$, it means that this test must be matched at the current instant since the signal $\sig$ must be emitted at the current instant.

We substitute the variable $x$ in the program $\beta_j\seq q_j$ with a value computed by a combinational function $\comb$.
When several signals with the same name are emitted at the same instant, their values are combined into a value by a function $\comb$.
For example, we can define a combinational function as $\mathit{add}(V)\dddef \Sigma_{v\in V} v$.
A similar approach was taken in~\cite{???}.

\item If all signal tests of the form $\hat{\sig}(x)$ have been tried and their matches fail, we try to match the signal tests of the form $\bar{\sig}$ with the set $\Can$, which is computed by the current set $\Must$ in the function $\getCan$ (defined later in Def.~\ref{def:Function getCan}).
The successful match of a signal test $\bar{\sig}$ means that this test must be matched at the current instant since the signal $\sig$ has no possibilities to be emitted at the current instant.

\item If all signal tests in the form of either $\hat{\sig}(x)$ or $\bar{\sig}$ fail in matching with the set $\Must$,
the communication is blocked because it means that the current emissions of signals have decided the mismatches of all signal tests at the current micro steps.

\item If none of the match conditions in the above 3 steps holds, for example, $\Match(\bar{\sig}, \Can)$ always holds for a signal test $\bar{\sig}$,
then we conclude that the communication behaviour between the programs $p_1,...,p_n$ is not constructive and end the procedure $\rMerge$ by returning $b$ as false.

\end{enumerate}

\ifx
Here we adopt the idea from~\cite{???} to consecutively check the signal tests one by one according to two sets of signals $\Must$ and $\Can$.
$\Must$ is used for checking signal tests of the form $\hat{\sig}(x)$, while $\Can$ is used for checking signal tests of the form $\bar{\sig}$.
By this approach, we can guarantee that $\rMerge$ produces a constructive result if $b=\true$,
The basic idea behind this approach is stated as following steps, where each step corresponds to the different cases \rmn{1} - \rmn{4} of the case 4.
\fi

For example, let $p_1 = \bar{\sig_1}?\nex \sig_2!5\nex \epsilon$, $p_2 = \bar{\sig_2}?\nex \sig_1!3\nex \epsilon$, consider the communication of the program $p_1\para p_2$,
in the case 4, we have $\Must = \emptyset$ and $\Can = \{\sig_2!5, \sig_1!3\}$.
We cannot proceed because $\Match(\bar{\sig_1}, \Can) = \Match(\bar{\sig_2}, \Can) = \false$.
In fact, this program is logical incorrect because it allows two possible states of signals: 1) $\sig_1$ is emitted and $\sig_2$ is absent, and 2) $\sig_1$ is absent and $\sig_2$ is emitted.
Let $q_1 = \bar{\sig_2}?\nex \sig_4\nex \epsilon$, $q_2 = \hat{\sig_1}?\nex \sig_3\nex \sig_2\nex \epsilon$ and $q_3=\hat{\sig_3}?\nex \sig_1$,
in the communication of the program $\para(q_1,q_2,q_3)$, $\Must = \emptyset$.
It is easy to see that though $\bar{\sig_2}$ is possible to be matched and so $\sig_4$ can be emitted, the tests $\hat{\sig_1}$ and $\hat{\sig_3}$ will never be matched.
So $\Can = \{\sig_4\}$ and $\Match(\bar{\sig_2},\Can)=\true$. Hence this program is constructive.

\ifx
The case 4 checks whether the current environment $\Sigs$ matches the signal tests $\varrho_1,...,\varrho_n$ at the current micro steps.
The procedure of the matching is described in the function $\Chk$ above.
If $\Sigs$ cannot match any signal test, the communication is blocked so we stop the recursion process by setting $A$ to $\emptyset$ and return $\rho$ as $\halt$.
Note that in this case, all signal tests must be matched immediately if they can be matched.
Otherwise, it violates the intuition that there is no reason to delay any matches that can be matched at the current micro step, and
it may cause non-deterministic behaviours of a communication.
For example, let $p_1 = \bar{\sig_1}?\nex \sig_2!5\nex \epsilon\seq q_1$, $p_2 = \bar{\sig_2}?\nex \sig_1!3\nex \epsilon\seq q_2$,
then in $p_1\para p_2$,
if we check $\bar{\sig_1}$ and execute $\sig_2!5$ before checking $\bar{\sig_2}$, $\sig_2$ is emitted at the current instant so $\bar{\sig_2}$ is not matched.
But if we check $\bar{\sig_2}$ and execute $\sig_1!3$ before checking $\bar{\sig_1}$, then $\sig_1$ is emitted at the current instant so $\bar{\sig_1}$ is not matched.
So we actually obtain two different semantical explanations for a single SP.
\fi

The case 5 ends the recursion procedure when all macro events $\alpha_1,...,\alpha_n$ at the current instant are executed. 
 At this end, we need to check whether all programs $p_1,...,p_n$ hold a consistent view towards the value of each signal.
Only when a signal test $\hat{\sig}(x)?$ observes the set $\Must_\sig$ of all emissions of $\sig$, we can guarantee that
  all programs agree with the value of $\sig$, i.e., $\comb(\{e\}_{\sig'!e\in \Must_\sig})$.
For example, let $p_1 = \sig_1!3\nex \hat{\sig_2}?\nex \sig_1!5\nex \epsilon$, $p_2 = \hat{\sig_1}(x)?\nex \sig_2\nex \epsilon$,
then the program $p_1\para p_2$ does not follow the logical coherence law if we choose $\comb$ as the function $\mathit{add}$ (defined above in the case 4\rmn{1}),
because the value received by the signal test $\hat{\sig_1}(x)?$ is $3$, while the value of the signal $\sig_1$ at the current instant is $\mathit{add}(3,5)$, which is $8$.

\ifx
\paragraph{Definitions of the Operator $\append$ and the Function $\Chk$}
The operator $\append$ and the function $\Chk$ appeared in the function $\Merge$ are defined as follows, where $\Chk$ relies on another auxiliary function $\Match$.

\begin{mydef}[Operator $\append$]
\label{label:Operator append}
    Given an event $\alpha$ and an event $a \in \{\psi?, x:=e\}$,
    $\alpha\append a$ is defined as:
    $$\alpha\append a\dddef \left\{
    \begin{array}{ll}
    a\nex \epsilon & \mbox{if $\alpha = \epsilon$}\\
    b\nex (\alpha'\append a) &\mbox{if $\alpha = b\nex \alpha'$}
    \end{array}
    \right\}.$$
\end{mydef}

\begin{mydef}[Function $\Match$]
	Given a signal test $\varrho$ and a set of signals $Y$, the function $\Match(\varrho, Y)$ is defined as
	$$
	\Match(\varrho, Y)\dddef \left\{
	\begin{array}{ll}
	\true & \mbox{if $\varrho=\hat{\sig}(x)$ and $\sig\in Y$}\\
	\true & \mbox{if $\varrho=\bar{\sig}$ and $\sig\notin Y$}\\
	\false & \mbox{otherwise}
	\end{array}
	\right\}.
	$$
\end{mydef}

The function $\Match$ checks whether a set of signals $Y$ matches a signal test $\varrho$.
It returns true if $Y$ matches $\varrho$, and returns false otherwise.

\begin{mydef}[Function $\Chk$]
	\label{def:Function Chk}
	Given a multi-set $X$ of programs of the form $X = (\varrho_1?\nex \beta_1\seq q_1\sep ...\sep \varrho_n?\nex \beta_n\seq q_n)$ and a multi-set $Y$ of signals,
    then the function $\Chk(X, Y)$ is defined as:
	$$\Chk(X, Y)\dddef (b, C, R, \SV), $$
	where
    $$
    \begin{aligned}
    b\dddef&\  \left\{
        \begin{array}{ll}
            \false & \mbox{if $\Match(\varrho_i, Y)=\false$ for all $1\le i\le n$}\\
            \true & \mbox{otherwise}
        \end{array}
    \right\},
    \\
	C\dddef&\ \{\sig\ |\ \varrho_i = \bar{\sig}, \Match(\varrho_i, Y) = \true, 1\le i\le n\},
	\\
    \SV \dddef&\ \{(\sig, A)\ |\ \Match(\varrho_i, Y)=\true, \varrho_i = \hat{\sig}(x), A = \{\sig!e\ |\ \sig!e\in Y\}, 1\le i\le n\},
    \end{aligned}
    $$
	$$
	R\dddef (r_1\sep...\sep r_n)\mbox{ with }r_i = \left\{\begin{array}{ll}
	    \varrho_i?\nex \beta_i\seq q_i &\mbox{if $\Match(\varrho_i, Y) = \false$}\\
	    (\beta_i\seq q_i)[\comb(\{e\}_{\sig!e\in Y})/x] &\mbox{if $\Match(\varrho_i, Y)=\true$ and $\varrho_i = \hat{\sig}(x)$}\\
	    \beta_i\seq q_i &\mbox{if $\Match(\varrho_i, Y)=\true$ and $\varrho_i = \bar{\sig}$}
	\end{array}\right\}
	$$
	for any $1\le i\le n$.
	\ifx
	\begin{enumerate}
		\item If there exists a guarded event $a\in X$ s.t. $\Match(\varrho_a, Y) = \none$, then
		$$\Chk(X, Y) \dddef \none;$$
		\item Otherwise 
		 $$\Chk(X, Y)\dddef (M, C, \bigcup_{a\in X}\Match(\varrho_a, Y)), $$
		 where
		 $$\begin{aligned}
		 M =&\ \Signal(X),\\
		 C =&\ \{\sig\ |\ \bar{\sig}\in \varrho_b\}.
		 \end{aligned}		
		 $$
		 \end{enumerate}
    \fi
		
\end{mydef}

The function $\Chk$ checks whether a multi-set of signals $Y$ satisfies the signal tests $\varrho_1,...,\varrho_n$ of the set of programs $X$,
and records the set of  resulted programs after the checking.
It returns a 4-tuple $(b, C, R, \SV)$, where $b$ has been introduced above.

$C$ is the set of signals that cannot emit.
When $Y$ matches a signal test of the form $\bar{\sig}$, it means that during the current reaction $\sig$ cannot appear in the environment, or the principle 1 will be violated.
So we add $\sig$ to the set $C$.

$\SV$ is a multi-set of pairs $(\sig, A)$ where $A$ is a multi-set of signals.
It records that when a signal test $\hat{\sig}(x)$ is checked, the set of the signals $\sig$ it can observe is $A$.

\ifx
$C$ is the set of signals that cannot emit obtained according to the signal tests $\varrho_1,...,\varrho_n$;
$R$ represents the resulted $n$ programs after the match, corresponding to the programs $\alpha_1\seq q_1,...,\alpha_n\seq q_n$;
$\SV'$ records the set of signals observed by each signal test of the form $\hat{\sig}(x)$.
\fi

$R$ is the multi-set of resulted programs corresponding to $X$ after the checking.
In $R$, each program $r_i$ depends on the signal test $\varrho_i$ and whether $Y$ can match $\varrho_i$.
If $Y$ does not match $\varrho_i$, $r_i$ keeps the same as in $X$.
If $Y$ matches $\varrho_i$ and $\varrho_i$ is of the form $\hat{\sig}(x)$, we substitute the variable $x$ in the program $\beta_i\seq q_i$ with a value computed by a combinational function $\comb$, which accepts the values of all signals with the same name that are emitted in $Y$ and outputs a value.
A similar approach was taken in~\cite{???}.
\fi

\paragraph{Definitions of $\append$, $\Match$ and $\getCan$}
The operator $\append$ and the functions $\Match$ and $\getCan$ used in the function $\Merge$ above are defined as follows.

\begin{mydef}[Operator $\append$]
\label{label:Operator append}
    Given an event $\alpha$ and an event $a \in \{\psi?, x:=e\}$,
    $\alpha\append a$ is defined as:
    $$\alpha\append a\dddef \left\{
    \begin{array}{ll}
    a\nex \epsilon & \mbox{if $\alpha = \epsilon$}\\
    b\nex (\alpha'\append a) &\mbox{if $\alpha = b\nex \alpha'$}
    \end{array}
    \right\}.$$
\end{mydef}

The operator $\append$ appends an event $a$ to the tail of a macro event $\alpha$ before $\epsilon$.

\begin{mydef}[Function $\Match$]
    \label{def:Function Match}
	Given a signal test $\varrho$ and a multi-set of signals $Y$, the function $\Match(\varrho, Y)$ is defined as
	$$
	\Match(\varrho, Y)\dddef \left\{
	\begin{array}{ll}
	\true & \mbox{if $\varrho=\hat{\sig}(x)$ and $\sig = \sig'$ for some $\sig'!e\in Y$}\\
	\true & \mbox{if $\varrho=\bar{\sig}$ and $\sig\neq \sig'$ for all $\sig'!e\in Y$}\\
	\false & \mbox{otherwise}
	\end{array}
	\right\}.
	$$
\end{mydef}

The function $\Match$ checks whether a multi-set of signals $Y$ matches a signal test $\varrho$.
It returns true if $Y$ matches $\varrho$, and returns false otherwise.

\begin{mydef}[Function $\getCan$]
\label{def:Function getCan}

    Given a pattern of the form $A = \alpha_1 \sep...\sep \alpha_n$, where $\alpha_i = \varrho_i?\nex \beta_i$ ($1\le \i\le n$), and a multi-set of signals $M$, then
    $$\getCan(A, M)\dddef \rgetCan(A, M, \Can_0),$$
    where $\Can_0=\emptyset$.
    The function $\rgetCan(A, M, \Can)$ is recursively defined as follows:
    \begin{enumerate}
    \item if $\alpha_i=\epsilon$ for some $1\le i\le n$, then
    $$
    \rgetCan((...\sep \alpha_{i-1}\sep \alpha_i\sep \alpha_{i+1}\sep...), M, \Can)\dddef \rgetCan((...\sep \alpha_{i-1}\sep \alpha_{i+1}\sep...), M, \Can);
    $$

    \item if $\alpha_i = a\nex \beta$ for some $1\le i\le n$, where $a\in \{\psi?, x:=e\}$, then
    $$
    \rgetCan((\alpha_1\sep...\sep \alpha_{i}\sep...\sep\alpha_n), M, \Can)\dddef \rgetCan((\alpha_1\sep...\sep \beta\sep...\sep\alpha_n), M, \Can);
    $$

    \item if $\alpha_i = \sig!e\nex \beta$ for some $1\le i\le n$, then
    $$
    \rgetCan((\alpha_1\sep...\sep \alpha_{i}\sep...\sep\alpha_n), M, \Can)\dddef \rgetCan((\alpha_1\sep...\sep \beta\sep...\sep\alpha_n), M, \Can\uplus\{\sig!e\});
    $$

    \item if $\alpha_i = \hat{\sig}(x)?\nex \beta$ and $\Match(\hat{\sig}(x),\Can\cup M)=\true$ for some $1\le i\le n$, then
    $$
    \rgetCan((\alpha_1\sep...\sep \alpha_{i}\sep...\sep\alpha_n), M, \Can)\dddef \rgetCan((\alpha_1\sep...\sep \beta\sep...\sep\alpha_n), M, \Can);
    $$

    \item if $\alpha_i = \bar{\sig}?\nex \beta$ and $\Match(\bar{\sig},M)=\true$ for some $1\le i\le n$, then
    $$
    \rgetCan((\alpha_1\sep...\sep \alpha_{i}\sep...\sep\alpha_n), M, \Can)\dddef \rgetCan((\alpha_1\sep...\sep \beta\sep...\sep\alpha_n), M, \Can);
    $$

    \item otherwise
    $$
    \rgetCan((\alpha_1\sep...\sep \alpha_{i}\sep...\sep\alpha_n), M, \Can)\dddef \Can;
    $$

    \end{enumerate}

\end{mydef}

The function $\getCan$ returns a set $\Can$ of signals that are possibly emitted in the communication of $n$ macro events $\alpha_1,...,\alpha_n$ at the current instant.
In $\getCan$, whether a signal test is possible to be matched is judged according to the set of signals $M$ that must emit and the current set $\Can$.
In the case 4,
if a signal $\sig$ must or can be emitted in the current environment, then we can conclude that $\hat{\sig}(x)$ can be matched.
In the case 5, a signal test $\bar{\sig}$ can possibly be matched only when we cannot decide whether the signal $\sig$ is emitted in the current environment.
Other cases are easy to understand and we omit their explanations.

\paragraph{Definition of the Function $\valt(r)$}

With the function $\Merge$, we define the valuation of the trecs $\para(p_1,...,p_n)$ as follows.

\begin{mydef}[Valuation of the Trecs $\para(p_1,...,p_n)$]
	\label{def:Valuation of the Trecs}
	The valuation of a trec $\para(p_1,...,p_n)\in \Trec$, denoted as $\valt(\para(p_1,...,p_n))$, is recursively defined as follows:

	\begin{enumerate}
	    \item If $p_i = \halt$ for some $1\le i\le n$, then
		$$\valt(\para(p_1,...,p_n))\dddef \val(\halt);$$
		
		\item If $p_i = \noth$ for some $1\le i\le n$, then
		$$\valt(\para(p_1,...p_{i-1}, p_i, p_{i+1},..., p_n))\dddef
		\valt(\para(p_1,...,p_{i-1}, p_{i+1},..., p_n));$$
		
        \item If $p_i = \para(r_{1},...,r_n)\seq q$ for some $1\le i\le n$ and $\valt(\para(r_1,...,r_n)) = \val(r)$, where
        $r$ is the program computed in the procedure $\valt(\para(r_1,...,r_n))$, which is either $\halt$ or in the form of $\alpha\seq \para(r'_1,...,r'_m)$,
        then
        $$\valt(\para(p_1,...,p_i,...,p_n))\dddef \valt(\para(p_1,...,r\link q,...,p_n));$$

		\ifx
		\item If $p_i = q\cap r$ for some $1\le i\le n$, then
		$$\rfval\left(\begin{gathered}p_1\sep...\sep p_{i-1}\sep p_i\sep p_{i+1}\sep...\sep p_n,\\
		\Must,\Cannot, \rho, r\end{gathered}\right)\dddef
		\rfval\left(\begin{gathered}p_1\sep...\sep p_{i-1}\sep q\sep r\sep p_{i+1}\sep...\sep p_n, \\
		\Must, \Cannot, \rho, r\end{gathered}\right);$$
		\fi
		
		\ifx
		\item If $p_i = \psi_i?;q$ for some $1\le i\le n$, then
		$$\rfval\left(\begin{gathered}p_1\sep...\sep p_{i-1}\sep p_i\sep p_{i+1}\sep...\sep p_n,\\
		\Must,\Cannot, \rho, r\end{gathered}\right)
		\dddef
		\val(\psi_i)\circ \rfval\left(\begin{gathered}p_1\sep...\sep p_{i-1}\sep q\sep p_{i+1}\sep...\sep p_n, \\
		\Must, \Cannot, \rho, r\end{gathered}\right);$$
		\fi
		
		\item
		If $p_i = \alpha_i;q_i$ for all $1\le i\le n$, then
		$$
        \begin{array}{ll}
        \valt(\para(p_1,...,p_n))
		\dddef
		\val(\alpha\seq \para(q'_1,...,q'_n)) &\mbox{if $b=\true$},
        \end{array}
		$$
		where $(b, \alpha, (q'_1\sep...\sep q'_n)) = \Merge(\alpha_1;q_1\sep...\sep \alpha_n;q_n)$.

		\ifx
		\item If $A = G \cup U$ and $U\neq \emptyset$, where $G$ is the multiset of all guarded events of the form $\varrho?\alpha$ or $\varrho\& \psi?\alpha$ in $A$, and $U$ is the multiset of all unguarded events of the form $\alpha$ or $\psi?\alpha$ in $A$, then
		$$\rfval(A, \Must, \Cannot, \rho, r)\dddef \rfval(A - U, \Must\uplus \Signal(U), \Cannot, \Closed(\rho\join (\underset{^{a\in U}}{\join}a)), r);$$
		\fi
		
		\ifx
		\item If $(p_1\sep...\sep p_n) = \varrho_1?;q_1\sep ...\sep \varrho_n?;q_n$, let $A = \varrho_1?;q_1\sep ...\sep \varrho_n?;q_n$, then
		$$\rfval\left(\begin{gathered}p_1\sep...\sep p_n, \\
		\Must, \Cannot, \rho, r\end{gathered}\right)
		\dddef \left\{\begin{array}{ll}
		\val(\halt), &\mbox{if $R = A$}\\
		\rfval	\left(\begin{gathered}R, \\
		\Must, \Cannot\cup C,\rho, r\end{gathered}\right),  &\mbox{otherwise}
		\end{array}
		\right.,$$
		where $(C, R) = \Chk(A, \Must)$.
		\fi
		
		\ifx
		\item If $A = G$ and $G\neq \emptyset$, where $G$ is the multiset of all guarded events of the form $\varrho?\alpha$ or $\varrho\& \psi?\alpha$ in $A$,
		then
		$$\rfval(A, \Must, \Cannot, \rho, r)\dddef \left\{\begin{array}{ll}
		\rfval	\left(\begin{gathered}A - B, \Must\uplus M, \Cannot\cup C,\\
		\rho\times \Closed(\underset{^{a\in B'}}{\times}a), r[\mathit{Sub}]\end{gathered}\right), &\mbox{if $B\neq \emptyset$}\\
		\val(\halt),  &\mbox{otherwise}
		\end{array}
		\right.,$$
		where $B\subseteq G$ is the maximum set s.t. $\Chk(B, \Must) = (M, C, \mathit{Sub})$;
		$$B' = \{\varrho?\beta\ |\ \varrho?\alpha\in B, \beta = \alpha[\mathit{Sub}]\}\cup \{\varrho\& \psi?\beta\ |\ \varrho\& \psi?\alpha\in B, \beta = \alpha[\mathit{Sub}]\};$$
		\fi
		
		\ifx
		\item If $(p_1\sep...\sep p_n) = \emptyset$, then
		$$\rfval\left(\begin{gathered}(p_1\sep...\sep p_n), \\
		\Must, \Cannot, \rho, r\end{gathered}\right)
		\dddef \left\{
		\begin{array}{ll}
		\val(\rho)\circ \rfval(r, \emptyset, \emptyset, \none, \none), &\mbox{if $\Must\cap \Cannot = \emptyset$ and $|r|>1$}\\
		\val(\rho)\circ \val(r), &\mbox{if $\Must\cap \Cannot = \emptyset$ and $|r|=1$}\\
		\val(\rho), &\mbox{if $\Must\cap \Cannot = \emptyset$ and $r = \none$}\\
		\val(\halt),  &\mbox{if $\Must\cap \Cannot \neq \emptyset$}
		\end{array}
		\right..$$
		\fi
	\end{enumerate}
\end{mydef}

The cases 1 and 2 are easy to understand.
In the case 3, when one of the trec $p_i$ contains a parallel subprogram $\para(r_1,...,r_n)$, we first compute the valuation of this subprogram, which returns the valuation of a program $r$ of the form $\halt$ or $\alpha\seq \para(r'_1,...,r'_m)$.
Then we replace $p_i$ with the trec $r\link q$, the latter has a macro event $\alpha$ visible at the current instant.

In the case 4, we merge $n$ programs $\alpha_1\seq q_1,...,\alpha_n\seq q_n$ in the function $\Merge$ as described above,
which returns the merged closed macro event $\alpha$ and the resulted $n$ programs $q'_1,...,q'_n$ after the communication at the current instant.
Note that $\valt$ is a partial function and it rejects the parallel programs that do not follow the consistency law 1 at the current instant.
Our semantics makes sure that SPs must follow the consistency law 1, while preserving the data dependencies induced by the logical order between micro events.
Our semantics is actually a \emph{constructive semantics} following~\cite{Berry99}.

We call an SP $p$ a \emph{well-defined} SP if it has a semantics $\val(p)$.
Unless specially pointing out, all SPs discussed in the rest of the paper are well defined.

Note that the function $\valt$ is well defined since a trec is always finite without the star program.

\ifx
From the case 4, it is easy to see that $\valt$ is a partial function because it only returns the parallel programs that follow the principle 1.
Hence, our semantics $\Par$ for parallel programs is a \emph{constructive semantics}~\cite{???}.
\fi

\ifx
\begin{mydef}[Synchronous Execution of Current Events]
	The synchronous execution of a multiset of current events $E$ ($E\neq \emptyset$) is described as the function $\SE(E)$ defined as:
	$$\SE(E)\dddef \Com(E, \emptyset, \emptyset, \none).$$
	The function $\Com(A, \Must, \Cannot, \rho)$, where $A$ is a multiset of events, $\Must$ and $\Cannot$ are sets of signals, $\rho\in \Cl(\ASP)\cup \{\none\}$, is inductively defined as follows:
	
	\begin{enumerate}
		\item If $I\neq \emptyset$, where $I$ is the set of all empty programs $\noth$ in $A$, then
		$$\Com(A,\Must,\Cannot, \rho)\dddef \Com(A - I, \Must, \Cannot, \rho);$$
		
		\item If $\halt\in A$, then
			$$\Com(A,\Must,\Cannot, \rho)\dddef \Com(\emptyset, \Must, \Cannot, \halt);$$
		\item If $A = G \cup U$ and $U\neq \emptyset$, where $G$ is the multiset of all guarded events of the form $\varrho?\alpha$ or $\varrho\& \psi?\alpha$ in $A$, and $U$ is the multiset of all unguarded events of the form $\alpha$ or $\psi?\alpha$ in $A$, then
			$$\Com(A, \Must, \Cannot, \rho)\dddef \Com(A - U, \Must\cup \Signal(U), \Cannot, \rho\times \Closed(\underset{^{a\in U}}{\times}a));$$
			
		\item If $A = G$ and $G\neq \emptyset$, where $G$ is the multiset of all guarded events of the form $\varrho?\alpha$ or $\varrho\& \psi?\alpha$ in $A$,
			then
			$$\Com(A, \Must, \Cannot, \rho)\dddef \left\{\begin{array}{ll}
								\Com\left(\begin{aligned}&A - B, \Must\cup Y, \\
								&\Cannot\cup C,
								\rho\times \Closed(\underset{^{a\in B}}{\times}a)\end{aligned}\right), &\mbox{if $B\neq \emptyset$}\\
								\val(\halt),  &\mbox{otherwise}
								\end{array}
								\right.,$$
			where $B\subseteq G$ is the maximum set s.t. $(b, Y, C) = \Chk(B, \Must)$ and $b = \true$;
		\item If $A = \emptyset$, then
			$$\Com(A, \Must, \Cannot, \rho)\dddef \left\{
			\begin{array}{ll}
			\val(\rho), &\mbox{if $\Must\cap \Cannot = \emptyset$}\\
			\val(\halt),  &\mbox{otherwise}
			\end{array}
			\right..$$
	\end{enumerate}
\end{mydef}

\begin{mydef}[Valuation of a Finite Trec]
	The valuation of a finite trec $p$, denoted by $\val_{\mathit{fin}}(p)$, is defined as
	$$\val_{\mathit{fin}}(p)\dddef \left\{\begin{array}{ll}
	\SE(\CEvt(p)[1]), &\mbox{if $\CEvt(p)[2] = \none$}\\
	\SE(\CEvt(p)[1])\circ \val_{\mathit{fin}}(\CEvt(p)[2]),  &\mbox{otherwise}
	\end{array}
	\right.. $$

\end{mydef}
\fi

To see the rationality of our definition of the semantics of SPs in this section, we state the next proposition, which shows that the trecs of an SP exactly capture
all behaviours of the SP.

\begin{prop}
    \label{prop:trecs property}
    For any SP $p$, $$\val(p) = \bigcup_{r\in \tau(p)}\val(r).$$
\end{prop}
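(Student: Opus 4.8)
The plan is to prove the identity by structural induction on the syntax of $p$ (Def.~\ref{definition: Synchronous Programs}). The base case is immediate: for an atomic program $a\in\ASP$ we have $\tau(a)=\{a\}$ by Def.~\ref{def:Trecs of SPs}, so both sides equal $\val(a)$. For the composite cases one rewrites each clause of Def.~\ref{def:Valuation of Closed SPs} into the matching clause of Def.~\ref{def:Trecs of SPs}, invoking the induction hypothesis $\val(q)=\bigcup_{r\in\tau(q)}\val(r)$ on the proper subprograms.

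Two preparatory facts are needed. First, the algebraic properties of concatenation that follow directly from Def.~\ref{def:Concatenation between Traces}: $\bff{S}\circ A=A\circ\bff{S}=A$, $\emptyset\circ A=A\circ\emptyset=\emptyset$, associativity of $\circ$, and distributivity of $\circ$ over arbitrary unions, $(\bigcup_i A_i)\circ(\bigcup_j B_j)=\bigcup_{i,j}(A_i\circ B_j)$. Second, a lemma relating $\link$ to $\circ$: for all trecs $r_1,r_2$,
$$\val(r_1\link r_2)=\val(r_1)\circ\val(r_2).$$
I would prove this by induction on the structure of $r_1$ following the case split of Def.~\ref{def:Operator link}; the $\halt$ and $\noth$ cases use the absorption/unit laws, the recursive case $r_1=\alpha\seq r_1'$ uses associativity and the induction hypothesis, and the ``otherwise'' case is immediate from $\val(p\seq q)=\val(p)\circ\val(q)$ (which holds for arbitrary $p,q$, in particular when a parallel operator is present). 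Together with the already-observed fact that $p\link q$ is again a trec, a routine induction on the grammar of Def.~\ref{def:Trecs} then gives $\tau(r)=\{r\}$ for every trec $r$.

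The inductive cases now unfold as follows. For $p_1\cho p_2$, both $\val(p_1\cho p_2)=\val(p_1)\cup\val(p_2)$ and $\tau(p_1\cho p_2)=\tau(p_1)\cup\tau(p_2)$, so the claim is immediate. For $p_1\seq p_2$, combining $\val(p_1\seq p_2)=\val(p_1)\circ\val(p_2)$, the induction hypothesis, distributivity, and the $\link$-lemma yields $\bigcup_{r\in\tau(p_1\seq p_2)}\val(r)=\bigcup_{r_1\in\tau(p_1),\,r_2\in\tau(p_2)}\val(r_1)\circ\val(r_2)=\val(p_1)\circ\val(p_2)$. For $q^*$, one first proves $\val(q^n)=\bigcup_{r\in\tau(q^n)}\val(r)$ by an auxiliary induction on $n$ (using the $\seq$ case, with $q^0=\noth$, $\val^0(q)=\bff{S}$ as the base) and then unions over $n$, matching $\val(q^*)=\bigcup_n\val^n(q)$ against $\tau(q^*)=\bigcup_n\tau(q^n)$. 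Finally, for $\para(p_1,\dots,p_n)$ the identity is a definition chase that does not even use the induction hypothesis: by Def.~\ref{def:Valuation of Closed SPs} and Def.~\ref{def:Par}, $\val(\para(p_1,\dots,p_n))=\bigcup_{r\in\tau(\para(p_1,\dots,p_n))}\valt(r)$; since $\tau(\para(p_1,\dots,p_n))=\{\para(r_1,\dots,r_n)\mid r_i\in\tau(p_i)\}$ and each $r_i$ is a trec, $\tau(\para(r_1,\dots,r_n))=\{\para(r_1,\dots,r_n)\}$, so $\val(\para(r_1,\dots,r_n))=\valt(\para(r_1,\dots,r_n))$ and the two sides agree term by term.

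I expect the main obstacle to be the $\link$-lemma together with the bookkeeping forced by partiality: $\valt$, and hence $\val$ on parallel programs, is only partially defined, so every union above is to be read as ranging over the trecs whose valuation exists, and one must check that a trec contributes to the right-hand side exactly when it contributes to the left --- which holds because $\val(\para(r_1,\dots,r_n))$ and $\valt(\para(r_1,\dots,r_n))$ are literally the same expression. Restricting attention to well-defined SPs makes this clean; everything else is routine unfolding of the definitions of Sect.~\ref{section:Semantics of SDL}.
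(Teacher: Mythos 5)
Your proof is correct and follows essentially the same route as the paper's: structural induction on $p$, with the key observation that $\val(r_1\link r_2)=\val(r_1)\circ\val(r_2)$ for trecs (which the paper merely asserts as ``easy to see'' and you propose to prove properly), the union/distributivity bookkeeping for $\cho$, $\seq$ and $*$, and a direct definition chase via Def.~\ref{def:Par} for the parallel case. Your added care about $\tau(r)=\{r\}$ for trecs and about the partiality of $\valt$ only makes the argument more complete than the paper's own sketch.
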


The proof of Def.~\ref{prop:trecs property} is given in Appendix~\ref{section:Proofs for some Propositions and the Soundness of SDL}.

\subsubsection{Valuation of Formulas}
\label{section:Valuation of Formulas}


The semantics of SDL formulas is defined as a set of states in the following definition.

\begin{mydef}[Valuation of SDL Formulas]
    \label{def:Valuation of sDTL Formulas}
    The valuation of SDL formulas is given inductively as follows:
    \begin{enumerate}
        \item $\val(\true)\dddef \bff{S}$;
        \item $\val(\theta(e_1, e_2))\dddef \{s\ |\ s\in \bff{S}, \theta(\val_s(e_1), \val_s(e_2))\mbox{ is true}\}$;
        \item $\val(\neg \phi)\dddef \bff{S} - \val(\phi)$;
        \item $\val(\phi\wedge \psi)\dddef \val(\phi)\cap \val(\psi)$;
        \item $\val(\forall x. \phi)\dddef \{s\ |\ \mbox{for all }n\in \mbb{Z}, s\in \val(\phi[n/x])\}$;
        \item $\val([p]\phi)\dddef \{s\ |\ \mbox{for all $tr\in \val(p)$ with $tr^b = s$, $tr^e\in \val(\phi)$}\}$;
        \item $\val([p]\Box\phi)\dddef \{s\ |\ \mbox{for all $tr\in \val(p)$ with $tr^b=s$, $tr\in \val_\pi(\Box\phi)$}\}$, where the valuation $\val_\pi(\Box\phi)$ of a trace formula $\Box\phi$ is defined as
    $$\val_\pi(\Box\phi)\dddef \{tr\ |\ tr(i)\in \val(\phi)\mbox{ for all $i\ge 1$}\}.$$
    \end{enumerate}
\end{mydef}

The items 1-5 are normal definitions for first-order formulas.
The formula $[p]\phi$ describes the partial correctness of a program, since all traces of $p$ are finite, $tr^e$ always exists.
$[p]\Box\phi$ 
captures safety properties in synchronous models that hold at each instant.
The semantics of a temporal formula $\Box\phi$ is given as $\val_\pi$.

We introduce the satisfaction relation between states and SDL formulas.

\begin{mydef}[Satisfaction Relation $\models$]
\label{def:Satisfaction Relation}
    The satisfaction relation between a state $s\in \bff{S}$ and an SDL formula $\phi$, denoted as $s\models \phi$, is defined s.t.
    $$\mbox{$s\models \phi$ iff $s\in \val(\phi)$}.$$

    We say $\phi$ is valid, denoted by $\models \phi$, if for all $s\in \bff{S}$, $s\models \phi$ holds.
\end{mydef}

\section{Proof System of SDL}
\label{section:Proof System of SDL}
In this section, we propose a sound and relatively complete proof system for SDL to support verification of reactive systems based on theorem proving.
We propose compositional rules for sequential SPs which transform a dynamic formula step by step into AFOL formulas according to the syntactic structure of programs.
We propose rewrite rules for parallel SPs which transform a parallel SP into a sequential one so that the proof of a dynamic formula that contains parallel SPs can be realized by the proof of a dynamic formula that only contains sequential SPs.
Since the communication of parallel SPs is deterministic, a parallel SP can be easily rewritten into a sequential one based on the process given in the function $\Merge$ (Def.~\ref{def:Function Merge}) and
the state space of the sequential SP keeps a linear size w.r.t that of the parallel SP.

In Sect.~\ref{section:Sequent Calculus} we first introduce a logical form called \emph{sequent} which allows us to express deductions in a convenient way.
In Sect.~\ref{section:Compositional Rules for Sequential SPs} and Sect.~\ref{section:Rewrite Rules for Parallel SPs}, we propose rules for sequential and parallel SPs respectively.
In Sect.~\ref{section:Rules for FOL formulas}, we introduce other rules for first-order logic (FOL).

\subsection{Sequent Calculus}
\label{section:Sequent Calculus}

A sequent~\cite{Gentzen34} is of the form
$$\Gamma \seqArrow \Delta, $$
where $\Gamma$ and $\Delta$ are finite multisets of formulas, called \emph{contexts}.
The sequent means the formula
$$\bigwedge_{\phi\in \Gamma} \phi \to \bigvee_{\psi\in \Delta}\psi,$$
i.e., if all formulas in $\Gamma$ hold, then one of formulas in $\Delta$ holds.

When $\Gamma = \emptyset$, we denote the sequent as $\cdot \seqArrow \Delta$, meaning
the formula $\true \to \bigvee_{\phi\in \Delta}\phi$.
When $\Delta=\emptyset$, we denote it as $\Gamma\seqArrow \cdot$, meaning the formula $\bigwedge_{\phi\in\Gamma}\phi\to \false$.

A rule in sequent calculus is of the form
$$
\infer[]
{
    \Gamma\seqArrow \Delta
}
{
    \Gamma_1\seqArrow \Delta_1
    &
    ...
    &
    \Gamma_n\seqArrow \Delta_n
}
,$$
where $\Gamma_1\seqArrow \Delta_1$,...,$\Gamma_n\seqArrow \Delta_n$ are called \emph{premises}, while
$\Gamma\seqArrow \Delta$ is called a \emph{conclusion}.
The rule means that if $\Gamma_1\seqArrow \Delta_1$,...,$\Gamma_n\seqArrow \Delta_n$ are valid (in the sense of Def.~\ref{def:Satisfaction Relation}), then $\Gamma\seqArrow \Delta_n$ is valid.

We simply write rules
$\begin{aligned}\infer[]
{
    \Gamma\seqArrow \phi, \Delta
}
{
    \Gamma_1\seqArrow \psi_1, \Delta_1
    &
    ...
    &
    \Gamma_n\seqArrow \psi_n, \Delta_n
}\end{aligned}$
and
$\begin{aligned}\infer[]
{
    \Gamma, \phi_1,...,\phi_n\seqArrow \Delta
}
{
    \Gamma', \psi\seqArrow \Delta'
}\end{aligned}$
as
$$
\begin{aligned}
\infer[]
{
    \phi
}
{
    \psi_1
    &
    ...
    &
    \psi_n
}
\end{aligned}
\mbox{ and }
\begin{aligned}
\infer[]
{
    \psi
}
{
    \phi_1
    &
    ...
    &
    \phi_n
}
\end{aligned}
$$
respectively
if $\Gamma_1=...=\Gamma_n=\Gamma' = \Gamma$ and $\Delta_1=...=\Delta_n=\Delta' = \Delta$ hold.
In fact, the rule $\begin{aligned}\infer[]
{
    \phi
}
{
    \psi_1
    &
    ...
    &
    \psi_n
}\end{aligned}$
just means that the formula $\bigwedge^n_{i=1}\psi_i \to \phi$ is valid in the sense of Def.~\ref{def:Satisfaction Relation} (cf. Prop.~\ref{prop:sequent} in Appendix~\ref{section:Proofs for some Propositions and the Soundness of SDL}).

When both rules $\begin{aligned}\infer[]{\phi}{\psi}\end{aligned}$ and $\begin{aligned}\infer[]{\psi}{\phi}\end{aligned}$ exist, we simply use a single rule of the form
$$
\infer=[]
{
    \phi
}
{
    \psi
}
$$
to represent them. It means that the formula $\psi \leftrightarrow \phi$ is valid.

\subsection{Compositional Rules for Sequential SPs}
\label{section:Compositional Rules for Sequential SPs}

As shown in Table~\ref{table:Rules for Closed SPs}, the rules for sequential SPs are divided into two types.
The rules of type $(a)$ are initially proposed in this paper for special primitives in SPs, whereas the rules of type $(b)$ are inherited from FODL~\cite{Harel00} and differential temporal dynamic logic (DTDL)~\cite{Platzer07}.

Explanations of each rule in Table~\ref{table:Rules for Closed SPs} are given as follows.

The rules of the type $(a)$ are for atomic programs.
Rule $(\alpha, \Box\phi)$ says that proving that $\phi$ is satisfied at each instant of the macro event $\alpha$, is equal to prove that $\phi$ holds before and after the execution of $\alpha$.
This is because a macro event $\alpha$ only concerns two instants, i.e., the instants before and after the execution of $\alpha$. The time does not proceed in a macro event.
The rules $(\psi?)$ and $(x:=e)$ deal with the micro events in a macro event.
In rule $(\psi?)$, if the test $\psi?$ does not hold, the formula $[\psi?\nex \alpha]\phi$ is always true because there is no trace in the program $\psi?\nex \alpha$.
Rule $(x:=e)$ is similar to the rule for assignment in FODL. 
Note that the assignment $x:=e$ also affects on the micro steps after it (i.e. $\alpha$) and this reflects the data dependencies between different micro events of a macro event.
In rule $(\epsilon)$, 
since $\epsilon$ only skips the current instant, it does not affect a state property $\phi$.
Rule $(\noth)$ says that
$\noth$ neither consumes time nor does anything, so only one instant is involved. Thus $[\noth]\xi$ equals to that $\phi$ holds at the current instant.
Rule $(\halt)$ says that $[\halt]\xi$ is always true because there is no trace in $\halt$.
\ifx
[this part, wrong???]In the rules $(\epsilon)$, $(\noth)$ and $(\halt)$,
since $\epsilon$ only skips the current instant, it does not affect a state property $\phi$;
$\noth$ neither consumes time nor does anything, so only one instant is involved. Thus $[\noth]\xi$ equals to that $\phi$ holds at the current instant;
$[\halt]\xi$ is always true because there is no trace in $\halt$.
\fi

    \begin{table}[htpb]
         \begin{center}
         \noindent\makebox[\textwidth]{%
         \scalebox{1.0}{
         \begin{tabular}{|c|}
         \toprule
         \multicolumn{1}{|c|}{
                \begin{tabular}{c c c c c c}
                     $\infer=[^{(\alpha, \Box\phi)}]{
                        [\alpha]\Box\phi
                     }
                     {
                        \phi
                        \wedge
                        [\alpha]\phi
                     }
                     $
                     &
                     $\infer=[^{(\psi?)}]{
                        [\psi?\nex \alpha]\phi
                     }
                     {
                        \psi
                        \to
                        [\alpha]\phi
                     }
                     $
                     &
                     $\infer=[^{(x:=e)}]{
                        [x:=e\nex \alpha]\phi
                     }
                     {
                        ([\alpha]\phi)[e/x]
                     }
                     $
                     &
                     $\infer=[^{(\epsilon)}]{
                        [\epsilon]\phi
                     }
                     {
                        \phi
                     }
                     $
                     &
                     $
                     {
                     \infer=[^{1\ (\noth)}]
                     {[\noth]\xi}{\phi}
                     }
                     $
                     &
                     $
                     \infer=[^{(\halt)}]
                     {[\halt]\xi}{\true}
                     $
             \end{tabular}
         }
         \\
         \multicolumn{1}{|c|}{
                {\footnotesize (a) rules special in SDL}
            }
         \\
         \midrule
         \multicolumn{1}{|c|}{
            \begin{tabular}{c c c c c}
                 $
                 \infer=[^{(\seq,\phi)}]
                 {
                    [p\seq q]\phi
                 }
                 {
                    [p][q]\phi
                 }
                 $
                 &
                 \infer=[^{(\seq,\Box\phi)}]
                 {
                    [p\seq q]\Box\phi
                 }
                 {
                    [p]\Box\phi
                    \wedge
                    [p][q]\Box\phi
                 }
                 &
                 \infer=[^{(\cup)}]
                 {
                    [p\cup q]\xi
                 }
                 {
                    [p]\xi
                    \wedge
                    [q]\xi
                 }
                 &
                 $
                 \infer=[^{(*,\Box\phi)}]
                 {
                    [p^*]\Box\phi
                 }
                 {
                    [p^*][p]\Box\phi
                 }
                 $
                 &
                 $
                 \infer=[^{(*)}]
                 {
                    [p^*]\xi
                 }
                 {
                    [\noth\cup p\seq p^*]\xi
                 }
                 $
                 \\
            \end{tabular}
         }
         \\
         \\
         \multicolumn{1}{|c|}{
            \begin{tabular}{c c c c}
                 $
                 \infer[^{2\ (\mathit{ind})}]
                 {
                    \phi \to [p^*]\phi
                 }
                 {
                    {\forall (\phi \to [p]\phi)}
                 }
                 $
                 &
                 $
                 \infer[^{3\ (\mathit{con})}]
                 {
                    (\exists v\ge 0.\phi(v)) \to \la p^*\ra \phi(0)
                 }
                 {
                    \forall ((v > 0\wedge \phi(v)) \to \la p\ra\phi(v-1))
                 }
                 $
                 &
                 $
                 \infer[^{([],\mathit{gen})}]
                 {
                    [p]\phi\to [p]\psi
                 }
                 {
                    \forall(\phi\to \psi)
                 }
                $
                &
                $
                 \infer[^{(\la \ra,\mathit{gen})}]
                 {
                    \la p\ra\phi\to \la p\ra\psi
                 }
                 {
                    \forall(\phi\to \psi)
                 }
                $
            \end{tabular}
         }
         \\
         \multicolumn{1}{|c|}{
                {\footnotesize (b) rules from other dynamic logics}
            }
         \ifx
         \\
         \\
         \multicolumn{1}{|c|}{
            \begin{tabular}{c c}
                $
                 \infer[^{([],\mathit{gen})}]
                 {
                    [p]\phi\to [p]\psi
                 }
                 {
                    \forall(\phi\to \psi)
                 }
                $
                &
                $
                 \infer[^{(\la \ra,\mathit{gen})}]
                 {
                    \la p\ra\phi\to \la p\ra\psi
                 }
                 {
                    \forall(\phi\to \psi)
                 }
                $
            \end{tabular}
         }
         \fi
         \\
         \midrule
         \multicolumn{1}{|l|}{
            \begin{tabular}{l}
            $^{1}$ $\xi\in \{\phi, \Box\phi\}$\\
            $^{2}$ $\forall (\phi)\dddef \forall x_1.\forall x_2.....\forall x_n. \phi$, where $x_1,...,x_n$ are the set of all free variables in $\phi$\\
            $^{3}$ The variable $v$ does not appear in $p$
            \end{tabular}
         }
         \\
         \bottomrule
          \end{tabular}
              }
              }
          \end{center}
          \caption{Rules for Sequential SPs}
          \label{table:Rules for Closed SPs}
    \end{table}

Except for the rules $([],\mathit{gen})$ and $(\la\ra, \mathit{gen})$, all rules of type $(b)$ are for composite programs.
The rules $(\seq, \Box\phi)$ and $(*, \Box\phi)$ are inherited from DTDL~\cite{Harel00}.
The rules $(\seq,\phi)$ and $(\seq,\Box\phi)$ are due to the fact that any trace of $p\seq q$ is formed by concatenating a trace of $p$ and a trace $q$,
while rule $(\cup)$ is based on the fact that any trace of $p\cup q$ is either a trace of $p$ or a trace of $q$.
Rule $(*, \Box\phi)$ converts the proof of a temporal formula $\Box\phi$ to the proof of a state formula $[p]\Box\phi$.
This is useful because then we can apply other rules such as $(\mathit{ind})$ and $(\mathit{con})$ which are only designed for a state formula $\phi$.
Rule $(*)$ is based on the semantics of $p^*$, which means that either does not execute $p$ (i.e. $\noth$), or executes $p$ for 1 or more than 1 times (i.e. $p;p^*$).
The rules $([], \mathit{gen})$ and $(\la\ra, \mathit{gen})$ are for eliminating the dynamic parts $[p]$, $\la p\ra$ of a formula during deductions.
They are used in deriving the rules $([*])$ and $(\la *\ra)$ below and are necessary for the relatively completeness of the whole proof system.

\begin{equation}
\infer[^{([*])}]
{
    [p^*]\phi
}
{
    \psi
    &
    \forall(\psi \to [p]\psi)
    &
    \forall(\psi\to \phi)
}
\end{equation}

\begin{equation}
\infer[^{(\la *\ra)}]
{
    \la p^*\ra \phi
}
{
    \exists v\ge 0. \psi(v)
    &
    \forall((v>0\wedge \psi(v))\to \la p\ra \psi(v-1))
    &
    \forall(\psi(0) \to \phi)
}
\end{equation}

The rules $(\mathit{ind})$ and $(\mathit{con})$ are mathematical inductions for proving properties of star programs $p^*$.
Rule $(\mathit{ind})$ means that to prove that $\phi\to [p^*]\phi$ holds at a state, we need to prove that $\phi\to [p]\phi$ holds at any state.
Rule $(\mathit{con})$ has a similar meaning as $(\mathit{ind})$.
The main difference is that in $(\mathit{con})$ a variable $v$ is introduced as an indication of the termination of $p^*$.
The rules $(\mathit{ind})$ and $(\mathit{con})$ are mainly used in theories.
In practical verification, the rules $([*])$ and $(\la *\ra)$ above are applied for eliminating the star operator $*$, where $\psi$ is often known as a \emph{loop invariant} of $p^*$.
$([*])$ and $(\la *\ra)$ can be derived from $(\mathit{ind})$ and $(\mathit{con})$, refer to~\cite{Harel00} for more details.

\subsection{Rewrite Rules for Parallel SPs}
\label{section:Rewrite Rules for Parallel SPs}

Table~\ref{table:Rules for Parallel SPs} gives the rewrite rules for parallel SPs, which are divided into three types and one single rule.
The rules of the types $(a), (b), (c)$ perform rewrites for one step, whereas rule $(\para, \mathit{seq})$ rewrites a parallel program as a whole into a sequential one through a series of steps in the algorithm $\ToSeq$.

\begin{table}[htpb]
    \noindent\makebox[\textwidth]{%
    \scalebox{1.0}{
        \begin{tabular}{|c|}
        \toprule
            \multicolumn{1}{|c|}{
                \begin{tabular}{l l}
                    $\begin{aligned}
                        \infer=[]
                        {\phi\{p\}}
                        {\phi\{q\}}
                    \end{aligned}\mbox{ if } p\red q
                    \
                    ^{1}
                    $
                    \ {\footnotesize$(r1)$}
                    &
                    $
                    p\{q\}\red p\{r\}
                    \mbox{ if } q\red r
                    \
                    ^{2}
                    $
                    \ {\footnotesize$(r2)$}
                \end{tabular}
            }
            \\
            \multicolumn{1}{|c|}{
                {\footnotesize (a) structural rewrite rules}
            }
            \\
            \midrule
            \multicolumn{1}{|c|}{
                \begin{tabular}{l l l}
                $\begin{aligned}
                        \noth\seq p\red p
                \end{aligned}
                $
                    \ {\footnotesize$(\noth,\seq)$}
                &
                $\begin{aligned}
                        \noth^*\red \noth
                    \end{aligned}
                $
                    \ {\footnotesize$(\noth,*)$}
                &
                $\begin{aligned}
                        \halt\seq p\red \halt
                    \end{aligned}
                $
                \ {\footnotesize $(\halt, \seq)$}
                \\
                $\begin{aligned}
                        \halt^*\red \noth
                    \end{aligned}
                    $
                    \ {\footnotesize$(\halt,*)$}
                &
                $\begin{aligned}
                        (p\seq q)\seq r\red p\seq (q\seq r)
                    \end{aligned}
                $
                    \ {\footnotesize$(\seq,\mathit{ass})$}
                &
                $\begin{aligned}
                        p\seq (q\cup r)\red (p\seq q)\cup (p\seq r)
                    \end{aligned}
                    $
                    \ {\footnotesize$(\seq,\mathit{dis}1)$}
                \\
                $\begin{aligned}
                        (q\cup r)\seq p\red (q\seq p)\cup (r\seq p)
                    \end{aligned}
                    $
                    \ {\footnotesize$(\seq,\mathit{dis}2)$}
                &
                $\begin{aligned}
                        (p\cup q)\cup r\red p\cup (q\cup r)
                    \end{aligned}
                    $
                    \ {\footnotesize$(\cup,\mathit{ass})$}
                &
                $\begin{aligned}
                        p^*\red \noth\cup p;p^*
                    \end{aligned}
                    $
                    \ {\footnotesize$(*,\mathit{exp})$}
                \end{tabular}
            }
            \\
            \multicolumn{1}{|c|}{
                {\footnotesize (b) rewrite rules for sequential programs}
            }
            \\
            \midrule

            \multicolumn{1}{|c|}{
                \begin{tabular}{l l l}
                    $\begin{aligned}
                        \para(...,p,\noth,q,...)\red \para(...,p,q,...)
                    \end{aligned}
                    $
                    \ {\footnotesize$(\para,\noth)$}
                    &
                    $\begin{aligned}
                        \para(...,p, \halt, q,...)\red \halt
                    \end{aligned}
                    $
                    \ {\footnotesize$(\para,\halt)$}
                    &
                    $\begin{aligned}
                        \para(...,p\cup q,...)\red \para(...,p,...)\cup \para(...,q,...)
                    \end{aligned}
                    $
                    \ {\footnotesize$(\para,\mathit{dis})$}
                \end{tabular}
            }
            \\
            \multicolumn{1}{|c|}{
                \begin{tabular}{l}
                    $\begin{aligned}
                        \para(\alpha_1\seq q_1,...,\alpha_n\seq q_n)\red \alpha\seq \para(q'_1,...,q'_n)
                    \end{aligned}
                    \mbox{ if $(b, \alpha, (q'_1\sep...\sep q'_n)) = \Merge(\alpha_1\seq q_1\sep...\sep \alpha_n\seq q_n)$ and $b = \true$}
                    $
                    \ {\footnotesize$(\para,\mathit{mer})$}
                    \\

                \end{tabular}
            }
            \\
        \multicolumn{1}{|c|}{
                {\footnotesize (c) rewrite rules for parallel programs}
            }
            \\
        \midrule
        \multicolumn{1}{|c|}{
                \begin{tabular}{l}
                    $\begin{aligned}
                        \para(p_1,...,p_n)\red \ToSeq(\para(p_1,...,p_n))
                    \end{aligned}
                    \mbox{ if $\para(p_1,...,p_n)$ is well defined}
                    $
                    \ {\footnotesize$(\para,\mathit{seq})$}
                \end{tabular}
            }
        \\
        \midrule
        \multicolumn{1}{|l|}{
            \begin{tabular}{l}
            $^{1}$ $p$ and $q$ are closed programs; $\phi\{\place\}$ is a program hole of the formula $\phi$\\
            $^{2}$ $p\{\place\}$ is a program hole of the program $p$; the reduction $q\red r$ is from the rules of the types $(b)$ and $(c)$
            \end{tabular}
         }
         \\
        \bottomrule
        \end{tabular}
    }
    }
    \caption{Rewrite Rules for Parallel SPs}
    \label{table:Rules for Parallel SPs}
\end{table}

Two rules of type $(a)$ are structural rules for rewriting an SDL formula or an SP according to the rewrites of its parts.
They rely on the definition of \emph{program holes} given below.
\begin{mydef}[Program Holes]
    Given a formula $\phi$, a program hole of $\phi$, denoted as $\phi\{\place\}$, is defined in the following grammar:
    $$\begin{aligned}
    \phi\{\place\}\ddef&\ \neg \phi\{\place\}\ |\ \phi\{\place\}\wedge \phi\ |\ \phi\wedge \phi\{\place\}\ |\ \forall x.\phi\{\place\}\ |\ [p\{\place\}]\phi\ |\ [p\{\place\}]\Box\phi,
    \end{aligned}
    $$
    where a program hole $p\{\place\}$ of an SP $p$ is defined as:
    $$
    \begin{aligned}
    p\{\place\}\ddef&\ \place\ |\ p\{\place\}\seq p\ |\ p\seq p\{\place\}\ |\ p\{\place\}\cup p\ |\ p\cup p\{\place\}\ |\ (p\{\place\})^*\ |\ \para(p,...,p\{\place\},...,p).
    \end{aligned}$$
    We call $\place$ a place which can be filled by a program.

    Given a program hole $\phi\{\place\}$ (resp. $p\{\place\}$) and an SP $q$, $\phi\{q\}$ (resp. $p\{q\}$) is the formula (resp. the program) obtained by filling the place $\place$ of $\phi\{\place\}$ (resp. $p\{\place\}$) with $q$.

\end{mydef}
The rules $(\mathit{r1})$ and $(\mathit{r2})$ say that the rewrites between SPs preserves the semantics of SDL formulas and SPs.
Rule $(\mathit{r1})$ means that if $p$ can be rewritten by $q$, then we can replace $p$ by $q$ anywhere of $\phi$ without changing the semantics of $\phi$.
Rule $(\mathit{r2})$ has a similar meaning.
Note that $p\{q\}$, $p\{r\}$, $q, r$ can also be open SPs, and in this case we do not care about their semantics (because they do not have one).

The rewrites in the rules of type $(a)$ come from the rules of the types $(b)$ and $(c)$.
The rules of type $(b)$ are for sequential programs.
They are all based on the semantics of sequential SPs.
For example, rule $(\noth, \seq)$ is based on the fact that $\val(\noth\seq p) = \val(\noth)\circ \val(p) = \val(p)$.
The rules of type $(c)$ are for parallel programs.
They are based on the semantics of parallel programs given in Sect.~\ref{section:Valuation of Parallel Programs}.
See Appendix~\ref{section:Proofs for some Propositions and the Soundness of SDL} for the proofs of their soundness.

\ifx
To make the proof of the soundness of the rule $(r2)$ easier later in Appendix~\ref{???}, we stipulate that in this rule the reduction $q\red r$ cannot come from the rule $(\para, \mathit{seq})$,
in other words, the rule $(\para, \mathit{seq})$ can be applied to only a program as a whole, not the part of a program.
This restriction prevents that the rules $(r2)$ and $(\para, \mathit{seq})$ appear in each other's body.
However, it does not affect the relative completeness of SDL because the relative completeness of SDL only relates to the rewrite rule $(\para, \mathit{seq})$ where
$\ToSeq$ does not require to use the rule $(\para, \mathit{seq})$ in its procedure according to Algorithm~\ref{alg:ToSeq} introduced below.
\fi

In a parallel program $\para(p_1,...,p_n)$, if the program $p_i$ ($1\le i\le n$) is in the form $q^*$ or $q^*\seq r$, consecutively applying the rules of the types (a), (b) and (c) may generate an infinite proof tree.
An example is shown in Fig.~\ref{figure:An example of inifinite proof trees}, where we merge several derivation steps into one by listing all the rules applied during these steps.
In the proof tree of this example, we can see that the node $(1)$ is the same as the root node so the proof procedure will never end.

\begin{figure}[htpb]
	\centering
\scalebox{1}{
	\infer[^{(*, \mathit{exp}), (\mathit{r1})}]
    {
        (\sig\nex \epsilon)^*\para(\hat{\sig}?\nex \epsilon)^*
    }
    {
        \infer[^{(\cap,\mathit{dis}), (\mathit{r1}), (\cup)}]
        {
            (\noth\cup (\sig\nex \epsilon)\seq (\sig\nex \epsilon)^*)
            \para
            (\noth\cup (\hat{\sig}?\nex \epsilon)\seq (\hat{\sig}?\nex \epsilon)^*)
        }
        {
            \infer[]
            {
                \noth\para \noth
            }
            {
                ...
            }
            &
            \infer[]
            {
                \noth\para (\hat{\sig}?\nex \epsilon)\seq (\hat{\sig}?\nex \epsilon)^*
            }
            {
                ...
            }
            &
            \infer[]
            {
                (\sig\nex \epsilon)\seq (\sig\nex \epsilon)^*\cap \noth
            }
            {
                ...
            }
            &
            \infer[^{(\para, \mathit{mer})}]
            {
                (\sig\nex \epsilon)\seq (\sig\nex \epsilon)^* \para (\hat{\sig}?\nex \epsilon)\seq (\hat{\sig}?\nex \epsilon)^*
            }
            {
                \infer[^{(\seq,\phi), (\epsilon)}]
                {
                    \epsilon\seq ((\sig\nex \epsilon)^* \para (\hat{\sig}?\nex \epsilon)^*)
                }
                {
                    (1):\ \ \ (\sig\nex \epsilon)^* \para (\hat{\sig}?\nex \epsilon)^*
                }
            }
        }
    }
} 
\caption{An Example of Inifinite Proof Trees}
\label{figure:An example of inifinite proof trees}
\end{figure}

To avoid this situation, we propose rule $(\para, \mathit{seq})$.
Rule $(\para, \mathit{seq})$ reduces a parallel program into a sequential one by the procedure $\ToSeq$ (Algorithm~\ref{alg:ToSeq}), which
follows the Brzozowski's method~\cite{Brzozowski64} that transforms an NFA (non-deterministic finite automaton), expressed as a set of equations, into a regular expression.
This process  relies on Arden's rule~\cite{Arden61} to solve the equations of regular expressions.
Before introducing the procedure $\ToSeq$, we firstly explain why Arden's rule can apply to SPs.

In fact, it is easily seen that a sequential SP is a regular expression, whose semantics is exactly the set of sequences the regular expression represents.
The sequence operator $\seq$ represents the \emph{concatenation} between sequences of two sets.
The choice operator $\cup$ represents the \emph{union} of two sets of sequences.
The loop operator $*$ represents the \emph{star} operator that is applied to a set of sequences.
An event $\alpha$ is a \emph{word} of a regular expression, representing a set of sequences with the minimal length (here in SP the minimal length is $2$).
The empty program $\noth$ is the \emph{empty string} of a regular expression, representing a set of sequences with ``zero length'' , i.e., sequences that do not change other sequences when are concatenated to them (here in SP, the ``zero length'' is 1 due to the definition of the operator $\circ$). 
The halt program $\halt$ is the \emph{empty set} of a regular expression, representing an empty set of sequences.

With this fact, Arden's rule obviously holds for sequential SPs, as stated as the following proposition.
In the below of this paper, we sometimes use $p \equiv q$ to mean that two programs $p$ and $q$ are semantically equivalent, i.e., $\val(p) = \val(q)$.

\begin{prop}[Arden's Rule in SPs]
    \label{prop:Arden's Rule in SPs}
    Given any sequential SPs $p$ and $q$ with $q\not\equiv \noth$, $X \equiv q^*\seq p$ is the unique solution of the equation $X\equiv p\cup q\seq X$.
\end{prop}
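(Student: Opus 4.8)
The plan is to argue entirely at the level of the valuation $\val$, since by definition $X\equiv q^*\seq p$ abbreviates $\val(X)=\val(q^*\seq p)$, and to split the claim into two parts: (i) $q^*\seq p$ is a solution of $X\equiv p\cup q\seq X$, and (ii) every solution $Y$ satisfies $\val(Y)=\val(q^*\seq p)$.

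For (i), I would simply unfold the semantic clauses of Def.~\ref{def:Valuation of Closed SPs}. Writing $\val(q^*\seq p)=\val(q^*)\circ\val(p)=(\bigcup_{n\ge 0}\val^n(q))\circ\val(p)$ and using $\val^0(q)=\bff{S}$, the identity $\bff{S}\circ A=A$, the identity $\val(q)\circ\val^n(q)=\val^{n+1}(q)$, and distributivity of $\circ$ over $\cup$, one obtains $\val(q^*\seq p)=\val(p)\cup\val(q)\circ\val(q^*)\circ\val(p)=\val(p)\cup\val(q)\circ\val(q^*\seq p)=\val(p\cup q\seq(q^*\seq p))$. These are routine set manipulations and do not use the hypothesis $q\not\equiv\noth$; in effect this re-derives rule $(*,\mathit{exp})$ for the program $q^*\seq p$.

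For (ii), let $Y$ be any solution, i.e.\ an SP with $\val(Y)=\val(p)\cup\val(q)\circ\val(Y)$. Substituting the equation into itself, an induction on $n$ gives $\val(Y)=\bigcup_{k=0}^{n}\val^k(q)\circ\val(p)\ \cup\ \val^{n+1}(q)\circ\val(Y)$ for every $n\ge 0$. Unioning the first summand over all $n$ already yields $\val(q^*\seq p)=\bigcup_{k\ge 0}\val^k(q)\circ\val(p)\subseteq\val(Y)$. For the converse inclusion I would invoke a \emph{progress} property of $q$: no trace of $\val(q)$ has length one, so — since concatenating traces of lengths $a$ and $b$ yields a trace of length $a+b-1$ — every trace of $\val^{n+1}(q)\circ\val(Y)$ has length at least $n+2$. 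Hence a fixed trace $tr\in\val(Y)$ of length $m$ cannot lie in $\val^{n+1}(q)\circ\val(Y)$ once $n\ge m$, so by the displayed identity (instantiated at $n=m$) we get $tr\in\bigcup_{k=0}^{m}\val^k(q)\circ\val(p)\subseteq\val(q^*\seq p)$. The two inclusions give $\val(Y)=\val(q^*\seq p)$, i.e.\ $Y\equiv q^*\seq p$.

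I expect the main obstacle to be the \emph{progress} step in (ii): pinning down precisely what the hypothesis contributes. Literally $q\not\equiv\noth$ says only $\val(q)\ne\bff{S}$, whereas the length argument needs the SP counterpart of the classical Arden-rule condition $\epsilon\notin L(q)$ — namely that $\val(q)$ contains \emph{no} trace of length one (equivalently, $q$ has no idle/stuttering behaviour). I would make this part airtight either by reading $q\not\equiv\noth$ in that stronger sense, or, in the only place the rule is actually invoked (inside $\ToSeq$, where the coefficient $q$ is a finite choice of macro events and each macro step of a macro event has length exactly two), by deriving ``$\val(q)$ has no length-one trace'' directly from the form of $q$. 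Everything else — the iteration identity, the two inclusions, and the bookkeeping of trace lengths under $\circ$ — is routine.
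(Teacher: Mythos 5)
Your argument is correct and is exactly the semantic argument the paper gestures at but omits (``can be proved according to the semantics of SPs''): part (i) is the same computation the paper itself performs when proving the soundness of rule $(*,\mathit{exp})$ in Appendix A, and part (ii) is the standard Arden uniqueness argument transported to the trace model, with trace length playing the role of word length (with the off-by-one bookkeeping $l(tr_1\circ tr_2)=l(tr_1)+l(tr_2)-1$ handled correctly).

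The caveat you raise in your last paragraph is genuine and worth making explicit rather than treating as a loose end. As literally stated, the hypothesis $q\not\equiv\noth$ only says $\val(q)\neq\bff{S}$, and that is not enough for uniqueness: take $q=\noth\cup\alpha$ for any macro event $\alpha$ with $\val(\alpha)\neq\emptyset$. Then $q\not\equiv\noth$, but $\bff{S}\subseteq\val(q)$ forces $\val(q)\circ\val(q^*)=\val(q^*)$, and one checks that every set of the form $\val(q^*)\circ(\val(p)\cup C)$ solves $X\equiv p\cup q\seq X$, so the solution is far from unique. The condition actually needed is the trace-semantic analogue of $\epsilon\notin L(q)$, namely that $\val(q)$ contains no trace of length one; this is strictly stronger than $q\not\equiv\noth$ and is what your progress step uses. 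Your proposed repair is the right one: in the only place the rule is invoked (line 6 of Algorithm $\ToSeq$), the coefficient $q$ is a union of programs each beginning with a macro event, every trace of which has length exactly two, so every trace of $\val(q)$ has length at least two and the length argument goes through. With the hypothesis read (or restated) in that stronger form, your proof is complete; everything else in both directions is the routine manipulation of $\circ$, $\cup$ and $\val^n$ that you describe.
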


Prop.~\ref{prop:Arden's Rule in SPs} can be proved according to the semantics of SPs, we omit it here.

\ifx
It is easy to see that a sequential SP is in fact a regular expression, with $\alpha$ being a \emph{word}, $\noth$ being the \emph{empty string}, $\halt$ being the \emph{empty language}, $\seq$ being the \emph{concatenation} operator,
$\cup$ being the \emph{union} operator and $*$ being the \emph{Kleene star} operator.
The semantic equivalence in SPs is exactly the language equivalence in a regular expression, where the language equivalence means that two regular expressions represent a same set of strings.
\fi

    \begin{algorithm}[ptb]
    \begin{algorithmic}[1]
    \Procedure{Brz}{$\para(p_1,...,p_n)$} /*$\para(p_1,...,p_n)$ is well defined*/
    \State let $l_1 = \para(p_1,...,p_n)$, by consecutively applying the rules $(r2)$, and the rules of the types (b) and (c) in Table~\ref{table:Rules for Parallel SPs},
    we can reduce $l_1$ as the following form:
\begin{equation}
\label{equ:ToSeq}
    l_1\red b_1\cup \alpha_{11}\seq l_1\cup ...\cup \alpha_{1n}\seq l_n,
\end{equation}
    where $n\ge 0$ ($l_1\red b_1$ when $n = 0$), $b_1$ is a sequential program, $l_2,...,l_n$ are parallel programs.
    From the reduction above we can build an equation, namely
    $$
    l_1\equiv b_1\cup \alpha_{11}\seq l_1\cup ...\cup \alpha_{1n}\seq l_n.
    $$

    \State
    continuing this reduction procedure for the programs $l_2,..., l_n$, we can then build $n$ equations:
    $$
    \begin{aligned}
    l_1\equiv&\ b_1\cup \alpha_{11}\seq l_1\cup ...\cup \alpha_{1n}\seq l_n, &(1)\\
    ...\\
    l_n\equiv&\ b_n\cup \alpha_{n1}\seq l_1\cup...\cup \alpha_{nn}\seq l_n, &(n)
    \end{aligned}
    $$
    where $b_1,...,b_n$ are sequential programs, $l_1,...,l_n$ are taken as $n$ variables.

    \For{each $k$, $k = n, n-1, ..., 2, 1$}
        \State transform the equation $(k)$ into the form $l_k\equiv p\cup q\seq l_k$
        \State by Prop.~\ref{prop:Arden's Rule in SPs}, obtain $l_k\equiv q^*\seq p$ from $l_k\equiv p\cup q\seq l_k$
        \State substitute $l_k$ on the right of the other equations $(k-1)$,..., $(1)$ with $q^*\seq p$
    \EndFor
    \State \textbf{return} $l_1$
    \EndProcedure
    \end{algorithmic}
    \captionsetup{font=footnotesize}
    \caption{\footnotesize{Procedure $\ToSeq$}}
    \label{alg:ToSeq}
    \end{algorithm}

    The procedure $\ToSeq$ transforms a parallel program into a sequential one.
    Algorithm~\ref{alg:ToSeq} explains how it works.
    At line 2, it is easy to see that by using the rules $(r2)$ and the rules of the types (b) and (c) in Table~\ref{table:Rules for Parallel SPs},
    a parallel program can always be reduced into a form as the righthand side of the reduction (\ref{equ:ToSeq}).
    We can actually prove it by induction on the syntactic structure of the parallel program.
    At line 3, we can always build a finite number of $n$ equations because whichever rule we use for reducing a parallel program $\para(p_1,...,p_n)$, e.g. $\para(p_1,...,p_n)\red q$,
    the reduced form $q$ only contains the symbols in the original form $\para(p_1,...,p_n)$.
    Therefore, the total number of reduced expressions from a parallel program $\para(p_1,...,p_n)$ by consecutively applying those rules must be finite because $\para(p_1,...,p_n)$ only contains a finite number of symbols.

    By taking all parallel programs $l_1,...,l_n$ as variables, the process of solving the $n$ equations is attributed to Brzozowski's standard process of solving $n$ regular equations,
    as stated from line 4 - line 8.
    On each iteration, Arden's rule is applied to eliminate the $k$th variable $l_k$.
    Finally, all $n-1$ variables $l_{n}$,...,$l_2$ are eliminated and we return $l_1$ as the result.

\subsection{Rules for FOL Formulas}
\label{section:Rules for FOL formulas}

Table~\ref{table:Rules of FOL} shows the rules for FOL formulas.
Since they are all common in FOL, we omit the discussion of them here.
For convenience of analysis in the rest of the paper, we also give the rules for $\vee, \to$ and $\exists$ in Table~\ref{table:Other Rules of FOL}, although they can be derived by the rules in Table~\ref{table:Rules of FOL}.

    \begin{table}[htpb]
         \begin{center}
         \scalebox{1.0}{
         \begin{tabular}{|c|}
         \toprule
         \multicolumn{1}{|c|}{
            \begin{tabular}{c c c c}
                $
                \infer[^{(\mathit{ax})}]
                {
                    \Gamma, \phi\seqArrow \phi, \Delta
                }
                {}
                $
                &
                $
                \infer[^{(\mathit{cut})}]
                {
                    \Gamma\seqArrow \Delta
                }
                {
                    \Gamma\seqArrow \phi, \Delta
                    &
                    \Gamma, \phi\seqArrow\Delta
                }
                $
                &
                $
                \infer[^{(\neg r)}]
                {
                    \Gamma\seqArrow \phi,\Delta
                }
                {
                    \Gamma, \neg\phi\seqArrow \Delta
                }
                $
                &
                $
                \infer[^{(\neg l)}]
                {
                    \Gamma, \phi\seqArrow \Delta
                }
                {
                    \Gamma \seqArrow \neg\phi, \Delta
                }
                $
            \end{tabular}
            }
            \\
            \\
            \multicolumn{1}{|c|}{
                \begin{tabular}{c c c c}
                    $
                    \infer[^{(\wedge r)}]
                    {
                        \Gamma \seqArrow \phi\wedge \psi, \Delta
                    }
                    {
                        \Gamma \seqArrow \phi, \Delta
                        &
                        \Gamma \seqArrow \psi, \Delta
                    }
                    $
                    &
                    $
                    \infer[^{(\wedge l)}]
                    {
                        \Gamma, \phi\wedge \psi\seqArrow \Delta
                    }
                    {
                        \Gamma, \phi, \psi\seqArrow \Delta
                    }
                    $
                    &
                    $
                    \infer[^{1\ (\forall r)}]
                    {
                        \Gamma \seqArrow \forall x.\phi, \Delta
                    }
                    {
                        \Gamma \seqArrow \phi[y/x], \Delta
                    }
                    $
                    &
                    \infer[^{(\forall l)}]
                    {
                        \Gamma, \forall x.\phi\seqArrow \Delta
                    }
                    {
                        \Gamma, \phi[e/x]\seqArrow \Delta
                    }
                \end{tabular}
            }
            \\
            \midrule
            \multicolumn{1}{|l|}{
                \begin{tabular}{l}
                    $^{1}$ $y$ is a new variable with respect to $\Gamma$, $\phi$ and $\Delta$
                \end{tabular}
            }
            \\
            \bottomrule
         \end{tabular}
         }
         \end{center}
         \caption{Rules of FOL}
         \label{table:Rules of FOL}
    \end{table}

    \begin{table}[htpb]
         \begin{center}
         \scalebox{1.0}{
         \begin{tabular}{|c|}
         \toprule
         \multicolumn{1}{|c|}{
            \begin{tabular}{c c c c}
                $
                \infer[^{(\vee r)}]
                {
                    \Gamma \seqArrow \phi\vee \psi, \Delta
                }
                {
                    \Gamma \seqArrow \phi, \psi, \Delta
                }
                $
                &
                $
                \infer[^{(\vee l)}]
                {
                    \Gamma, \phi\vee \psi\seqArrow \Delta
                }
                {
                    \Gamma, \phi\seqArrow \Delta
                    &
                    \Gamma, \psi\seqArrow \Delta
                }
                $
                &
                $
                \infer[^{(\to r)}]
                {
                    \Gamma\seqArrow \phi\to \psi,\Delta
                }
                {
                    \Gamma, \phi\seqArrow \psi\Delta
                }
                $
                &
                $
                \infer[^{(\to l)}]
                {
                    \Gamma, \phi\to \psi\seqArrow \Delta
                }
                {
                    \Gamma \seqArrow \phi, \Delta
                    &
                    \Gamma, \psi\seqArrow \Delta
                }
                $
            \end{tabular}
            }
            \\
            \\
            \multicolumn{1}{|c|}{
                \begin{tabular}{c c}
                    $
                    \infer[^{(\exists r)}]
                    {
                        \Gamma \seqArrow \exists x.\phi, \Delta
                    }
                    {
                        \Gamma \seqArrow \phi[e/x], \Delta
                    }
                    $
                    &
                    \infer[^{1\ (\exists l)}]
                    {
                        \Gamma, \exists x.\phi\seqArrow \Delta
                    }
                    {
                        \Gamma, \phi[y/x]\seqArrow \Delta
                    }
                \end{tabular}
            }
            \\
            \midrule
            \multicolumn{1}{|l|}{
                \begin{tabular}{l}
                    $^{1}$ $y$ is a new variable with respect to $\Gamma$, $\phi$ and $\Delta$
                \end{tabular}
            }
            \\
            \bottomrule
         \end{tabular}
         }
         \end{center}
         \caption{Other Rules of FOL}
         \label{table:Other Rules of FOL}
    \end{table}

At the end of this section, we introduce the notion of \emph{deductive relation} in SDL.
\begin{mydef}[Deductive Relation $\vdash$]
    Given a SDL formula $\phi$ and a multi-set of formulas $\Phi$,
    we say $\phi$ is derived by $\Phi$, denoted as $\Phi\vdash \phi$, iff the sequent $\Phi\seqArrow \phi$ can be derived by the rules in Table~\ref{table:Rules for Closed SPs}, \ref{table:Rules for Parallel SPs}, \ref{table:Rules of FOL}.

    If $\Phi$ is an empty set, we simply denote $\Phi\vdash \phi$ as $\vdash \phi$.
\end{mydef}

\section{Soundness and Relative Completeness of SDL Calculus}
\label{section:Soundness and Relative Completeness of SDL Calculus}

\subsection{Soundness of SDL Calculus}
The soundness of the proof system of SDL is stated as the following theorem.

\begin{mytheo}[Soundness of SDL]
\label{theorem:Soundness of SDL}
Given an SDL formula $\phi$, if $\vdash \phi$, then $\models \phi$.
\end{mytheo}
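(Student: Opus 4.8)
The plan is to prove the theorem by induction on the structure (equivalently, the height) of the derivation of the sequent $\cdot\seqArrow\phi$. First I would record the bridge between syntax and semantics: by the meaning of a sequent (Prop.~\ref{prop:sequent}), validity of $\cdot\seqArrow\phi$ in the sense of Def.~\ref{def:Satisfaction Relation} is exactly $\models\phi$. Hence it suffices to prove the usual \emph{local soundness} statement --- for every rule in Tables~\ref{table:Rules for Closed SPs}, \ref{table:Rules for Parallel SPs} and \ref{table:Rules of FOL}, if every premise sequent is valid then the conclusion sequent is valid --- and then feed this into a routine induction over the proof tree, whose leaves are the axiom $(\mathit{ax})$ and the rules whose premise is $\true$. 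Throughout, all SPs occurring in $\phi$ are assumed well defined, so the valuations $\val(\cdot)$ used below exist.

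The verification of local soundness splits into three blocks. The \emph{FOL rules} of Tables~\ref{table:Rules of FOL} and \ref{table:Other Rules of FOL} are the classical Gentzen rules, and their soundness against the first-order clauses of Def.~\ref{def:Valuation of sDTL Formulas} is standard; the only points to watch are the eigenvariable side conditions for $(\forall r)$ and $(\exists l)$ and the admissibility of substitution --- guaranteed here by bound-variable renaming --- for $(\forall l)$ and $(\exists r)$. The \emph{rules for sequential SPs} of Table~\ref{table:Rules for Closed SPs} are checked directly against Def.~\ref{def:Valuation of Closed SPs} and Def.~\ref{def:Valuation of sDTL Formulas}. For the atomic-program rules one unfolds $\mval(\alpha)$: e.g. for $(\alpha,\Box\phi)$ one uses that $\val(\alpha)$ consists of length-$2$ traces $ss'$, so such a trace lies in $\val_\pi(\Box\phi)$ iff $s,s'\in\val(\phi)$; rule $(x:=e)$ matches the clause for $x:=e\nex\alpha'$ combined with the substitution lemma for terms; and $(\epsilon),(\noth),(\halt)$ follow from the respective clauses. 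For the composite-program rules one uses $\val(p\seq q)=\val(p)\circ\val(q)$, $\val(p\cho q)=\val(p)\cup\val(q)$, $\val(p^*)=\bigcup_{n}\val^n(p)$ and the concatenation structure of $\circ$ on traces. The generalization rules $([],\mathit{gen}),(\la\ra,\mathit{gen})$ and the induction and convergence rules $(\mathit{ind}),(\mathit{con})$ are standard dynamic-logic arguments, and $([*]),(\la *\ra)$ are derived from them exactly as in~\cite{Harel00}.

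The third block --- the \emph{rewrite rules for parallel SPs} of Table~\ref{table:Rules for Parallel SPs} --- is where I expect the main obstacle and where Prop.~\ref{prop:trecs property} and Prop.~\ref{prop:Arden's Rule in SPs} are needed. For the structural rules $(r1)$ and $(r2)$ I would first establish a \emph{congruence property}: if $p\equiv q$ for closed $p,q$, then $\val(\phi\{p\})=\val(\phi\{q\})$ and $\val(p'\{q\})=\val(p'\{r\})$ whenever $q\equiv r$, proved by induction on the program hole; this works because every clause of Def.~\ref{def:Valuation of Closed SPs} and Def.~\ref{def:Valuation of sDTL Formulas} is compositional in the valuations of the immediate subterms, the only delicate case being $\para(\dots)$, where one needs that $\Par$ depends only on the semantics of its components --- which is precisely Prop.~\ref{prop:trecs property} (together with Def.~\ref{def:Trecs of SPs} and Def.~\ref{def:Par}). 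When a hole exposes an open program the statement is vacuous, so closedness must be tracked along rewrites, which Def.~\ref{def:Restriction on Parallel SPs} and the closed-SP grammar make automatic. The type-$(b)$ rewrites are semantic identities between sequential programs following from the algebra of $\circ$ on sets of traces: the unit law $\bff{S}\circ A=A\circ\bff{S}=A$, the $\halt$/$\noth$ absorption and unit laws, associativity, distribution over $\cup$, and loop unfolding. The type-$(c)$ rewrites essentially \emph{re-state} clauses of Def.~\ref{def:Valuation of the Trecs}: $(\para,\noth),(\para,\halt),(\para,\mathit{dis})$ correspond to cases~1 and 2 there together with distributivity of $\Par$ over $\tau$ and, again, Prop.~\ref{prop:trecs property}, while $(\para,\mathit{mer})$ is literally case~4, since $\valt$ is \emph{defined} through $\Merge$. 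Finally, rule $(\para,\mathit{seq})$ asserts $\para(p_1,\dots,p_n)\equiv\ToSeq(\para(p_1,\dots,p_n))$; I would argue that every reduction step used to build the equation system~(\ref{equ:ToSeq}) in Algorithm~\ref{alg:ToSeq} preserves semantics by the already-proved soundness of $(r2)$ and of the type-$(b)$ and $(c)$ rules, so the $n$ equations hold in the trace model; then each elimination step of the loop is justified by Arden's rule (Prop.~\ref{prop:Arden's Rule in SPs}), whose hypothesis $q\not\equiv\noth$ for the self-referential coefficient must be checked, after which the output $l_1$ is semantically equal to the input. Termination and well-formedness of the equation system were already established in Sect.~\ref{section:Rewrite Rules for Parallel SPs}. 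The genuine difficulty is therefore concentrated in this last rule: confirming that all reductions feeding the equation system are semantics-preserving and that Arden's rule applies at each elimination step; once that is in place, the theorem follows from the induction set up in the first paragraph.
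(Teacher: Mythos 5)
Your overall architecture matches the paper's: reduce the theorem to per-rule soundness via Prop.~\ref{prop:sequent}, verify the type-$(a)$ rules and the FOL rules directly against Def.~\ref{def:Valuation of Closed SPs} and Def.~\ref{def:Valuation of sDTL Formulas}, prove the type-$(b)$ and type-$(c)$ rewrites as semantic identities, and justify $(\para,\mathit{seq})$ by Arden's rule (Prop.~\ref{prop:Arden's Rule in SPs}) together with the soundness of the one-step rewrites. All of that is fine and is essentially what the paper does.

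There is, however, a genuine gap in your treatment of the structural rules $(r1)$ and $(r2)$, exactly at the point you call ``the only delicate case.'' You propose a semantic congruence: if $q\equiv r$ then $\val(p'\{q\})=\val(p'\{r\})$, with the parallel case discharged by the claim that $\Par$ ``depends only on the semantics of its components,'' citing Prop.~\ref{prop:trecs property}. Two problems. First, the rewrites licensed by $(r2)$ routinely fire on \emph{open} subprograms sitting under a $\para$ (e.g.\ distributing $\cup$ over $\seq$ inside a component that emits or tests signals); open SPs have no valuation, so $q\equiv r$ is not even defined there, and the case is not vacuous --- soundness of $(r1)$ for the enclosing closed formula still has to be established. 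Second, even for closed components the claim does not follow from Prop.~\ref{prop:trecs property}: that proposition only says $\val(p)=\bigcup_{r\in\tau(p)}\val(r)$, whereas $\Par$ is computed by $\valt$ and $\Merge$, which inspect the \emph{syntactic} micro-step and instant structure of the trecs (signal emissions, tests, how many macro events remain); the macro-step trace semantics does not determine that structure, so $\val(q)=\val(r)$ does not by itself give $\val(\para(\dots,q,\dots))=\val(\para(\dots,r,\dots))$.

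The paper closes this gap differently: for the type-$(b)$ rewrites and $(\para,\mathit{dis})$ it proves the stronger, purely syntactic fact $\tau(p\{q\})=\tau(p\{r\})$ (equality of trec \emph{sets}, not of valuations), which makes preservation under $\Par$ immediate; for $(\para,\noth)$, $(\para,\halt)$ and $(\para,\mathit{mer})$ it argues directly from the corresponding cases of Def.~\ref{def:Valuation of the Trecs}, using that the output of $\Merge$ depends only on the leading macro events; and it organizes nested applications of $(r2)$ by an auxiliary induction on the level of the rewrite relation (Def.~\ref{def:Level of a Rewrite Relation}, Lemmas~\ref{lemma:parallel-lemma} and~\ref{lemma:parallel-lemma-2}). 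You would need to replace your congruence step with an argument of this syntactic kind; as written, that step does not go through.
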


To prove Theorem~\ref{theorem:Soundness of SDL}, it is equivalent to prove that each rule in Table~\ref{table:Rules for Closed SPs}, \ref{table:Rules for Parallel SPs}, \ref{table:Rules of FOL} is sound.

For the rules in Table~\ref{table:Rules for Closed SPs}, we only need to prove the soundness of the rules of type $(a)$, since all rules of type $(b)$ are inherited from FODL and DTDL.
Based on the semantics of SDL formulas, the soundness of the rules of type $(a)$ is given in Appendix~\ref{section:Proofs for some Propositions and the Soundness of SDL}.
For the proofs of the soundness of the rules $(\seq, \Box\phi)$ and $(*, \Box\phi)$, one can refer to~\cite{Platzer07}.
For the proofs of the soundness of other rules of type $(b)$, one can refer to~\cite{Harel00}.

The soundness of a rewrite rule means that given two closed SPs $p$ and $q$, if $p\red q$, then $\val(p) = \val(q)$.
In Appendix~\ref{section:Proofs for some Propositions and the Soundness of SDL}, we prove the soundness of all rewrite rules of Table.~\ref{table:Rules for Parallel SPs}.
We firstly prove the soundness of the rules of the types $(b)$ and $(c)$ (see Prop.~\ref{prop:Soundness of the Rewrite Rules for Closed Sequential Programs} and \ref{prop:Soundness of the Rewrite Rules for Parallel Programs}) based on their semantics,
then we prove the soundness of the rules $(\mathit{r1})$ and $(\mathit{r2})$ (see Prop.~\ref{prop:Soundness of the Rewrite Rule r1} and \ref{prop:Soundness of the Rewrite Rule r2}) by induction on the structures of the program holes and the level of a rewrite relation (as defined in Def.~\ref{def:Level of a Rewrite Relation}).
The soundness of rule $(\para, \mathit{seq})$ is direct according to Arden's rule (see Prop.~\ref{prop:Soundness of the Rewrite Rule of (para, seq)}).

The soundness of the FOL rules in Table~\ref{table:Rules of FOL} and \ref{table:Other Rules of FOL} are directly from FOL.

\subsection{Relative Completeness of SDL Calculus}
Since SDL includes AFOL in itself, due to G\"odel's incompleteness theorem~\cite{Godel31}, SDL is not complete.
In the following, we consider the relative completeness~\cite{Cook78} of SDL.

Given an SDL formula $\phi$, we use $\reld \phi$ to represent that $\phi$ is derivable in the proof system consisting of all rules of Table~\ref{table:Rules for Closed SPs}, \ref{table:Rules for Parallel SPs}, \ref{table:Rules of FOL} and
all tautologies in AFOL as axioms. Intuitively, $\reld \phi$ means that $\phi$ can be transformed into a set of pure AFOL formulas in the proof system of SDL.

The relative completeness of SDL is stated as the following theorem.

\begin{mytheo}[Relative Completeness of SDL]
\label{theorem:Relative Completeness of SDL}
    Given an SDL formula $\phi$, if $\models \phi$, then $\reld\phi$.
\end{mytheo}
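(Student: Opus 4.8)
The plan is to adapt Harel's arithmetical relative-completeness method for first-order dynamic logic~\cite{Harel00} to the trace/reaction semantics and the parallel operator of SDL. The argument has three ingredients: (i) \emph{elimination of parallelism}; (ii) an \emph{expressiveness lemma} stating that every SDL formula is semantically equivalent to a pure AFOL formula; and (iii) a structural induction showing that $\models\phi$ implies $\reld\phi$, carried out with the rules of Tables~\ref{table:Rules for Closed SPs}--\ref{table:Rules of FOL} together with the AFOL axioms. For (i): whenever a parallel block $\para(p_1,\dots,p_n)$ occurs inside $\phi$, rule $(\para,\mathit{seq})$ together with the structural rule $(\mathit{r1})$ lets us replace it in place by the sequential program $\ToSeq(\para(p_1,\dots,p_n))$; soundness of this rewrite follows from Arden's rule (Prop.~\ref{prop:Arden's Rule in SPs}), and termination of $\ToSeq$ from the discussion of Algorithm~\ref{alg:ToSeq}. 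Working outward from the innermost parallel blocks we obtain an equivalent formula $\chi$, with $\reld(\phi\leftrightarrow\chi)$ derivable, all of whose programs are $\para$-free members of $\Cl(\SP)$. Hence it suffices to establish relative completeness for such formulas.

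For (ii), fix the finite set of variables relevant to $\chi$. Using a pairing function we code a state (restricted to these variables) by an integer, and using G\"odel's $\beta$-function we code a finite trace $s_1\dots s_k$ by a single integer; both codings are AFOL-definable since AFOL contains Peano arithmetic. By induction on the structure of a $\para$-free sequential SP $p$ we build an AFOL formula $\mathrm{Run}_p(c)$ expressing ``$c$ codes a trace in $\val(p)$'': the macro-event case follows the micro-step structure of $\mval(\alpha)$ in Def.~\ref{def:Valuation of Closed SPs}; $p\seq q$ is handled by splicing two coded traces at their shared endpoint; $p\cho q$ by a disjunction; and $p^*$ by $\beta$-coding a finite list of $p$-traces chained end to end. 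Then $[p]\phi$ is equivalent to the AFOL formula saying ``for every $c$ coding a $p$-trace from the current state, $\phi^\flat$ holds at the last state of $c$'', and $[p]\Box\phi$ to the analogous formula with an extra universal quantifier over the position of an intermediate state; here $\phi^\flat$ is the AFOL equivalent of the (smaller) subformula $\phi$. With the Boolean and $\forall$ cases being trivial, this gives: every SDL formula $\psi$ over the fixed variables has an AFOL equivalent $\psi^\flat$ with $\val(\psi)=\val(\psi^\flat)$.

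For (iii), we strengthen the statement to: $\reld(\psi\leftrightarrow\psi^\flat)$ for every $\psi$, proved by induction on formula structure. Granting this, if $\models\chi$ then $\models\chi^\flat$, so $\chi^\flat$ is a valid AFOL formula, hence derivable from the AFOL axioms, so $\reld\chi^\flat$ and therefore $\reld\chi$, and finally $\reld\phi$. In the induction, AFOL atoms are their own equivalents and the connectives $\neg,\wedge,\forall$ propagate through the rules of Table~\ref{table:Rules of FOL}. The case $[p]\phi$ with sequential $p$ needs a secondary induction on $p$: $(\psi?),(x:=e),(\epsilon),(\noth),(\halt)$ handle the atomic programs, $(\seq,\phi)$ the sequence, $(\cup)$ the choice, and $p^*$ is handled by rule $([*])$ instantiated with the loop invariant supplied by (ii) --- the AFOL formula defining the set of states reachable from the current state by finitely many iterations of $p$, with fresh variables freezing that state; the three premises of $([*])$ are then valid AFOL formulas, hence $\reld$-derivable, and $([],\mathit{gen})$ propagates the already-established equivalence of the postcondition. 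The temporal case $[p]\Box\phi$ reduces to the previous one: $(\alpha,\Box\phi)$ deals with macro events, $(\seq,\Box\phi)$ and $(\cup)$ with sequences and choices, and $(*,\Box\phi)$ rewrites $[p^*]\Box\phi$ as $[p^*][p]\Box\phi$, where $[p]\Box\phi$ is first reduced by the sequential cases and then treated as an AFOL postcondition for the outer $[p^*]$. Dually, $\la p\ra\Diamond$-formulas use $(\mathit{con})$ and $(\la *\ra)$ with an arithmetically defined variant.

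The hard part will be the expressiveness lemma (ii): making the G\"odel coding faithfully capture (a) the two-instant, micro-step-structured semantics of a macro event, (b) concatenation of traces at shared endpoints under $\seq$ and under $*$, and (c) the intermediate-state quantification required by the $\Box$ modality. A second, related difficulty is the loop case of (iii): extracting from (ii) an AFOL loop invariant for $([*])$ (respectively an AFOL variant for $(\la *\ra)$), handling the fresh-variable bookkeeping that freezes the initial state, and verifying that the resulting side conditions are genuinely valid so that they are discharged by the AFOL axioms. The parallelism-elimination step (i) is comparatively routine once termination of $\ToSeq$ is in hand, which the paper has already argued.
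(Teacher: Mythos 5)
Your proposal is correct and follows essentially the same route as the paper: Harel-style arithmetical relative completeness, with parallelism eliminated up front by $(\para,\mathit{seq})$/$(\mathit{r1})$ via Arden's rule, expressibility of SDL formulas in AFOL via G\"odel coding of traces (which the paper delegates to~\cite{Harel79,Harel00} and extends to $[p]\Box\phi$ by the soundness of the rules $(\alpha,\Box\phi)$, $(\seq,\Box\phi)$, $(\cup)$, $(*,\Box\phi)$), and an induction on program structure that discharges loops with $([*])$ and $(\la *\ra)$ using arithmetically defined invariants and variants. The only difference is organizational: you prove the provable equivalence $\reld(\psi\leftrightarrow\psi^\flat)$ directly by a single induction, whereas the paper routes the same ingredients through a CNF decomposition and the four conditions of its ``main theorem'' (Theorem~\ref{theorem:The ``Main Theorem''}), the fourth condition being exactly your temporal case.
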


The main idea behind the proof of Theorem~\ref{theorem:Relative Completeness of SDL} follows the proof of FODL initially proposed in~\cite{Harel79}, where a theorem called ``the main theorem'' (Theorem 3.1 of \cite{Harel79}) was proposed as the skeleton of the whole proof.
We give our proof by augmenting that main theorem with one more condition for the temporal dynamic formulas $[p]\Box\phi$, which are new in SDL.
Our main theorem is stated as follows.

\begin{mytheo}[The ``Main Theorem'']
\label{theorem:The ``Main Theorem''}
    SDL is relatively complete if the following conditions hold:
    \begin{enumerate}[(i)]
        \item SDL formulas are expressible in AFOL.
        \item For any AFOL formulas $\phi^\flat$ and $\psi^\flat$, if $\models \phi^\flat \to \mathop{op}\psi^\flat$, then $\reld \phi^\flat\to \mathop{op}\psi^\flat$.
        \item For any SDL formulas $\phi$ and $\psi$, if $\reld \phi\to \psi$, then $\reld \mathop{op} \phi\to \mathop{op} \psi$.
        \item For any AFOL formulas $\phi^\flat$ and $\psi^\flat$, if $\models \phi^\flat\to [p]\Box\psi^\flat$, then $\reld \phi^\flat\to [p]\Box\psi^\flat$; and
        if $\models \phi^\flat\to \la p\ra \Diamond\psi^\flat$, then $\reld\phi^\flat\to \la p\ra\Diamond\psi^\flat$.
    \end{enumerate}
    In the above conditions,  $\mathop{op}\in \{[p], \la p\ra, \forall x, \exists x\}$.
    We use $\phi^\flat$ to stress that $\phi$ is an AFOL formula.
\end{mytheo}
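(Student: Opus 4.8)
The plan is to follow Harel's strategy for relative completeness of FODL (Theorem~3.1 of~\cite{Harel79}), adapted to the extra temporal modality $[p]\Box$ of SDL. The entire argument reduces to one lemma, proved by structural induction on the SDL formula $\phi$: \emph{for every SDL formula $\phi$ there is an AFOL formula $\phi^\flat$ with $\models\phi\leftrightarrow\phi^\flat$ — which exists by condition~(i) — such that $\reld\phi\leftrightarrow\phi^\flat$}, i.e.\ both sequents $\phi\seqArrow\phi^\flat$ and $\phi^\flat\seqArrow\phi$ are derivable. Granting this lemma, the theorem is immediate: if $\models\phi$ then $\models\phi^\flat$, so $\phi^\flat$ is a valid AFOL formula and hence an axiom of $\reld$; combining $\reld\phi^\flat$ with $\reld\phi^\flat\to\phi$ by modus ponens yields $\reld\phi$.

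For the induction, the base cases ($\phi$ equal to $\true$ or an atom $\theta(e_1,e_2)$) are trivial with $\phi^\flat\dddef\phi$. The Boolean cases ($\neg$, $\wedge$, and the derived $\vee,\to$) use the induction hypothesis together with the propositional rules of Table~\ref{table:Rules of FOL}, taking $(\neg\psi)^\flat\dddef\neg\psi^\flat$ and $(\psi_1\wedge\psi_2)^\flat\dddef\psi_1^\flat\wedge\psi_2^\flat$. For $\phi=\forall x.\psi$: the induction hypothesis gives $\reld\psi\leftrightarrow\psi^\flat$, and applying condition~(iii) with $\mathop{op}=\forall x$ to each direction gives $\reld\forall x.\psi\leftrightarrow\forall x.\psi^\flat$, so $(\forall x.\psi)^\flat\dddef\forall x.\psi^\flat$; the case $\exists x$ is analogous. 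For $\phi=[p]\psi$: by the induction hypothesis $\reld\psi\leftrightarrow\psi^\flat$, so condition~(iii) with $\mathop{op}=[p]$ gives $\reld[p]\psi\leftrightarrow[p]\psi^\flat$. Now $[p]\psi^\flat$ has an AFOL equivalent $\chi^\flat\dddef([p]\psi^\flat)^\flat$ by condition~(i), and I claim $\reld[p]\psi^\flat\leftrightarrow\chi^\flat$: the implication $\chi^\flat\to[p]\psi^\flat$ is an instance of condition~(ii) with $\mathop{op}=[p]$, hence derivable; for the converse, rewrite $[p]\psi^\flat\to\chi^\flat$ contrapositively as $\neg\chi^\flat\to\la p\ra\neg\psi^\flat$ (using $\neg[p]\psi^\flat=\la p\ra\neg\psi^\flat$), which — since $\neg\chi^\flat$ and $\neg\psi^\flat$ are AFOL — is an instance of condition~(ii) with $\mathop{op}=\la p\ra$, so $\reld\neg\chi^\flat\to\la p\ra\neg\psi^\flat$ and propositional reasoning restores $\reld[p]\psi^\flat\to\chi^\flat$. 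Chaining yields $\reld[p]\psi\leftrightarrow\chi^\flat$; set $([p]\psi)^\flat\dddef\chi^\flat$. The diamond case $\la p\ra\psi=\neg[p]\neg\psi$ is a derived connective.

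The temporal case $\phi=[p]\Box\psi$ follows the same pattern, with condition~(iv) playing the role of condition~(ii) (again exploiting that (iv) is stated for the dual pair $[p]\Box$, $\la p\ra\Diamond$, so both directions of the AFOL-equivalence are available), \emph{provided} one first establishes the congruence $\reld\psi\leftrightarrow\psi^\flat\Rightarrow\reld[p]\Box\psi\leftrightarrow[p]\Box\psi^\flat$. This congruence is not a direct instance of condition~(iii), but it is derivable by a secondary induction on the syntactic structure of $p$: the atomic cases are handled by $(\noth)$, $(\halt)$ and $(\alpha,\Box\phi)$, the last rewriting $[\alpha]\Box\psi$ as $\psi\wedge[\alpha]\psi$ and thereby reducing to already-treated formulas; the composite cases use $(\seq,\Box\phi)$, $(\cup)$ and $(*,\Box\phi)$ to push $\Box$ inward, with condition~(iii) for the plain modalities $[q]$, $[q^*]$ restoring congruence under the outer modalities and $(*,\Box\phi)$ reducing the loop case $[q^*]\Box\psi\equiv[q^*][q]\Box\psi$ to $[q^*]$ applied to the structurally simpler $[q]\Box\psi$; and a parallel program is eliminated first via $(\para,\mathit{seq})$ and $(r1)$, reducing to the sequential sub-induction.

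The main obstacle is precisely this flattening of a modality applied to an AFOL formula back into AFOL. Two points must be got right. First, conditions~(ii) and~(iv) supply only one direction of each such equivalence; the other direction has to be recovered through the \emph{dual} modality together with negation — which is why it is essential that (ii) is phrased for the pairs $[p],\la p\ra$ and $\forall x,\exists x$, and (iv) for the pair $[p]\Box,\la p\ra\Diamond$. Second, the congruence of $[p]\Box$ in its formula argument, which must be in place before condition~(iv) can even be invoked, falls outside condition~(iii) and has to be derived from the $\Box$-elimination rules of Table~\ref{table:Rules for Closed SPs} by the auxiliary induction on program structure sketched above (with parallel programs first reduced via $(\para,\mathit{seq})$, whose termination is guaranteed by the discussion of Algorithm~\ref{alg:ToSeq}). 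Everything else — the outer induction scheme and the Boolean and quantifier cases — is routine once the correct lemma to induct on has been isolated.
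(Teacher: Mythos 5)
Your proof is correct, but it takes a genuinely different route from the paper. The paper follows Harel's template literally: it converts $\phi$ to conjunctive normal form and inducts on the number of modal operators and quantifiers prefixed to dynamic formulas in each clause, handling a clause $\neg\phi_1\to\mathop{op}\phi_2$ by semantically replacing $\neg\phi_1$ and $\phi_2$ with AFOL equivalents (condition~(i)), deriving the AFOL-level implication by condition~(ii) or~(iv), and re-attaching $\phi_2$ under $\mathop{op}$ via condition~(iii). You instead prove the stronger \emph{provable expressibility} lemma $\reld\phi\leftrightarrow\phi^\flat$ by structural induction on $\phi$, after which relative completeness is a single cut against the AFOL axiom $\phi^\flat$. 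The price of your route is the two technical lemmas you correctly isolate: recovering the missing direction of each flattening step through the dual modality plus negation, and the congruence of $[p]\Box$ in its formula argument, which falls outside condition~(iii) and needs the secondary induction on $p$ through the rules $(\alpha,\Box\phi)$, $(\seq,\Box\phi)$, $(\cup)$, $(*,\Box\phi)$ and $(\para,\mathit{seq})$/$(r1)$. The payoff is that you are never forced to apply condition~(iv) to a formula $[p]\Box\phi_2$ whose argument $\phi_2$ is not yet AFOL --- which is precisely the step the paper's own proof of the $[p]\Box\phi_2$ clause performs without first flattening $\phi_2$, and which strictly speaking requires either your congruence lemma or a prior appeal to condition~(i) on $\phi_2$. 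So your decomposition is not only valid but patches the thinnest point of the published argument; the paper's version remains shorter and closer to the FODL original.
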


The only differences between Theorem~\ref{theorem:Relative Completeness of SDL} and the main theorem in~\cite{Harel79} are:
1) we replace all ``FODL formulas'' with ``SDL formulas'' in the context;
2) we add a new condition (the condition \rmn{4}) to our theorem for temporal dynamic formula $[p]\Box\phi$ and its dual form $\la p\ra \Diamond\phi$.
With Theorem~\ref{theorem:The ``Main Theorem''}, we can prove the relative completeness of SDL based on these 4 conditions (See Appendix~\ref{section:Proofs for the Relative Completeness of SDL}).
During the process of the proof the new-adding condition \rmn{4} plays a central role when proving the relative completeness of SDL formulas of the forms $[p]\Box\phi$ and $\la p\ra\Diamond\phi$.

The rest remains to prove the 4 conditions of Theorem~\ref{theorem:The ``Main Theorem''}.
For the condition \rmn{1}, the expressibility of SDL formulas in AFOL is directly from that of FODL formulas because, intuitively, a sequential program in SDL can be seen as a regular program in FODL if we ignore the differences between macro steps and micro steps which only play their roles in parallel SPs.
The proofs of the conditions \rmn{2} and \rmn{3} mainly follow the corresponding proofs of FODL in~\cite{Harel79}, where the main difference is that in the proof steps we also need to consider the cases when an SP is a special primitive (such as the macro event) and when an SP is a parallel program.
The condition \rmn{4} mainly follows the idea behind the proof of the condition $\rmn{2}$ but is adapted to fit the proofs of the temporal dynamic formulas $[p]\Box\phi$ and $\la p\ra\Diamond\phi$.
Appendix~\ref{section:Proofs for the Relative Completeness of SDL} gives the proofs of these 4 conditions.

\section{SDL in Specification and Verification of SyncCharts}
\label{section:SDL in Specification and Verification of SyncCharts}
In this section, we show that SDL is a useful framework in specification and verification of synchronous models.
We show that synchronous models and their properties can be specified by SDL in a natural way and verified in the proof system of SDL in a compositional way. 
Among many synchronous models, in this paper, we choose SyncChart~\cite{Andre03} as an example.
SyncChart has the same semantics as Esterel and has been embedded into the famous industrial tool SCADE.
SyncChart captures the most essential features of synchronous models and is in the form of automata, which is easy to be transformed into SPs.


The example proposed here is not intended to show the applicability of SDL in practical verification of reactive systems, but to show the connection between SP and the synchronous models used in practice and the potential of SDL as a verification framework.

\subsection{Encoding Basic SyncCharts}
\label{section:Encoding Basic SyncCharts}
In this subsection, we use SPs to specify basic syncCharts.
As we will see, the process of encoding a basic syncChart as an SP is quite straightforward as SP models can rightly capture the features of synchronous models.
In the following, we give two toy examples to show SPs can encode syncCharts.
The first example is about a sequential system, while the second one is about a concurrent system.
Note that we only give examples and do not intend to give a general algorithm for the automatic encoding, which is out of the scope of this paper.

Currently, SPs only support encoding the basic syncCharts, which do not include advanced synchronous features such as preemption, hierarchy and so on (cf.~\cite{Andre03}).
However, in terms of expressiveness, considering the basic syncCharts is enough because those advanced features essentially do not inhence the expressive power of syncCharts.
They are just for engineers to model in a more convenient and neater way.


For the first example, we consider a simple circuit as considered in~\cite{Andre03}, called a ``frequency divider'', which is modelled as an syncChart, named ``FDIV2'', shown in Fig.~\ref{figure:A Frequency Divider}.
The frequency dividor waits for a first occurrence of a signal $T$, and then emits a signal $C$ at every other occurrence of $T$.
Its syncChart FDIV2 consists of two states, with the $\mathit{off}$ state as the initial state.
Each transition in a syncChart represents one reaction of a reactive system, i.e., an instant at which all events occur simultaneously in a logical order.
It exactly corresponds to a macroevent in SPs.
The label on a transition is called the ``trigger and effect'', which is in the form of $\mathit{trigger}/\mathit{effect}$, representing the actions of reading input signals and sending output signals respectively at each instant.
For example, in FDIV2, the label ``$T/C$'' means that at an instant, if the signal $T$ is triggered, the signal $C$ is emitted.
A label corresponds to an event in SPs.

\begin{figure}[htpb]
        \centering
        \includegraphics[width=0.8\linewidth]{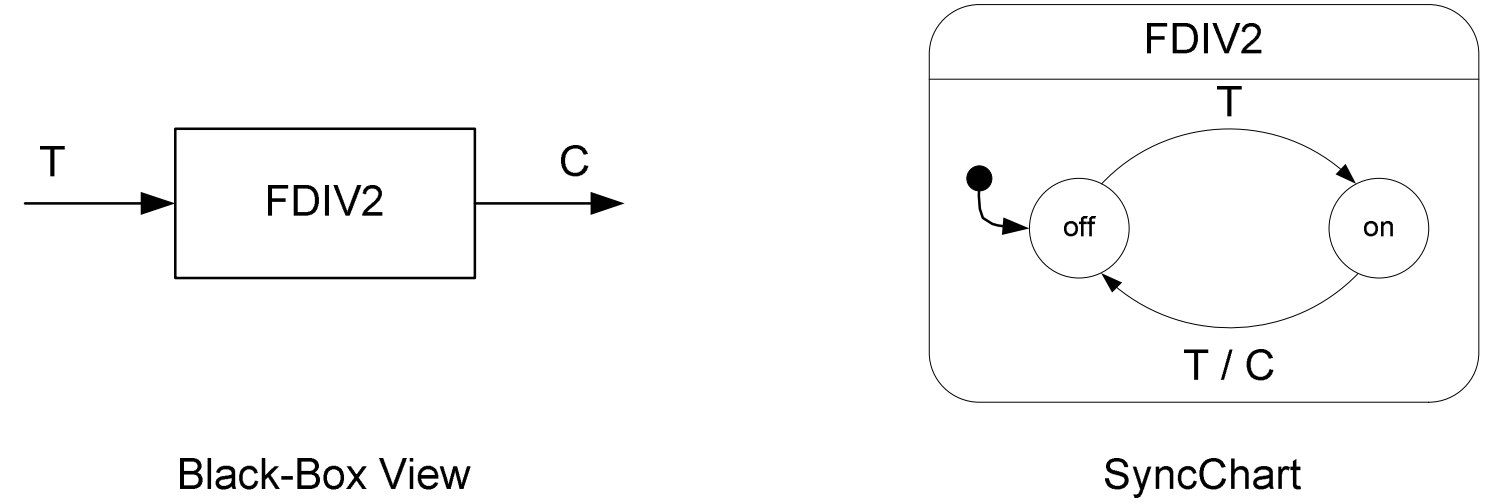}
        \caption{A Frequency Divider}
        \label{figure:A Frequency Divider}
    \end{figure}

The behaviour of FDChart is that
\begin{enumerate}[(1)]
    \item At the state $\mathit{off}$, the syncChart waits for a signal $T$ and moves to the state $\mathit{on}$;
    \item at the state $\mathit{on}$, it waits for a signal $T$ and emits a signal $C$ at the same instant, and moves to the state $\mathit{off}$.
\end{enumerate}
In SDL, let $\sig_t$, $\sig_c$ represent the signals $T$ and $C$ respectively, we can encode FDChart as an open SP
$$
P_{\mathit{FD}} = ((\bar{\sig_t}?\nex \epsilon)^*\seq \hat{\sig_t}?\nex \epsilon\seq (\bar{\sig_t}?\nex \epsilon)^*\seq \hat{\sig_t}?\nex \sig_c\nex \epsilon)^*,
$$
where the event $\hat{\sig_t}?\nex \epsilon$ models the transition from the  state $\mathit{off}$ to the  state $\mathit{on}$, on which the signal $T$ is triggered,
while the event $\hat{\sig_t}?\nex \sig_c\nex \epsilon$ models the transition from the state $\mathit{on}$ to the state $\mathit{off}$, on which both the signals $T$ and $C$ are triggered.
The logical order between $\hat{\sig_t}$ and $\sig_c$ in the event $\hat{\sig_t}?\nex \sig_c\nex \epsilon$ indicates the ``trigger - effect'' relation between the two signals $T$ and $C$.
The program $(\bar{\sig_t}?\nex \epsilon)^*$ means to wait the signal $\sig_t$ without doing anything.

FDChart looks simpler than the program $P_{\mathit{FD}}$ because it omits the behaviour of ``waiting the signal'' on its graph (, which should be a self-loop added on the state $\mathit{off}$ or $\mathit{on}$).
In SPs, we can actually define a syntactic sugar for this behaviour: for any signal $\sig$ and event $\alpha$,
\begin{equation}
\label{equ:sugar}
\hat{\sig}(x)?? \alpha \dddef (\bar{\sig}\nex \epsilon)^*\seq \hat{\sig}(x)?\nex \alpha,
\end{equation}
which means that the program waits $\sig$ until it it emitted, and then proceeds as $\alpha$ (at the same instant).
With this shorthand, $P_{\mathit{FD}}$ can be rewritten as
$$
P_{\mathit{FD}} = (\hat{\sig_t}?? \epsilon \seq \hat{\sig_t}?? (\sig_c\nex \epsilon))^*.
$$

In the second example, we consider a simple circuit from~\cite{Andre03}, called a ``binary counter'',  which is modelled as a syncChart in Fig.~\ref{figure:A Binary Counter}.
The binary counter reads every signal $T$ and counts the number of occurrences of $T$ by outputting the signals $B0$ and $B1$ that represent the bit:
the present of a signal represents $1$, while the absent of a signal represents $0$.
The syncChart of the binary counter, called ``Cnt2'', is a parallel syncChart and is obtained by a parallel composition of two syncCharts that model circuits called ``flip-flops''~\cite{Andre03}.
A dashed line is to separate the two syncCharts running in parallel.
The execution mechanism of parallel syncCharts is the same as that of parallel SPs.
At each reaction, transitions in different syncCharts can be triggered simultaneously in a logical order.
Therefore, the behaviour of the parallel syncChart is deterministic.

\begin{figure}[htpb]
        \centering
        \includegraphics[width=0.5\linewidth]{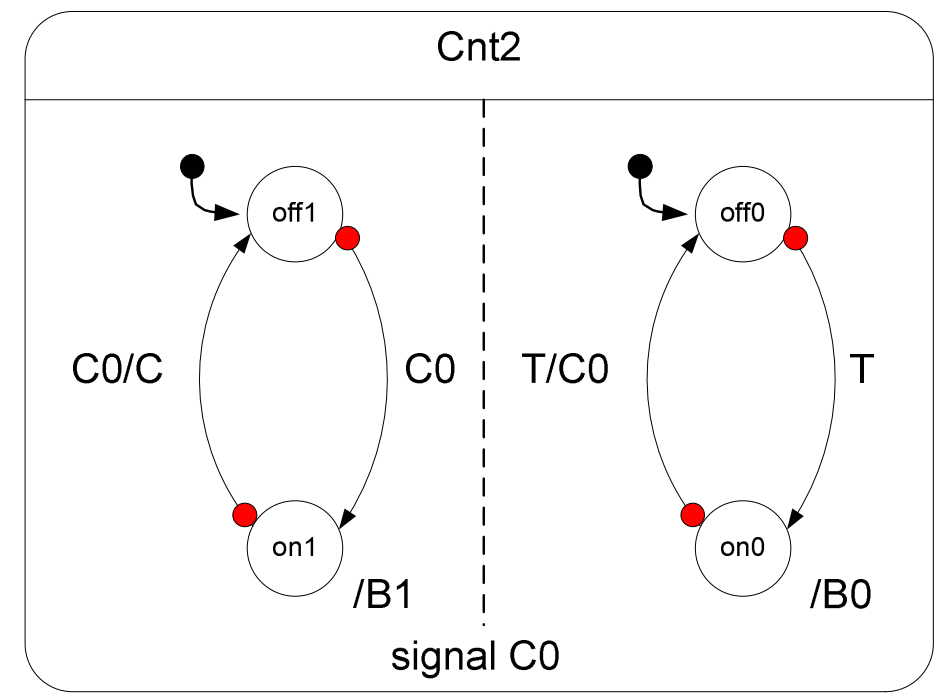}
        \caption{A Binary Counter}
        \label{figure:A Binary Counter}
    \end{figure}

In syncCharts, a state can be tagged with a $\mathit{trigger}/\mathit{effect}$ label and
special types of transitions were introduced to decide when the label on a state is triggered.
This does not increase the expressiveness of syncCharts but can reduce the number of states and transitions a syncChart has.
The special transitions (with a red circle at the tail of an arrow) appeared in Cnt2 are called ``strong abortion transitions''~\cite{Andre03}.
When a strong abortion transition is triggered, the label of the state that the transition is entering is triggered simultaneously, while the label of the state the transition is exiting cannot be triggered.

The behaviour of Cnt2 is given as follows:
\begin{itemize}
    \item The right syncChart:
        \begin{enumerate}[(1)]
            \item At the state $\mathit{off0}$, the syncChart waits for the signal $T$ and moves to the state $\mathit{on0}$, at the same instant, the signal $B0$ is emitted.
            \item At the state $\mathit{on0}$, the syncChart waits for the signal $T$ and if it is emitted, the signal $C0$ is emitted, and then the syncChart moves to the state $\mathit{off0}$.
        \end{enumerate}
    \item The left syncChart:
        \begin{enumerate}[(1)]
            \item At the state $\mathit{off1}$, the syncChart waits for the signal $C0$ from the right syncChart and moves to the state $\mathit{on1}$, at the same instant, the signal $B1$ is emitted.
            \item At the state $\mathit{on1}$, the syncChart waits for the signal $C0$ from the right syncChart and if it is emitted, the signal $C0$ is emitted, and then the syncChart moves to the state $\mathit{off0}$.
        \end{enumerate}
\end{itemize}
Let $\sig_t$, $\sig_{c0}$, $\sig_{b0}$, $\sig_{b1}$, $\sig_{c}$ represent the signals $T$, $C0$, $B0$, $B1$  and $C$ respectively, Cnt2 can be encoded as an SP as follows:
$$
\begin{aligned}
P_{\mathit{BC}} =&\ P_l\para P_r, \\
P_r =&\ (\hat{\sig_{t}}?? (\sig_{b0}\nex \epsilon) \seq \hat{\sig_{t}}?? (\sig_{c0}\nex \epsilon))^*,\\
P_l =&\ (\hat{\sig_{c0}}?? (\sig_{b1}\nex \epsilon) \seq \hat{\sig_{c0}}?? (\sig_{c}\nex \epsilon))^*.
\end{aligned}
$$
$P_l$, $P_r$ encode the left and right syncCharts respectively.

\subsection{Specifying and Proving Properties in SyncCharts}
\label{section:Specifying and Proving Properties in SyncCharts}

We consider a simple safety property for the first example discussed above, which says
\begin{center}``whenever the signal $\sig_c$ is emitted, the signal $\sig_t$ is emitted''.\end{center}
Since $P_{\mathit{FD}}$ is an open SP, we assume an environment $E$ for $P_{\mathit{FD}}$ which generally emits $\sig_t$ or does nothing at each instant.
To collect the information about the signals $\sig_t$ and $\sig_c$, we define two observers $O_t$ and $O_c$, which listen to these two signals and record their states within the local variables
$x$ and $y$ at each instant.

The property is thus specified in an SDL formula as follows:
$$
\phi_{\mathit{FD}} = (x=0\wedge y=0) \to [\para(P_{\mathit{FD}}, E, O_t, O_c)]\Box (y=1\to x=1),
$$
where
$$
\begin{aligned}
E =&\ (\sig_t\nex \epsilon\cup \epsilon)^*, \\
O_t =&\ (\hat{\sig_t}?\nex x:=1\nex \epsilon\cup \bar{\sig_t}?\nex x:=0\nex \epsilon)^*, \\
O_c =&\ (\hat{\sig_c}?\nex y:=1\nex \epsilon\cup \bar{\sig_c}?\nex y:=0\nex \epsilon)^*.
\end{aligned}
$$

\begin{figure}[htpb]
        \noindent\makebox[\textwidth]{%
        \scalebox{0.8}{
        \begin{tikzpicture}[->,>=stealth', node distance=3cm]
        \node[draw=none] (txt2) {
            $
            \infer[]
            {
                (1):\
                \phi_1, \psi_1\seqArrow [p^*_2][p_2]\Box\phi_2
            }
            {
                \infer
                {
                    \phi_1,\psi_1\seqArrow \psi_2
                }
                {
                    \surd
                }
                &
                \infer[^{(\forall r), (\to r), (x:=e)}]
                {
                    \phi_1,\psi_1\seqArrow \forall x. \forall y. (\psi_2\to [p_2]\psi_2)
                }
                {
                    \infer
                    {
                        \phi_1,\psi_1, \psi_2\seqArrow \psi_2
                    }
                    {
                        \surd
                    }
                }
                &
                \infer
                {
                    \phi_1, \psi_1\seqArrow \forall x.\forall y.(\psi_2\to [p_2]\Box\phi_2)
                }
                {
                    ...
                }
            }
            $
        };

        \node[draw=none, yshift=-1.75cm] at (txt2.south) (txt1) {
        $
        \infer[^{(\to r), (\para,\mathit{seq}), (\mathit{r1})}]
        {
            \cdot \seqArrow\phi_{\mathit{FD}}
        }
        {
            \infer[^{(*,\Box\phi)}]
            {
                \phi_1\seqArrow [P]\Box \phi_2
            }
            {
                \infer[^{([*])}]
                {
                    \phi_1\seqArrow [p^*_1][p_1]\Box\phi_2
                }
                {
                    \infer
                    {
                        \phi_1\seqArrow \psi_1
                    }
                    {
                        \surd
                    }
                    &
                    \infer
                    {
                        \phi_1\seqArrow\forall x.\forall y.(\psi_1\to [p_1]\psi_1)
                    }
                    {
                        ...
                    }
                    &
                    \infer[^{(\forall r), (\to r)}]
                    {
                        \phi_1\seqArrow\forall x.\forall y.(\psi_1\to [p_1]\Box\phi_2)
                    }
                    {
                        \infer[^{(\seq, \Box\phi), (\wedge r)}]
                        {
                            \phi_1, \psi_1\seqArrow [p_1]\Box\phi_2
                        }
                        {
                            \infer[^{(*, \Box\phi)}]
                            {
                                \phi_1, \psi_1\seqArrow [p^*_2]\Box\phi_2
                            }
                            {
                                (1)
                            }
                            &
                            \infer[]
                            {
                                \phi_1, \psi_1\seqArrow [p^*_2][p_3]\Box\phi_2
                            }
                            {
                                ...
                            }
                        }
                    }
                }
            }
        }
        $
        };

        \path
        ;

        \draw ([xshift=-0.37cm, yshift=-0.37cm]txt1.north) -- ([xshift=-0.37cm-1.5cm, yshift=-0.37cm]txt1.north) -- ([xshift=-0.37cm-1.5cm, yshift=0.12cm]txt2.south -| txt1.north);

        \end{tikzpicture}
        } 
        }

        \noindent\makebox[\textwidth]{%
        \scalebox{0.8}{
        \begin{tabular}{|l|}
        \toprule
        $\phi_1 = (x=0 \wedge y=0)$
        \ \
        $\phi_2 = (y=1\to x=1)$
        \ \
        $\psi_1 = \true$
        \ \
        $\psi_2 = \true$\\
        \hline
            $P = ((x:=0\nex y:=0\nex \epsilon)^*\seq (x:=1\nex y:=0\nex \epsilon)\seq (x:=0\nex y:=0\nex \epsilon)^*\seq (x:=1\nex y:=1\nex \epsilon))^*$
        \\
            $p_1 = (x:=0\nex y:=0\nex \epsilon)^*\seq (x:=1\nex y:=0\nex \epsilon)\seq (x:=0\nex y:=0\nex \epsilon)^*\seq (x:=1\nex y:=1\nex \epsilon)$
        \\
            $p_2 = (x:=0\nex y:=0\nex \epsilon)$
            \ \
            $p_3 = (x:=1\nex y:=0\nex \epsilon)\seq (x:=0\nex y:=0\nex \epsilon)^*\seq (x:=1\nex y:=1\nex \epsilon)$
        \\
        \bottomrule
        \end{tabular}
        }
        }
        \caption{The Derivation Procedure of $\phi_{\mathit{FD}}$}
        \label{figure:The Derivation Procedure of phiFD}
\end{figure}

Fig.~\ref{figure:The Derivation Procedure of phiFD} shows the process of proving $\phi_{\mathit{FD}}$.
By rule $(\para, \mathit{seq})$, the program $\para(P_{\mathit{FD}}, E, O_t, O_c)$ can be rewritten into a closed sequential program $P$
through a Brzozoski's procedure.
The proof tree is constructed by applying the rules of the SDL calculus reversely step by step.
To save spaces, we omit most of the branches of the proof tree by ``$...$'', and we merge several derivations into one by listing all the rules applied during these derivations on the right side of the derivation.
$\phi_{\mathit{FD}}$ is proved to be true if the AFOL formula at each leaf node is valid (indicated by a $\surd$ at each leaf node).
The AFOL formula at each leaf node can then be checked through an SAT/SMT solving procedure.

\ifx
We take the first example in Sect.~\ref{section:Encoding Basic SyncCharts} as a simple example to explain how to specify and prove properties of syncCharts.

In Sect.~\ref{section:Encoding Basic SyncCharts}, we see that $P_{\mathit{FD}}$ specifies FDChart as an open system. However, two problems arise when we want to specify properties of FDChart.
One is that we cannot specify the properties directly for $P_{\mathit{FD}}$ since $P_{\mathit{FD}}$ is an open program and in SDL
an open program does not have a semantics.
The other problem is that we cannot specify the properties concerning signals in syncCharts because in SDL, a signal is just a channel for communication between open programs, but in a syncChart, a signal is essentially a variable that carries values.

\ifx
When encoding this syncChart, we need to deal with two problems arising from the differences between syncCharts and SPs.
One problem is that in SDL we can only describe the properties of a closed SP, but the syncChart of the FD is an open system.
The other problem is that in SDL a signal is just a channel for communication between open programs and has no concrete semantics, while in a syncChart a signal is essentially a variable that carries values.
This means that in SDL we cannot directly specify the properties of syncCharts that concern signals.
\fi

To solve these two problems, we introduce auxiliary local variables and expressions to represent the information carried by the signals of a syncChart, so we can turn an open SP into a closed SP by eliminating the signals.
What expressions should be introduced depends on what kinds of properties we are about to specify.
In this example,
we use $x_t$, $x_c$ to represent the current numbers of the occurrences of the signals $T$ and $C$ respectively.
The program $P_{\mathit{FD}}$ can then be turned into an SP as:
$$
P'_{\mathit{FD}} = (\epsilon^*\seq x_t:=x_t + 1\nex \epsilon\seq \epsilon^*\seq x_t:=x_t + 1\nex x_c := x_c + 1\nex \epsilon)^*,
$$
where we remove the signal test $\hat{\sig_t}$ and replace the signals $\sig_t$, $\sig_c$ with two expressions $x_t:=x_t + 1$ and $x_c := x_c + 1$ respectively.
The program $P'_{\mathit{FD}}$ just captures the behaviour of FDChart, but from the perspective that FDChart is a closed system where the signals are invisible from outside.

\ifx
which contains two reactions in a loop.
The first reaction models the transition from the $\mathit{off}$ state to the $\mathit{on}$ state, on which the signal $T$ is triggered.
The second reaction models the transition from the $\mathit{on}$ state to the $\mathit{off}$ state, on which both the signals $T$ and $C$ are triggered.
The order in the second event indicates the ``trigger - effect'' relation between the two signals.
\fi

With $P'_{\mathit{FD}}$, we specify a property for FDChart, saying ``On the state $\mathit{off}$, the number of the occurrences of the signal $T$ is twice as much as the number of the occurrences of the signal $C$'', which is expressed as an SDL formula
$$
\phi_{\mathit{FD}} = (x_t=0\wedge x_c = 0) \to [P'_{\mathit{FD}}](x_t = 2\cdot x_c).
$$

Fig.~\ref{???} gives the process of proving $\phi_{\mathit{FD}}$.
Starting from the root node, the proof tree is formed by applying the rules of the SDL calculus reversely step by step.
$\phi_{\mathit{FD}}$ is proved to be true if the AFOL formula at each leaf node is valid (indicated by a $\surd$ at each leaf node).

\fi

\ifx
\subsection{Encoding Preemption in SyncCharts}
\label{section:Encoding Preemption in SyncCharts}

In this section, we show how SPs can express the advanced features of syncCharts through an example, which is, to use SPs to express the preemption in syncCharts.
Preemption is one of the most important feature in synchronous modelling languages such as SyncChart, Esterel, Quartz~\cite{???} and so on.
It is the possibility for the environment to interrupt the current execution of a syncChart at any reaction.
Such a mechanism greatly simplifies the form of modelling languages so as to reduce the burden for engineers.
Despite of this, as mentioned in Sect.~\ref{section:Introduction}, preemption does not support modular verification in terms of theorem proving.
In SPs, we do not provide a primitive for expressing preemption.
However, in the following, we show that by defining a suitable sugar, we can capture this mechanism in SPs.

Consider adding a transition labelled as ``$R$'' from BCChart (Fig.~\ref{???}) to a state named as ``$\mathit{ex}$'', as shown in Fig.~\ref{???},
the signal $R$ enables the preemption that interrupts any transitions in BCChart and triggers a transition to the state $\mathit{ex}$.
Here we in fact use the mechanism ``hierarchy'' in syncCharts to take the whole BCChart as a \emph{macrostate}~\cite{???}.
The transition from the macrostate to the state $\mathit{ex}$ is also a strong abortion transition, at any reaction, once the signal $R$ is triggered,
it is immediately triggered and all other transitions in the macrostate at the same reaction are forbidden.
Fig.~\ref{???} shows a Mealy machine as the behavioural semantics of this syncChart, where $\bar{X}$ represents that a signal $X$ is absent.
\fi

\section{Related Work}
\label{section:Related Work}
\subsection{Verification Techniques for Synchronous Models}

\ifx
Previous verification approaches for synchronous models are mainly based on model checking.
The compilers of synchronous programming languages, such as Esterel, Lustre and Quartz, have integrated the model checking procedure as one part of themselves~\cite{???,???,???}.
When model checking synchronous models, different specifications languages were adopted, such as synchronous observers~\cite{synchronous observers and the verification of reactive systems,From a synchronous declarative language to a temporal logic dealing with multiform time}, LTL~\cite{???}, clock constraints~\cite{???,???}, and so on.
Another fully automatic approach is based on SAT/SMT checking.

To tackle the state-explosion problem in model checking, automated theorem proving is often used as an alternate or complement technique.

SAT/SMT checking is also used to tackle the state-explosion problem and

To tackle the state-explosion problem, SAT/SMT checking is often adopted
\fi

Previous verification approaches for synchronous models are mainly based on model checking.
Different specification languages, such as synchronous observers~\cite{Halbwachs93b,Pilaud88}, LTL~\cite{Jagadeesan95} and clock constraints~\cite{Andre09}, were used to capture the safety properties. They were transformed into target models where reachability analysis was made.
For synchronous programming languages, the process of reachability analysis is often embedded into their compilers, for instance, cf.~\cite{Berry85,Halbwachs91}.

Despite for its decidability and efficiency on small size of state spaces, model checking suffers from the notorious state-explosion problem.
Recent years automated/interactive theorem proving, as a complement technique to model checking, has been gradually applied for analysis and verification of synchronous models in different aspects.
One hot research work is to use theorem provers like Coq~\cite{Bertot04} to mechanize and verify the compiling processes of synchronous programming languages, so that
the equivalence between compiled code and source code can be guaranteed~\cite{Bourke17,Berry19}.
SAT/SMT solving, as a fully automatic verification technique, was used for checking the time constraints in synchronous programming languages such as Lustre~\cite{Hagen08} and Signal~\cite{Ngo14}.
In~\cite{Talpin15}, a type theory was proposed to provide a compile-time framework for analyzing and checking time properties of Signal, through the inference of refinement types.

Rather than targeting on explicit synchronous languages, our proposed formalism focuses on a more general synchronous model SPs, extended from regular programs that are suitable for compositional reasoning.
As indicated in Sect.~\ref{section:Introduction}, SPs capture the essential features of synchronous models and ignore those which do not support compositional reasoning.
Different from synchronous languages that are totally deterministic, SPs have the extra expressive power to support program refinement with the non-deterministic operator $\cup$.
Similar to the type-theory approach~\cite{Talpin15}, instead of directly using SAT/SMT solving, SDL provides compositional rules for decomposing SPs according to their syntactic structures, so as to divide a big verification problem into small SMT-solving problems in derivation processes.

\cite{Florence19} proposed an equation theory for pure Esterel.
There, term rewrite rules were built for describing the constructive semantics of Esterel so that two different Esterel programs can be formally compared and their equivalences can be formally reasoned about.
\cite{Song21} proposed a so-called \emph{synchronous effects logic} for verifying temporal properties of pure Esterel programs.
A Hoare-style forward proving process was developed to compute the behaviours of Esterel programs as synchronous effects.
Then a term rewriting system was proposed to verify the temporal properties, which are also expressed as synchronous effects, of Esterel programs.

Compared to~\cite{Florence19,Song21}, the verification of SDL is not solely based on term rewriting, but also based on a Hoare-style program verification~\cite{ProgramVerifications09} process.
Instead of verifying by checking the equivalences or refinement relations between two programs,
we reason about the satisfaction relation between a program and a logic formula, in a form $[p]\phi$ or $[p]\Box\phi$.
In SPs, the rewrite rules built in Table~\ref{table:Rules for Parallel SPs} for reducing parallel SPs play a similar role as the rewrite rules defined in~\cite{Florence19,Song21} for symbolically executing parallel Esterel programs.
The synchronous effects used to capture the behaviours of Esterel programs in~\cite{Song21} was similar to the form of SPs.

In \cite{Gesell12}, a Hoare logic calculus was proposed for the synchronous programming language Quartz.
In that work, the authors manage to prove a Quartz program in a simple form in which there is no parallel compositions and all events in a macro step  are collected together in a single form called \emph{synchronous tuple assignment} (STA).
Such a simple form can be obtained either by manual encoding or by the compiler of Quartz in an automatic way.
The STA there actually corresponds to the macro event in SPs.
Compared to~\cite{Gesell12}, SP is not an synchronous language but a more general synchronous model.
Except for state properties, SDL also supports verifying safety properties with the advantage brought by dynamic logic to support formulas of the form $[p]\Box\phi$.


\ifx
During model checking procedures, SAT/SMT checking was often adopted for efficiency issues.

Such a procedure has even been embedded into the compilers of synchronous programming languages, such as the compilers for Esterel~\cite{???}, Lustre~\cite{???} and Quartz~\cite{???}.
When model checking synchronous models, different specifications languages were adopted, such as synchronous observers~\cite{synchronous observers and the verification of reactive systems,From a synchronous declarative language to a temporal logic dealing with multiform time}, LTL~\cite{???}, clock constraints~\cite{???,???} and so on.
Another fully automatic approach is based on SAT/SMT checking.
In this approach, synchronous languages, like Lustre~\cite{???} and Signal~\cite{???}, are directly translated into AFOL formulas, which can then be checked by an SMT solver, like Z3.

In recent years, as the development of general theorem provers such as Coq~\cite{Coq} and Isabelle, automated/interactive theorem proving has been gradually applied for analysis and verification of synchronous models in different aspects.
One application is to use theorem provers to mechanize and verify the compiling processes of synchronous programming languages so that
the equivalence between compiled code and source code can be guaranteed~\cite{A formally verified compiler for Coq (PLDI 2017),Towards Coq-verified Esterel Semantics and Compiling (2019)}.
In~\cite{talpin's refinement type for dataflow models}, a type theory was proposed to provide a compile-time framework for analyzing and checking time properties of Signal, through the inference of refinement types~\cite{Liquid Type}.

[verified semantics]
[A formally verified compiler for Coq (PLDI 2017), Towards Coq-verified Esterel Semantics and Compiling (2019)]

[???]
[A calculus for esterel (2019)]

[type inference]
[Talpin's work]

[contracts, hoare logic]
[contracts for synchronous interface, ]

[theorem proving approach]
\fi

\subsection{Dynamic Logic}
Dynamic logic was firstly proposed by V.R. Pratt~\cite{Pratt76} as a formalism for modelling and reasoning about program specifications.
The syntax of SDL is largely based on and extends that of FODL~\cite{Harel79} and that of its extension to concurrency~\cite{Peleg87}.
Temporal dynamic logical formulas of the form $[p]\Box\phi$ were firstly proposed in Process logic~\cite{Harel82}.
\cite{Beckert01} studied a first-order dynamic logic containing formulas of this form (there, it was written as $[|p|]\phi$) and initially proposed
a relatively complete proof system for it.
Inspired from~\cite{Beckert01}, in \cite{Platzer07}, the author introduces the form $[p]\Box\phi$ into his DTDL and proposed
the compositional rules for $[p]\Box\phi$.
\ifx
Process logic~\cite{???} initially allows dynamic formulas of the form $[p]\eta$, where $\eta$ is a general temporal formula, like $\phi\bff{U} \psi$ (read as ``$\phi$ until $\psi$''), to express temporal properties of program models.
However, by now only the proof system of proposition process logic has been fully investigated.
On the other side, \cite{A sequent calculus for first-order dynamic logic with trace modalities} studied a type of dynamic logic that only contains formulas of the form $[|p|]\phi$, which has the same meaning as $[p]\Box\phi$ in process logic.
The idea in \cite{A sequent calculus for first-order dynamic logic with trace modalities} was further developed by A. Platzer in his DTDL~\cite{???}, where the compositional rules for the formulas $[p]\Box\phi$ were proposed and a relatively complete proof system was given.
\fi
The semantics of SDL mainly follows the trace semantics of process logic.
In SDL, we inherit formulas of the form $[p]\Box\phi$ from DDTL to express safety properties of synchronous models, and
adopt the compositional rules from DTDL (i.e. the rules $(\seq, \Box\phi)$ and $(*, \Box\phi)$) in order to build a relatively complete proof system for SDL.

Many variations of dynamic logic have been proposed for modelling and verifying different types of programs and system models.
For instance, Y.A. Feldman proposed a probabilistic dynamic logic PrDL for reasoning about probabilistic programs~\cite{Feldman84};
\cite{Rustan07} proposed the Java Card Dynamic Logic for verifying Java programs;
In~\cite{Platzer07b} and \cite{Platzer07}, the Differential Dynamic Logic (DDL) and DTDL were proposed respectively for specifying and verifying hybrid systems.
DDL introduced differential equations in the regular programs of FODL to capture physical dynamics in hybrid systems.
The time model of DDL is continuous.
In DDL, a discrete event (e.g. $x:=e$) does not consume time, while a continuous event (i.e. a differential equation) continuously evolves until some given conditions hold.
Compared to DDL, the time model of SDL is discrete and is reflected by the macro events.
SDL mainly focuses on capturing the features of synchronous models and preserving them in program models during theorem-proving processes.

An attempt to build a dynamic logic for synchronous models was made in~\cite{Zhang21}, where a clock-based dynamic logic (CDL) was proposed to specify and verify specific clock specifications in synchronous models.
SDL differs from CDL in the following two main points.
Firstly, in CDL, all events occurring at an instant are disordered, while in SDL, all micro events of a macro event are executed in a logical order.
So SDL is able to capture data dependencies, which are an important feature of synchronous models.
Secondly, in CDL, we propose formulas of the form $[p]\xi$, where $\xi$ is a clock constraint such as $c_1\prec c_2$, to express the clock specifications of a program $p$.
While in SDL, we introduce a more general form $[p]\Box\phi$, which can express not only clock constraints, but also other safety properties.
Compared to CDL, SDL is a logic that is more general and more expressive.

\ifx
\subsection{Synchronous Models}
[SCCS...]

[Deterministic Actor]

[Different Temporal Dynamic logic]
where introduced $[p]\Box\phi$ ,which is quite suitable for capturing safety properties in our proposed SP model.
\fi

\section{Conclusion and Future Work}
\label{section:Conclusion and Future Work}

In this paper, we mainly propose a dynamic logic --- SDL --- for specifying and verifying synchronous models based a theorem proving technique.
We define the syntax and semantics of SDL, build a constructive semantics for parallel SPs, and propose a sound and relatively complete proof system for SDL.
We show the potential of SDL to be used in specifying and verifying synchronous models through an example.

As for future work, we mainly focus on mechanizing SDL in the theorem prover Coq and apply SDL in specifying and verifying more interesting examples rather than the toy examples in this paper.

\bibliographystyle{elsarticle-num}
\bibliography{SDL-arxiv-20210408}

\clearpage
\appendix
\section{Proofs for some Propositions and the Soundness of SDL}
\label{section:Proofs for some Propositions and the Soundness of SDL}

\begin{proof}[Proof of Prop.~\ref{prop:trecs property}]
    We prove by induction on the syntactic structure of an SP $p$.

    The base cases are trivial.
    We need to consider 3 basic cases: when $p$ is $\halt$, $\noth$ or $\alpha$.
    We only take $\alpha$ as an example, the cases for $\halt$ and $\noth$ are similar.
    In fact, we immediately obtain the result since $\tau(\alpha) = \{\alpha\}$.

    If $p = p_1\seq p_2$, by induction hypothesis, we have $\val(p) = \val(p_1)\circ \val(p_2) = \bigcup_{r\in \tau(p_1)}\val(r) \circ \bigcup_{r\in \tau(p_2)}\val(r)=\bigcup_{r_1\in \tau(p_1), r_2\in \tau(p_2)}\val(r_1)\circ \val(r_2)$.
    According to the definition of the operator $\link$ (Def.~\ref{def:Operator link}), it is easy to see that for any trecs $q_1$, $q_2$, $\val(q_1\link q_2)= \val(q_1\seq q_2) = \val(q_1)\circ \val(q_2)$ holds.
    This is because $\link$ just concatenate two trecs with the operator $\seq$ in a special way which does not affect the semantics of the composed program, as analyzed in Section~\ref{section:Valuations of Terms and Closed Programs}.
    Therefore, we have $\bigcup_{r_1\in \tau(p_1), r_2\in \tau(p_2)}\val(r_1)\circ \val(r_2) = \bigcup_{r_1\in \tau(p_1), r_2\in \tau(p_2)}\val(r_1\link r_2) = \bigcup_{r\in \tau(p_1\seq p_2)}\val(r)$.

    If $p=p_1\cup p_2$, by induction hypothesis, we immediately get $\val(p) = \val(p_1)\cup \val(p_2) = \bigcup_{r\in \tau(p_1)}\val(r)\cup \bigcup_{r\in \tau(p_2)}\val(r) = \bigcup_{r\in \tau(p_1)\cup \tau(p_2)}\val(r) = \bigcup_{r\in \tau(p_1\cup p_2)}\val(r)$.

    If $p=q^*$, by induction hypothesis, we have $\val(q) = \bigcup_{r\in \tau(q)}\val(r)$. So $\val^n(q) = \underbrace{\bigcup_{r\in \tau(q)}\val(r)\circ...\circ \bigcup_{r\in \tau(q)}\val(r)}_{n} = \bigcup_{r\in\tau(q)}\val^n(r) = \bigcup_{r\in\tau(q)}\val(\underbrace{r\link...\link r}_{n}) = \bigcup_{r\in \tau(q^n)}\val(r)$. Hence, we have $\val(p) = \bigcup^\infty_{n=0}\val^n(q) = \bigcup^\infty_{n=0}\bigcup_{r\in \tau(q^n)}\val(r) = \bigcup_{r\in \tau(q^*)}\val(r)$.

    If $p=\para(q_1,...,q_n)$, we immediately obtain the result by Def.~\ref{def:Par}.
\end{proof}

\begin{prop}
    \label{prop:sequent}
    The following two propositions hold:
    \begin{enumerate}
        \item $\begin{aligned}\infer[^{(\mathit{seq1})}]
                    {
                        \Gamma\seqArrow \phi, \Delta
                    }
                    {
                        \Gamma\seqArrow \psi_1, \Delta
                        &
                        ...
                        &
                        \Gamma\seqArrow \psi_n, \Delta
                    }\end{aligned}$
              is sound for all contexts $\Gamma$ and $\Delta$
              iff
              the formula $\bigwedge^n_{i=1} \psi_i \to \phi$ is valid.

        \item $\begin{aligned}\infer[^{(\mathit{seq2})}]
                    {
                        \Gamma, \phi_1,...,\phi_n\seqArrow \Delta
                    }
                    {
                        \Gamma, \psi\seqArrow \Delta
                    }\end{aligned}$
            is sound for all contexts $\Gamma$ and $\Delta$
            iff
            the formula $\bigwedge^n_{i=1}\phi_i \to \psi$ is valid.
    \end{enumerate}
\end{prop}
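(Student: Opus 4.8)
The plan is to unfold the definition of what it means for a sequent rule to be sound and then reduce both statements to the semantics of the sequent arrow $\seqArrow$ given in Sect.~\ref{section:Sequent Calculus}. Recall that a sequent $\Gamma\seqArrow\Delta$ denotes the formula $\bigwedge_{\phi\in\Gamma}\phi\to\bigvee_{\psi\in\Delta}\psi$, and a rule is sound when validity of all premises entails validity of the conclusion. The key observation driving both parts is that quantifying over \emph{all} contexts $\Gamma,\Delta$ makes the rule-soundness statement equivalent to a single implication between the premise formulas and the conclusion formula, because we are free to instantiate $\Gamma$ and $\Delta$ with whatever is convenient — in particular with $\emptyset$, or with the ``worst case'' context that makes the side formulas contribute nothing.

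For part (1), the forward direction is a pure specialization: take $\Gamma=\Delta=\emptyset$; then soundness of $(\mathit{seq1})$ says that if each $\cdot\seqArrow\psi_i$ is valid (i.e.\ each $\psi_i$ is valid, equivalently $\true\to\psi_i$ is valid) then $\cdot\seqArrow\phi$ is valid, i.e.\ $\phi$ is valid. But this special instance, combined with the fact that $\bigwedge^n_{i=1}\psi_i\to\phi$ being non-valid means there is a state satisfying all $\psi_i$ but not $\phi$, needs a little care: validity of each individual $\psi_i$ is stronger than ``a state satisfies all $\psi_i$''. So instead I would argue the forward direction contrapositively using a \emph{general} $\Gamma$: if $\bigwedge\psi_i\to\phi$ is not valid, pick a state $s$ with $s\models\psi_i$ for all $i$ and $s\not\models\phi$; set $\Gamma=\{\true\}$ (or $\Gamma=\emptyset$) and $\Delta=\emptyset$ — wait, the cleaner move is to choose $\Gamma$ to be any finite set whose conjunction is semantically equivalent to $\true$ and note $s\models\Gamma\seqArrow\psi_i$ holds for each $i$ vacuously only if... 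Actually the robust route: observe that $\Gamma\seqArrow\psi_i,\Delta$ is valid for \emph{all} $\Gamma,\Delta$ iff $\psi_i$ is valid. Hence if the rule is sound for all contexts, instantiating so that every premise is the valid sequent forces $\phi$ valid — but we want the weaker hypothesis $\bigwedge\psi_i$. The correct and simplest formulation: for any formula $\chi$, ``$\Gamma\seqArrow\chi,\Delta$ is valid for all $\Gamma,\Delta$'' $\iff$ ``$\chi$ is valid''; and separately, rule-soundness-for-all-contexts is equivalent, by picking $\Gamma=\{\psi_1,\dots,\psi_n\}$ and $\Delta=\emptyset$, to: the premises $\{\psi_1,\dots,\psi_n\}\seqArrow\psi_i$ (each an instance of $(\mathit{ax})$, hence valid) yield validity of $\{\psi_1,\dots,\psi_n\}\seqArrow\phi$, which is exactly $\bigwedge\psi_i\to\phi$ valid. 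The converse direction (that validity of $\bigwedge\psi_i\to\phi$ makes the rule sound for every $\Gamma,\Delta$) is a routine propositional manipulation: assume each $\Gamma\seqArrow\psi_i,\Delta$ is valid and $s\models\bigwedge\Gamma$ with $s\not\models\bigvee\Delta$; then $s\models\psi_i$ for each $i$, so $s\models\phi$, so $s\models\Gamma\seqArrow\phi,\Delta$. I would write this direction in full since it is short.

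For part (2), the argument is dual. The converse direction (validity of $\bigwedge\phi_i\to\psi$ implies soundness) is again direct: if $\Gamma,\psi\seqArrow\Delta$ is valid and $s\models\bigwedge\Gamma\wedge\bigwedge_i\phi_i$ with $s\not\models\bigvee\Delta$, then $s\models\psi$, contradicting validity of the premise. The forward direction: assume the rule is sound for all $\Gamma,\Delta$; choose $\Gamma=\emptyset$ and $\Delta=\{\psi\}$, so the premise is $\psi\seqArrow\psi$, an instance of $(\mathit{ax})$ and hence valid, giving validity of $\phi_1,\dots,\phi_n\seqArrow\psi$, which is precisely $\bigwedge\phi_i\to\psi$ valid. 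The only mild obstacle is bookkeeping around the multiset/context conventions and the trivial cases $n=0$, $\Gamma=\emptyset$, $\Delta=\emptyset$ (where the displayed formulas collapse to $\true$ or $\false$ as stipulated in Sect.~\ref{section:Sequent Calculus}); I expect no real difficulty there. The main thing to get right is the quantifier ``for all contexts'' in the statement, which is exactly what licenses the clean instantiations $\Gamma=\{\psi_1,\dots,\psi_n\}$ in (1) and $\Delta=\{\psi\}$ in (2) used for the forward directions.
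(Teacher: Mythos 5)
Your proposal is correct and follows essentially the same route as the paper: for the forward directions you instantiate $\Gamma=\{\psi_1,\dots,\psi_n\}$ (resp.\ $\Delta=\{\psi\}$) so that the premises become trivially valid and the conclusion sequent literally is the target implication, and the converse directions are the same routine semantic check. The only cosmetic difference is that the paper uses $\Delta=\{\false\}$ and $\Gamma=\{\true\}$ where you use empty contexts, which are interchangeable under the paper's conventions for empty $\Gamma$ and $\Delta$.
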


\begin{proof}
We only prove the proposition \textit{1}.
The proposition \textit{2} follows a similar idea.

\textit{1}:
For the direction $\rightarrow$, if rule $(\mathit{seq1})$ is sound, then we know that if
$\bigwedge_{\varphi\in \Gamma}\varphi \to \psi_1\vee \bigvee_{\varphi\in \Delta}\varphi$,...,
$\bigwedge_{\varphi\in \Gamma}\varphi \to \psi_n\vee \bigvee_{\varphi\in \Delta}\varphi$ are valid, then
$\bigwedge_{\varphi\in \Gamma}\varphi \to \phi\vee \bigvee_{\varphi\in \Delta}\varphi$ is valid.
Now we show that for any $s\in \bff{S}$, if $s\models \bigwedge^n_{i=1} \psi_i$ then $s\models \phi$.
Let $\Gamma = \{\psi_1,...,\psi_n\}$, $\Delta = \{\false\}$, it is easy to see that
since $(\psi_1\wedge...\wedge\psi_n) \to (\psi_i\vee \false)$ is valid for any $1\le i\le n$, from the soundness of rule $(\mathit{seq1})$,
$(\psi_1\wedge...\wedge\psi_n)\to (\phi \vee \false)$ is valid, which means that for any $u\in \bff{S}$, if $u\models \psi_1\wedge...\wedge\psi_n$, then $u\models \phi$.
Therefore we immediately have $s\models \phi$.

For the other direction $\leftarrow$, if the formula $\bigwedge^n_{i=1} \psi \to \phi$ is valid, we now show that if
$\bigwedge_{\varphi\in \Gamma}\varphi \to \psi_1\vee \bigvee_{\varphi\in \Delta}\varphi$,...,
$\bigwedge_{\varphi\in \Gamma}\varphi \to \psi_n\vee \bigvee_{\varphi\in \Delta}\varphi$ are valid, then
$\bigwedge_{\varphi\in \Gamma}\varphi \to \phi\vee \bigvee_{\varphi\in \Delta}\varphi$ is valid.
For any $s\in \bff{S}$,
the situation when $s\not \models \bigwedge_{\varphi\in \Gamma}\varphi$ or when $s\models \bigvee_{\varphi\in \Delta}\varphi$ is trivial.
So we assume that $s\models \bigwedge_{\varphi\in \Gamma}\varphi$ and $s\not\models \bigvee_{\varphi\in \Delta}\varphi$.
Hence we have $s\models \psi_1$,..., $s\models \psi_n$. So $s\models \bigwedge^n_{i=1}\psi_i$.
Since $\bigwedge^n_{i=1} \psi_i \to \phi$ is valid, $s\models \phi$ holds.
So $s\models \bigwedge_{\varphi\in \Gamma}\varphi \to \phi\vee \bigvee_{\varphi\in \Delta}\varphi$.
Since $s$ is an arbitrary state, $s\models \bigwedge_{\varphi\in \Gamma}\varphi \to \phi\vee \bigvee_{\varphi\in \Delta}\varphi$ is valid.

\textit{2}: The direction $\rightarrow$ follows a similar idea as the proof above by setting $\Gamma = \{\true\}$ and $\Delta = \{\psi\}$.
The other direction $\leftarrow$ is also similar.
\end{proof}

\begin{prop}[Soundness of the Rules of Type (a)]
    The rules $(\alpha, \Box\phi), (\psi?), (x:=e), (\epsilon), (\noth)$ and $(\halt)$ of Table~\ref{table:Rules for Closed SPs} are sound.
\end{prop}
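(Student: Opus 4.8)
The plan is to verify each of the six rules directly against the semantic definitions. Since every rule of type $(a)$ has the equational form $\infer=[]{\phi}{\psi}$, proving it sound reduces to the set equality $\val(\psi)=\val(\phi)$, which I would obtain by unfolding Def.~\ref{def:Valuation of Closed SPs}, Def.~\ref{def:Valuation of sDTL Formulas} and the auxiliary valuation $\mval$. The three primitives that carry no time and no data change are immediate. For $(\halt)$: $\val(\halt)=\emptyset$, so no trace starts at any state and $\val([\halt]\xi)=\bff{S}=\val(\true)$ holds vacuously for both $\xi=\phi$ and $\xi=\Box\phi$. For $(\noth)$: $\val(\noth)$ is the set of one-state traces, so the only trace with $tr^b=s$ is $s$ itself; its endpoint is $s$, and since a one-state trace $s$ lies in $\val_\pi(\Box\phi)$ exactly when $s\in\val(\phi)$, we get $\val([\noth]\phi)=\val([\noth]\Box\phi)=\val(\phi)$. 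For $(\epsilon)$: $\mval(\epsilon)=\{ss\mid s\in\bff{S}\}$, so $\val(\epsilon)$ is the identity relation and $\val([\epsilon]\phi)=\val(\phi)$.

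For $(\psi?)$ and $(x:=e)$ I would first compute the macro-step relation of the composite program from $\mval$ and the concatenation $\circ$. This yields $\val(\psi?\nex\alpha)=\{ss'\mid s\in\val(\psi),\ ss'\in\val(\alpha)\}$; a case split on whether $s\in\val(\psi)$ then shows $\val([\psi?\nex\alpha]\phi)=\val(\psi\to[\alpha]\phi)$ — if $s\notin\val(\psi)$ both sides contain $s$ (the left one vacuously), and if $s\in\val(\psi)$ both reduce to $s\in\val([\alpha]\phi)$. Likewise $\val(x:=e\nex\alpha)=\{ss'\mid s[x\mapsto\val_s(e)]\,s'\in\val(\alpha)\}$, whence $\val([x:=e\nex\alpha]\phi)=\{s\mid s[x\mapsto\val_s(e)]\in\val([\alpha]\phi)\}$; the rule $(x:=e)$ then follows from the substitution lemma, that $s\in\val(\chi[e/x])$ iff $s[x\mapsto\val_s(e)]\in\val(\chi)$ for admissible substitutions, instantiated with $\chi=[\alpha]\phi$. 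This lemma is inherited from FODL~\cite{Harel00}, admissibility being guaranteed throughout by bound-variable renaming; for a self-contained treatment I would state and prove it by induction on $\chi$ (the only new clauses being $[p]\phi$ and $[p]\Box\phi$), which also requires checking that substitution commutes with the valuation of SPs in Def.~\ref{def:Valuation of Closed SPs}.

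For $(\alpha,\Box\phi)$ the crux is the observation that $\val(\alpha)=\{ss'\mid tr\in\mval(\alpha),\ tr^b=s,\ tr^e=s'\}$ consists of two-state traces only: a macro event spans exactly the instant before and the instant after it, since $\epsilon$ is the only construct that advances time and a closed macro event contains precisely one $\epsilon$, at its end. Hence for $tr=ss'\in\val(\alpha)$ we have $tr\in\val_\pi(\Box\phi)$ iff $s\in\val(\phi)$ and $s'\in\val(\phi)$, and unfolding items 6 and 7 of Def.~\ref{def:Valuation of sDTL Formulas} gives $\val([\alpha]\Box\phi)=\{s\mid\forall\,ss'\in\val(\alpha),\ s\in\val(\phi)\text{ and }s'\in\val(\phi)\}$ to be matched against $\val(\phi)\cap\val([\alpha]\phi)=\val(\phi)\cap\{s\mid\forall\,ss'\in\val(\alpha),\ s'\in\val(\phi)\}$. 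At any state from which $\alpha$ admits a macro step the two sets coincide on the nose; I expect the main obstacle to be the remaining states — those from which $\alpha$ admits no macro step, e.g.\ because $\alpha$ begins with a test whose guard fails there — where $[\alpha]\Box\phi$ and $[\alpha]\phi$ both hold vacuously and one must still account for the leading conjunct $\phi$. Handling this uniformly is where the precise reading of the rule, and if need be a mild stipulation that the macro-step relation of a well-defined macro event be serial, enters; all remaining bookkeeping — finiteness of traces, so $tr^e$ is always defined and $[\alpha]\phi$ is well posed — is routine.
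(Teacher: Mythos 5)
Your overall strategy is the one the paper itself uses: unfold the valuations pointwise and check each rule as a semantic equivalence. Your treatments of $(\halt)$, $(\noth)$, $(\epsilon)$ and $(\psi?)$ coincide with the paper's. For $(x:=e)$ you are slightly more scrupulous: the paper passes from ``$s[x\mapsto\val_s(e)]\models[\alpha]\phi$'' to ``$s\models([\alpha]\phi)[e/x]$'' without comment, whereas you correctly isolate this as an instance of the substitution lemma, whose extension to the new clauses $[p]\phi$ and $[p]\Box\phi$ is a real (if routine) obligation the paper leaves implicit.

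The one case you do not fully close, $(\alpha,\Box\phi)$, is precisely the case where the paper's own proof is defective, and your diagnosis of the obstacle is correct. The paper's chain of equivalences rewrites ``for all $ss'\in\val(\alpha)$ with $tr^b=s$: $s\in\val(\phi)$ and $s'\in\val(\phi)$'' as ``$s\models\phi$ and, for all such $ss'$, $s'\in\val(\phi)$''; pulling the conjunct $s\in\val(\phi)$ out from under the universal quantifier is legitimate only when $\alpha$ admits at least one macro step from $s$. Since a closed macro event may begin with a test, this can fail: take $\alpha=(x<0?\nex\epsilon)$, $\phi=(x<0)$, and any $s$ with $s(x)\ge 0$; then $s\models[\alpha]\Box\phi$ vacuously while $s\not\models\phi\wedge[\alpha]\phi$, so the direction $[\alpha]\Box\phi\to(\phi\wedge[\alpha]\phi)$ of the bidirectional rule is not valid, and only the premise-to-conclusion direction $(\phi\wedge[\alpha]\phi)\to[\alpha]\Box\phi$ is sound as stated. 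Hence the ``mild stipulation'' you contemplate is not optional: one must either weaken $(\alpha,\Box\phi)$ to a one-directional rule, or restrict it to macro events with a serial step relation (e.g.\ test-free $\alpha$, or an added premise $\la\alpha\ra\true$). Do not try to argue the unrestricted equivalence through, because it does not hold; apart from this point, which is a flaw in the target statement rather than in your argument, your proposal is sound and follows the same route as the paper.
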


\begin{proof}
By Prop.~\ref{prop:sequent}, for a rule $\begin{aligned}\infer=[]{\phi}{\psi}\end{aligned}$, it is sufficient to prove that $\phi\leftrightarrow \psi$ is valid.

In the following proofs for different rules, let $s$ be any state of $\bff{S}$.

For rule $(\alpha, \Box\phi)$, $s\models [\alpha]\Box\phi$ iff $ss'\in \val_\pi(\Box\phi)$ for all traces $ss'\in \val(\alpha)$, iff $s\in \val(\phi)$ and $s'\in \val(\phi)$ for all traces $ss'\in \val(\alpha)$,
iff $s\models \phi$ and $s'\in \val(\phi)$ for all traces $ss'\in \val(\alpha)$, iff $s\models \phi$ and $s\models [\alpha]\phi$, iff $s\models \phi\wedge [\alpha]\phi$.

For rule $(\psi?)$, $s\models [\psi?\nex \alpha]\phi$ iff $s'\in \val(\phi)$ for all traces $ss'\in \val(\psi?\nex \alpha)$,
iff $s'\in \val(\phi)$ for all traces $ss'\in \mval(\psi?\nex \alpha)$,
iff $s'\in \val(\phi)$ for all traces $ss'=ss\circ ss'$ with $s\in \val(\psi)$ and $ss'\in \val(\alpha)$,
iff if $s\in \val(\psi)$, then $s'\in \val(\phi)$ for all traces $ss'=ss\circ ss'$ with $ss'\in \val(\alpha)$,
iff if $s\in \val(\psi)$, then $s'\in \val(\phi)$ for all traces $ss'\in \val(\alpha)$,
iff $s\models \psi \to [\alpha]\phi$.

For rule $(x:=e)$, $s\models [x:=e\nex \alpha]\phi$ iff $s'\in \val(\phi)$ for all traces $ss'\in \val(x:=e\nex \alpha)$,
iff $s'\in \val(\phi)$ for all traces $ss'\in \mval(x:=e\nex \alpha)$,
iff $s'\in \val(\phi)$ for all traces $ss' = ss''\circ s''s'$ with $s''=s[x\mapsto \val_s(e)]$ and $s''s'\in \mval(\alpha)$,
iff $s'\in \val(\phi)$ for all traces $s''s'\in \val(\alpha)$ with $s''=s[x\mapsto \val_s(e)]$,
iff $s''\models [\alpha]\phi$ and $s''=s[x\mapsto \val_s(e)]$,
iff $s\models ([\alpha]\phi)[e/x]$.

For rule $(\epsilon)$, $s\models [\epsilon]\phi$ iff $s'\in \val(\phi)$ for all traces $ss'\in \val(\epsilon)$,
iff $s\in \val(\phi)$ for the trace $ss\in \mval(\epsilon) = \{tt\ |\ t\in \bff{S}\}$,
iff $s\in \val(\phi)$, iff $s\models \phi$.

For rule $(\noth)$, $s\models [\noth]\phi$ (resp. $s\models [\noth]\Box\phi$) iff $s'\in \val(\phi)$ (resp. $ss'\in \val_\pi(\Box\phi)$) for all traces $ss'\in \val(\noth)$,
iff $s\in \val(\phi)$ for the trace (of length 1) $s\in \val(\noth)$,
iff $s\in \val(\phi)$, iff $s\models \phi$.

For rule $(\halt)$, $s\models [\halt]\phi$ (resp. $s\models [\halt]\Box\phi$) iff $s'\in \val(\phi)$ (resp. $ss'\in \val_\pi(\Box\phi)$) for all traces $ss'\in \val(\halt)$,
iff $\true$ since there is no traces in $\val(\halt)$.

\end{proof}

\begin{prop}[Soundness of the Rewrite Rules of Type (b)]
    \label{prop:Soundness of the Rewrite Rules for Closed Sequential Programs}
    The rewrite rules $(\noth,\seq)$, $(\noth,*)$, $(\halt,\seq)$, $(\halt,*)$, $(\seq, \mathit{ass})$, $(\seq,\mathit{dis}1)$, $(\seq, \mathit{dis}2)$, $(\cup, \mathit{ass})$ and $(*,\mathit{exp})$ of Table~\ref{table:Rules for Parallel SPs} are sound.
\end{prop}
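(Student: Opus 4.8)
The plan is to check that each of the nine rewrite rules preserves the semantics of closed SPs, i.e.\ that whenever $p\red q$ is an instance of one of these rules we have $\val(p)=\val(q)$; by Prop.~\ref{prop:sequent}-style reasoning this is exactly what soundness of a rewrite rule means. Since all nine rules lie in the purely sequential fragment, the whole argument is a matter of unfolding Def.~\ref{def:Valuation of Closed SPs} on both sides and comparing, and it rests entirely on a small collection of algebraic facts about the concatenation operator $\circ$ on sets of traces (Def.~\ref{def:Concatenation between Traces}), which I would isolate as a preliminary lemma: (1) $\bff{S}$ is a two-sided unit, $\bff{S}\circ A = A\circ \bff{S} = A$ (in particular $\bff{S}\circ\bff{S}=\bff{S}$); (2) $\emptyset$ is a two-sided zero, $\emptyset\circ A = A\circ\emptyset = \emptyset$; (3) $\circ$ is associative; (4) $\circ$ distributes over arbitrary unions on both sides, $A\circ(\bigcup_i B_i)=\bigcup_i(A\circ B_i)$ and $(\bigcup_i B_i)\circ A=\bigcup_i(B_i\circ A)$. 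Together with the defining clauses $\val^0(p)=\bff{S}$ and $\val(p^*)=\bigcup_{n\ge 0}\val^n(p)$ and the (trivial) associativity of set union, these suffice.

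With the lemma in hand the nine rules split into three groups, each handled by a one-line computation. The unit/zero group: $\val(\noth\seq p)=\val(\noth)\circ\val(p)=\bff{S}\circ\val(p)=\val(p)$ gives $(\noth,\seq)$; $\val(\halt\seq p)=\emptyset\circ\val(p)=\emptyset=\val(\halt)$ gives $(\halt,\seq)$; since $\val^n(\noth)=\bff{S}$ for all $n$ (by (1)) we get $\val(\noth^*)=\bigcup_{n\ge 0}\bff{S}=\bff{S}=\val(\noth)$, i.e.\ $(\noth,*)$; and since $\val^0(\halt)=\bff{S}$ while $\val^n(\halt)=\emptyset$ for $n\ge 1$ (by (2)), $\val(\halt^*)=\bff{S}\cup\bigcup_{n\ge 1}\emptyset=\bff{S}=\val(\noth)$, i.e.\ $(\halt,*)$. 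The associativity/distributivity group: $(\seq,\mathit{ass})$ is immediate from (3), $(\cup,\mathit{ass})$ from associativity of $\cup$, and $(\seq,\mathit{dis}1)$, $(\seq,\mathit{dis}2)$ from (4). Finally, for $(*,\mathit{exp})$: $\val(\noth\cup p\seq p^*)=\bff{S}\cup\bigl(\val(p)\circ\bigcup_{n\ge 0}\val^n(p)\bigr)=\bff{S}\cup\bigcup_{n\ge 0}\val^{n+1}(p)=\val^0(p)\cup\bigcup_{n\ge 1}\val^n(p)=\bigcup_{n\ge 0}\val^n(p)=\val(p^*)$, using (4) for the infinite union and (3) to rewrite $\val(p)\circ\val^n(p)=\val^{n+1}(p)$.

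The only point that genuinely needs care — and hence the main obstacle — is the preliminary lemma itself, because $\circ$ is a \emph{partial} operation on traces: $tr_1\circ tr_2$ is undefined when $tr_1^e\neq tr_2^b$. When lifted to sets of traces, undefined pairs are simply dropped from the result, and one has to confirm that under this convention properties (1)--(4) still hold. The associativity (3) and distributivity (4) claims are routine once one notes that both sides of each identity collect exactly the traces of the form $s_1\cdots s_n$ obtained by matching endpoints; the slightly delicate part is (1), where one checks that $\bff{S}\circ A=\{\,s\circ tr\mid s\in\bff{S},\ tr\in A,\ s=tr^b\,\}=\{\,tr\mid tr\in A\,\}=A$ because concatenating the length-$1$ trace $tr^b$ to $tr$ reproduces $tr$, and symmetrically on the right. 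Once this lemma is stated and proved, the soundness of all nine rules is immediate, and the same lemma will be reused when proving the soundness of the parallel rewrite rules in Prop.~\ref{prop:Soundness of the Rewrite Rules for Parallel Programs}.
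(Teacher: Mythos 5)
Your proposal is correct and follows essentially the same route as the paper's own proof: unfold Def.~\ref{def:Valuation of Closed SPs} and reduce each rule to the unit/zero/associativity/distributivity properties of $\circ$ on sets of traces (the paper verifies only $(\seq,\mathit{ass})$, $(\seq,\mathit{dis}1)$ and $(*,\mathit{exp})$ explicitly and declares the rest similar). Your explicit isolation of the algebraic lemma for the partial operator $\circ$, and in particular the check that the length-$1$ traces in $\bff{S}$ really do act as a two-sided unit, is a careful touch the paper leaves implicit.
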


\begin{proof}
    We only show the soundness of the rules $(\seq,\mathit{ass})$, $(\seq, \mathit{dis}1)$ and $(*, \mathit{exp})$, the soundness of other rules is either trivial or can be proved in a similar way. We omit them here.

    For rule $(\seq,\mathit{ass})$, we have that for any programs $p, q$ and $r$, $\val((p;q);r) = \val(p;q)\circ \val(r) =(\val(p)\circ \val(q))\circ \val(r) = \val(p)\circ (\val(q)\circ \val(r)) = \val(p)\circ \val(q;r) = \val(p;(q;r))$.

    For rule $(\seq, \mathit{dis}1)$, we have that for any programs $p,q$ and $r$, $\val(p;(q\cup r)) = \val(p)\circ (\val(q\cup r)) = \val(p)\circ (\val(q)\cup \val(r)) = (\val(p)\circ \val(q))\cup (\val(p)\circ \val(r)) = \val(p;q)\cup \val(p;r) = \val(p;q\cup p;r)$.

    For rule $(*,\mathit{exp})$, we have that for any program $p$,
    $\val(p^*) = \bigcup^{\infty}_{n=0}\val^n(p) = \val^0(p) \cup \bigcup^{\infty}_{n=1}\val^n(p) = \val^0(p)\cup (\val(p)\circ (\val^0(p)\cup \val^1(p)\cup \val^2(p) \cup ... )) = \val^0(p)\cup (\val(p)\circ \bigcup^{\infty}_{n=0}\val^n(p)) = \val(\noth)\cup \val(p)\circ \val(p^*) = \val(\noth\cup p\seq p^*)$.
\end{proof}

\begin{prop}[Soundness of the Rewrite Rules of Type (c)]
    \label{prop:Soundness of the Rewrite Rules for Parallel Programs}
    The rewrite rules $(\para, \noth)$, $(\para, \halt)$, $(\para, \mathit{dis})$ and $(\para, \mathit{mer})$ of Table~\ref{table:Rules for Parallel SPs} are sound.
\end{prop}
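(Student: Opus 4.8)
The plan is to prove soundness for each of the four rules, where soundness of a rewrite rule means that for closed SPs $p,q$ with $p\red q$ one has $\val(p)=\val(q)$. The common device for all four cases is Prop.~\ref{prop:trecs property}, which reduces the valuation of a parallel program to the union of the valuations of its trecs, combined with the clause-by-clause definitions of $\tau$ on parallel and choice programs (Def.~\ref{def:Trecs of SPs}) and of $\valt$ (Def.~\ref{def:Valuation of the Trecs}).

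First I would dispatch the three structural rules $(\para,\noth)$, $(\para,\halt)$, $(\para,\mathit{dis})$, which are essentially immediate. For $(\para,\noth)$: since $\tau(\noth)=\{\noth\}$, every trec of $\para(\dots,p,\noth,q,\dots)$ has the shape $\para(\dots,r_p,\noth,r_q,\dots)$ with $r_p\in\tau(p)$, etc.; clause~2 of $\valt$ strips the $\noth$ component, yielding exactly the trecs of $\para(\dots,p,q,\dots)$, and this correspondence is a bijection, so the unions in Prop.~\ref{prop:trecs property} agree. For $(\para,\halt)$: every trec of the left side contains a $\halt$ component (as $\tau(\halt)=\{\halt\}$), so clause~1 of $\valt$ sends each of them to $\val(\halt)=\emptyset$, whence the union is $\emptyset=\val(\halt)$. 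For $(\para,\mathit{dis})$: by Def.~\ref{def:Trecs of SPs}, $\tau(p\cup q)=\tau(p)\cup\tau(q)$, so the trec set of $\para(\dots,p\cup q,\dots)$ splits as the union of those of $\para(\dots,p,\dots)$ and $\para(\dots,q,\dots)$; distributing the union of Prop.~\ref{prop:trecs property} gives the result.

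The real work is $(\para,\mathit{mer})$: $\para(\alpha_1\seq q_1,\dots,\alpha_n\seq q_n)\red\alpha\seq\para(q'_1,\dots,q'_n)$ when $\Merge(\alpha_1\seq q_1\sep\dots\sep\alpha_n\seq q_n)=(b,\alpha,(q'_1\sep\dots\sep q'_n))$ with $b=\true$. Using $\tau(\alpha_i\seq q_i)=\{\alpha_i\link r\mid r\in\tau(q_i)\}$ and $\val(\alpha_i\link r)=\val(\alpha_i)\circ\val(r)$ (both established inside the proof of Prop.~\ref{prop:trecs property}), the statement reduces via Prop.~\ref{prop:trecs property} to two things: (a) for each tuple $(r_1,\dots,r_n)$ with $r_i\in\tau(q_i)$, the trec $\para(\alpha_1\seq r_1,\dots,\alpha_n\seq r_n)$ satisfies $\valt(\para(\alpha_1\seq r_1,\dots,\alpha_n\seq r_n))=\val(\alpha)\circ\val(\para(\sigma r_1,\dots,\sigma r_n))$, and (b) $\{\para(\sigma r_1,\dots,\sigma r_n)\mid r_i\in\tau(q_i)\}$ coincides, up to semantic equivalence, with $\tau(\para(q'_1,\dots,q'_n))$, where $\sigma$ is the composite substitution applied to the tails during the run of $\Merge$. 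The crux is a lemma about $\rMerge$ (Def.~\ref{def:Function Merge}): its recursion consumes only the macro events $\alpha_1,\dots,\alpha_n$, the auxiliary data $\rho,\Must,\Can,\SV$ and the final flag $b$ are functions of $\alpha_1,\dots,\alpha_n$ alone, and the tails are touched only by the uniform substitution in clause~4(i), whose substituted value is again determined by the $\alpha_i$'s only. Hence running $\Merge$ with tails $q_i$ or with tails $r_i$ returns the same $b$ and $\alpha$ and the same $\sigma$; since $b=\true$, clause~4 of $\valt$ delivers (a), while clauses 1--3 of $\valt$ (together with $\alpha\link\noth=\alpha$, $\alpha\link\halt=\halt$, and the equivalence $\alpha\equiv\alpha\seq\noth$) handle the degenerate trecs where some $r_i\in\{\noth,\halt\}$ or $r_i$ is a bare string. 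Finally, since substitution is purely syntactic, it commutes with $\tau$ and preserves $\val$, which gives (b), and reassembling the union yields $\val(\alpha)\circ\val(\para(q'_1,\dots,q'_n))=\val(\alpha\seq\para(q'_1,\dots,q'_n))$.

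I expect the $\Merge$-invariance lemma to be the main obstacle: it requires an induction over the recursion of $\rMerge$ showing that none of its bookkeeping ever inspects the program tails, that the only point at which a tail is rewritten is the uniform clause-4(i) substitution, and that this substitution commutes with the computation of $\tau$. The remaining steps are routine unfoldings of Def.~\ref{def:Valuation of the Trecs} and Def.~\ref{def:Trecs of SPs}.
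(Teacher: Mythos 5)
Your proposal is correct and follows essentially the same route as the paper: reduce everything to trecs via Prop.~\ref{prop:trecs property}, dispatch $(\para,\noth)$, $(\para,\halt)$, $(\para,\mathit{dis})$ by the corresponding clauses of $\tau$ and $\valt$, and for $(\para,\mathit{mer})$ observe that the flag $b$, the merged event $\alpha$, and the clause-4(i) substitution produced by $\Merge$ depend only on $\alpha_1,\dots,\alpha_n$ and not on the tails, so the merge of each trec $\para(\alpha_1\seq r_1,\dots,\alpha_n\seq r_n)$ yields the same $\alpha$ and tails ranging exactly over $\tau(q'_1)\times\dots\times\tau(q'_n)$. You are in fact somewhat more careful than the paper about the points it elides, namely that the substitution must commute with $\tau$ and that degenerate trecs with $r_i\in\{\noth,\halt\}$ collapse under $\link$ to forms handled by clauses 1--3 of $\valt$.
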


\begin{proof}
    The soundness of the rules $(\para, \noth)$, $(\para, \halt)$ and $(\para, \mathit{mer})$ can be easily proved according to the cases $1$, $2$ and $4$ of Def.~\ref{def:Valuation of the Trecs} respectively.
    Here we only show how to prove the soundness of rule  $(\para, \mathit{mer})$, the proofs for other two rules are similar.

    By Def.~\ref{def:Valuation of Closed SPs} and \ref{def:Par} we know that $\val(\para(\alpha_1\seq q_1,...,\alpha_n\seq q_n)) = \bigcup_{r\in \tau(\para(\alpha_1\seq q_1,...,\alpha_n\seq q_n))}\valt(r)$,
    where each trec $r$ of $\para(\alpha_1\seq q_1,...,\alpha_n\seq q_n)$ must be of the form $\para(\alpha_1\seq r_1,...,\alpha_n\seq r_n)$ with $r_i\in \tau(q_i)$ for all $1\le i\le n$.
    According to the case $4$ of Def.~\ref{def:Valuation of the Trecs}, we have that for each $r$, $$\valt(r) = \valt(\para(\alpha_1\seq r_1,...,\alpha_n\seq r_n)) = \val(\alpha')\circ \val(\cap(r'_1,...,r'_n))$$ if $b=\true$,
    where $(b', \alpha', (r'_1\sep ...\sep r'_n)) = \Merge(\alpha_1\seq r_1\sep...\sep \alpha_n\seq r_n)$.
    From Def.~\ref{def:Function Merge} it is easy to see that actually the values of $b'$ and $\alpha'$ are only related to $\alpha_1$,...,$\alpha_n$ and have nothing to do with $r_1$,...,$r_n$,
    therefore, we have $b' = b$ and $\alpha' = \alpha$ (, where in the rule $(\para, \mathit{mer})$ we have $(b, \alpha, (q'_1\sep...\sep q'_n)) = \Merge(\alpha_1\seq q_1\sep...\sep \alpha_n\seq q_n)$).
    So we have $\bigcup_{r\in \tau(\para(\alpha_1\seq q_1,...,\alpha_n\seq q_n))}\valt(r) = \bigcup_{r\in \tau(\para(\alpha_1\seq q_1,...,\alpha_n\seq q_n))} \val(\alpha)\circ \val(\para(r'_1,...,r'_n)) = \val(\alpha)\circ \bigcup_{r'_1\in \tau(q'_1),...,r'_n\in \tau(q'_n)} \val(\para(r'_1,...,r'_n)) = \val(\alpha)\circ \val(\para(q'_1,...,q'_n))$.

    It remains to show the soundness of rule $(\para,\mathit{dis})$. We have that $\val(\para(...,p\cup q,...)) = \bigcup_{r\in \tau(\para(...,p\cup q,...))}\valt(r) = \bigcup_{r\in \tau(\para(...,p,...))}\valt(r)\cup \bigcup_{r\in \tau(\para(...,q,...))}\valt(r) = \val(\para(...,p,...))\cup \val(\para(...,q,...)) = \val(\para(...,p,...)\cup \para(...,q,...))$.

\end{proof}

We introduce the level of a rewrite relation between two programs as the following definition.
\begin{mydef}[Level of a Rewrite Relation]
\label{def:Level of a Rewrite Relation}
    Given two SPs $p$, $q$ and a rewrite relation $p\red q$ between them, we define the level of the relation, denoted as $\lev(p\red q)$, as follows:
    \begin{enumerate}
    \item $\lev(p\red q) \dddef 1$ if $p\red q$ is from one of the rules of the types $(b)$ and $(c)$ in Table~\ref{table:Rules for Parallel SPs}.
    \item $\lev(p\red q)\dddef 1 + \lev(p'\red q')$ if there exist SPs $p'$, $q'$ and a program hole $r\{\place\}$ such that $p = r\{p'\}$, $q=r\{q'\}$ and $p'\red q'$.
    In other words, $p\red q$ is from rule $(r2)$.
    \end{enumerate}
\end{mydef}

\begin{lemma}
    \label{lemma:parallel-lemma}
    Given a parallel program $\para(...,p\{\place\},...)$ with a program hole $p\{\place\}$,
    if $p\{q\}\red p\{r\}$ with $q\red r$ and $\lev(q\red r) = 1$ for some programs $q$ and $r$,
    then
    \begin{equation}
    \label{equ:parallel-lemma-1}
    \val(\para(..., p\{q\},...)) = \val(\para(...,p\{r\},...)).
    \end{equation}
\end{lemma}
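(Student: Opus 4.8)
The plan is to push the claim down to the level of trecs via Prop.~\ref{prop:trecs property} and then induct on the structure of the program hole $p\{\place\}$. Write the ambient program as $\para(p_1,\dots,p_{j-1},p\{\place\},p_{j+1},\dots,p_n)$, so the hole occupies the $j$th slot. By Prop.~\ref{prop:trecs property}, $\val(\para(p_1,\dots,p\{q\},\dots,p_n))=\bigcup_{t}\val(t)$ over $t\in\tau(\para(p_1,\dots,p\{q\},\dots,p_n))$, and by item~5 of Def.~\ref{def:Trecs of SPs} every such $t$ is of the form $\para(s_1,\dots,s_{j-1},s,s_{j+1},\dots,s_n)$ with $s_k\in\tau(p_k)$ for $k\neq j$ and $s\in\tau(p\{q\})$, while Def.~\ref{def:Par} (using $\tau(t')=\{t'\}$ for a trec $t'$) gives $\val(\para(s_1,\dots,s,\dots,s_n))=\valt(\para(s_1,\dots,s,\dots,s_n))$. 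Thus (\ref{equ:parallel-lemma-1}) reduces to proving, for each fixed family of trecs $(s_k)_{k\neq j}$ for the remaining slots, the equality
$$\bigcup_{s\in\tau(p\{q\})}\valt(\para(s_1,\dots,s,\dots,s_n))=\bigcup_{s\in\tau(p\{r\})}\valt(\para(s_1,\dots,s,\dots,s_n)).\qquad(\star)$$

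Two facts will be used throughout. First, if a trec $s$ contains $\halt$ as a subterm, then $\valt(\para(s_1,\dots,s,\dots,s_n))=\emptyset$; this follows by a short induction on $s$ using cases~1 and~3 of Def.~\ref{def:Valuation of the Trecs} and the absorption of $\emptyset=\val(\halt)$ under $\circ$. Second, $\link$ (Def.~\ref{def:Operator link}) is associative with $\noth$ as two-sided unit and $\halt$ as left zero, so by Def.~\ref{def:Trecs of SPs} the map $\tau$ is compositional and $\tau(t)=\{t\}$ for every trec $t$; moreover, erasing a $\noth$-component or flattening a nested parallel inside a parallel trec does not change its $\valt$ (cases~2 and~3 of Def.~\ref{def:Valuation of the Trecs}). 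For the base case $p\{\place\}=\place$, $(\star)$ becomes a statement about $q$ and $r$ directly, and since $\lev(q\red r)=1$ the rewrite is exactly one application of a rule of type $(b)$ or $(c)$ of Table~\ref{table:Rules for Parallel SPs} (never the stand-alone rule $(\para,\mathit{seq})$); I would argue by cases on that rule. For every such rule other than $(\para,\mathit{mer})$, unfolding Def.~\ref{def:Trecs of SPs} shows $\tau(q)$ and $\tau(r)$ coincide up to permuting trivial $\noth$-slots inside nested parallels and up to trecs containing $\halt$, and by the two facts above neither difference affects the unions in $(\star)$.

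The one genuinely semantic case is $(\para,\mathit{mer})$, where $q=\para(\alpha_1\seq q_1,\dots,\alpha_n\seq q_n)$, $r=\alpha\seq\para(q'_1,\dots,q'_n)$ and $(\true,\alpha,(q'_1\sep\dots\sep q'_n))=\Merge(\alpha_1\seq q_1\sep\dots\sep\alpha_n\seq q_n)$. Here I would reuse the identity $\val(q)=\val(r)$ already established in Prop.~\ref{prop:Soundness of the Rewrite Rules for Parallel Programs}, together with the observation recorded in its proof that the constructiveness flag and the merged leading macro event returned by $\Merge$ depend only on $\alpha_1,\dots,\alpha_n$ and not on the continuations, in order to pair each trec $\para(\alpha_1\seq u_1,\dots,\alpha_n\seq u_n)$ of $q$ (with $u_i\in\tau(q_i)$) with the trec $\alpha\seq\para(w_1,\dots,w_n)$ of $r$ that it produces (with $w_i\in\tau(q'_i)$), flattening via case~3 of Def.~\ref{def:Valuation of the Trecs} when these are plugged into the $j$th slot; this yields a term-by-term match of the two unions in $(\star)$.

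For the inductive step, a composite hole is one of $p_1\{\place\}\seq p_2$, $p_1\seq p_2\{\place\}$, $p_1\{\place\}\cup p_2$, $p_1\cup p_2\{\place\}$, $(p_1\{\place\})^*$, or $\para(\dots,p_1\{\place\},\dots)$, and in each case the second fact above shows $\tau(p\{q\})$ is assembled from $\tau(p_1\{q\})$ (together with fixed trec-sets for the remaining subterms) by exactly the same operation as $\tau(p\{r\})$ from $\tau(p_1\{r\})$; so the instance of $(\star)$ for the strictly smaller subhole $p_1\{\place\}$ — available by induction on $\lev$ (Def.~\ref{def:Level of a Rewrite Relation}) — propagates through $\link$, union and star. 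The hard part, I expect, is the sub-case $p\{\place\}=\para(\dots,p_1\{\place\},\dots)$: the rewrite is then buried inside a parallel program that is itself nested in the $j$th slot, and one must follow it through the recursive flattening of case~3 of Def.~\ref{def:Valuation of the Trecs} and through the $\rMerge$ recursion of Def.~\ref{def:Function Merge}, where the induction hypothesis for the inner parallel is precisely what keeps that computation under control. Everything else is routine bookkeeping with $\tau$ and the algebra of $\link$.
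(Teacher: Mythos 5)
Your proposal is correct and takes essentially the same route as the paper's own proof: both reduce the claim to unions of $\valt$ over trecs, split on which rewrite rule produced $q\red r$ (the type-(b) rules and $(\para,\mathit{dis})$ by comparing the trec-sets $\tau(p\{q\})$ and $\tau(p\{r\})$, the remaining type-(c) rules --- in particular $(\para,\mathit{mer})$ --- by pairing trecs term-by-term using the fact that the constructiveness flag and merged macro event returned by $\Merge$ depend only on the leading macro events $\alpha_1,\dots,\alpha_n$), and close with a structural induction on the program hole. The only slip is attributing the inductive hypothesis for the subhole to ``induction on $\lev$'': $\lev(q\red r)$ is fixed at $1$ throughout this lemma and the induction is on the structure of $p\{\place\}$, exactly as you set it up at the outset, so this does not affect the argument (and your explicit caveat about trec-sets agreeing only up to $\halt$-containing trecs, e.g.\ for $(\halt,*)$, is in fact slightly more careful than the paper's statement).
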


\begin{proof}
    We prove by analyzing two different cases of $\lev(q\red r) = 1$:
    \begin{enumerate}
    \item If $q\red r$ is from one of the rules of type $(b)$ and rule $(\para, \mathit{dis})$ (of type $(c)$), then
    $\tau(p\{q\}) = \tau(p\{r\})$, so the equation (\ref{equ:parallel-lemma-1}) holds obviously.
    \item If $q\red r$ is from other rules of type $(c)$ except rule $(\para, \mathit{dis})$, then
    the equation (\ref{equ:parallel-lemma-1}) holds.
    \end{enumerate}

    The case 1 can be proved by induction on the syntactic structure of $p_i\{\place\}$ for different rules. Here we only take rule $(\seq, \mathit{dis}1)$ as an example, other cases are similar.

    Let $q = u_1\seq (u_2\cup u_3)$ and $r = u_1\seq u_2\cup u_1\seq u_3$.
    If $p_i\{\place\} = \place$, we need to prove that $\tau(q) = \tau(r)$, which holds because $\tau(u_1\seq (u_2\cup u_3)) = \{t_1\link t_2\ |\ t_1\in \tau(u_1), t_2\in \tau(u_2\cup u_3)\} = \{t_1\link t_2\ |\ t_1\in \tau(u_1), t_2\in \tau(u_2)\}\cup \{t_1\link t_2\ |\ t_1\in \tau(u_1), t_2\in \tau(u_3)\} = \tau(u_1\seq u_2)\cup \tau(u_1\seq u_3) = \tau(u_1\seq u_2\cup u_1\seq u_3)$.

    If $p\{\place\} = p_1\{\place\}\seq p_2$, by induction hypothesis we have $\tau(p_1\{q\}) = \tau(p_1\{r\})$,
    so $\tau(p\{q\}) = \{t_1\link t_2\ |\ t_1\in \tau(p_1\{q\}), t_2\in \tau(p_2)\} = \{t_1\link t_2\ |\ t_1\in \tau(p_1\{r\}), t_2\in \tau(p_2)\} = \tau(p\{r\})$.

    The case for $p\{\place\} = p_1\seq p_2\{\place\}$ is similar.

    If $p\{\place\} = p_1\{\place\}\cup p_2$, by induction hypothesis we have $\tau(p_1\{q\}) = \tau(p_1\{r\})$,
    so $\tau(p\{q\}) = \{t\ |\ t\in \tau(p_1\{q\})\}\cup \{t\ |\ t\in \tau(p_2)\} = \{t\ |\ t\in \tau(p_1\{r\})\}\cup \{t\ |\ t\in \tau(p_2)\} = \tau(p\{r\})$.

    The case for $p\{\place\} = p_1\cup p_2\{\place\}$ is similar.

    If $p\{\place\} = (p_1\{\place\})^*$, by induction hypothesis we have $\tau(p_1\{q\}) = \tau(p_1\{r\})$,
    so $\tau(p\{q\}) = \bigcup^\infty_{n=0}\tau(p^n_1\{q\}) = \bigcup^\infty_{n=0}\{t_1\link...\link t_n\ |\ t_i\in \tau(p_1\{q\})\mbox{ for $1\le i\le n$}\} = \bigcup^\infty_{n=0}\{t_1\link...\link t_n\ |\ t_i\in \tau(p_1\{r\})\mbox{ for $1\le i\le n$}\} = \bigcup^\infty_{n=0}\tau(p^n_1\{r\}) = \tau(p\{r\})$.

    If $p\{\place\} = \para(p_1,..., p_i\{\place\}, ..., p_n)$, by induction hypothesis we have $\tau(p_i\{q\}) = \tau(p_i\{r\})$, so
    $\tau(p\{q\}) = \{\para(t_1,...,t_n)\ |\ t_1\in \tau(p_1),...,t_i\in \tau(p_i\{q\}),...,t_n\in \tau(p_n)\} = \{\para(t_1,...,t_n)\ |\ t_1\in \tau(p_1),...,t_i\in \tau(p_i\{r\}),...,t_n\in \tau(p_n)\} = \tau(p\{r\})$.

    To prove the case 2, we analyze the rules $(\para, \noth)$, $(\para, \halt)$, $(\para, \mathit{mer})$ and rule $(\para,\mathit{seq})$ separately.

    The cases for the rules $(\para, \noth)$, $(\para, \halt)$ and $(\para, \mathit{mer})$ can be proved according to the cases $1$, $2$ and $4$ of Def.~\ref{def:Valuation of the Trecs} respectively.
    Here we only consider the proof for rule $(\para, \mathit{mer})$ as an example, other cases for the rules $(\para, \noth)$ and $(\para, \halt)$ are similar.

    If $q\red r$ is from rule $(\para, \mathit{mer})$, let $q = \para(\alpha_1\seq q_1,..., \alpha_n\seq q_n)$ and $r = \alpha\seq \para(q'_1,...,q'_n)$ where $(b, \alpha, (q'_1\sep...\sep q'_n)) = \Merge(\alpha_1\seq q_1\sep...\sep \alpha_n\seq q_n)$ and $b=\true$.
    In the sets $\tau(p\{q\})$ and $\tau(p\{r\})$, it is easy to see that each trec $r$ either has no holes and $r\in \tau(p\{q\})\cap \tau(p\{r\})$, or
    has a hole in the same position of the hole $p\{\place\}$ and there exist two trecs $r' = \para(\alpha_1\seq u_1,...,\alpha_n\seq u_n) \in \tau(q)$ and $r'' = \alpha\seq \para(u'_1,...,u'_n)\in \tau(r)$ such that
    $r\{r'\} \in \tau(p\{q\})$, $r\{r''\}\in \tau(p\{r\})$ and $(b,\alpha, (u'_1\sep...\sep u'_n)) = \Merge(\alpha_1\seq u_1\sep...\sep\alpha_n\seq u_n)$ since the return values $b$ and $\alpha$ only depend on the events $\alpha_1$,...,$\alpha_n$.
    Let $r_1 = \para(...,r\{r'\},...)$ and $r_2 = \para(..., r\{r''\},...)$ be two trecs of $\para(...,p\{q\},...)$ and $\para(...,p\{r\},...)$ respectively,
    it remains to show that $\valt(r_1)=\valt(r_2)$.
    However, this is trivial according to the case 4 of Def.~\ref{def:Valuation of the Trecs}.
    Therefore, we have the equation (\ref{equ:parallel-lemma-1}) holds for rule $(\para, \mathit{mer})$.

\end{proof}

We show that Lemma~\ref{lemma:parallel-lemma} can be extended to the case when $\lev(q\red r)$ is an arbitrary number.

\begin{lemma}
    \ifx
    \label{lemma:parallel-lemma-2}
    Given a parallel program $\para(...,p\{\place\},...)$ with a program hole $p\{\place\}$,
    if $p\{q\}\red p\{r\}$ with $q\red r$ and $\lev(q\red r) = 1$ for some programs $q$ and $r$,
    then
    \begin{equation}
    \label{equ:parallel-lemma-2}
    \val(\para(..., p\{q\},...)) = \val(\para(...,p\{r\},...)).
    \end{equation}
    \fi
    \label{lemma:parallel-lemma-2}
    The equation (\ref{equ:parallel-lemma-1}) of Lemma.~\ref{lemma:parallel-lemma} holds for any $\lev(q\red r)$.
\end{lemma}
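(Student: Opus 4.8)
The plan is to prove, by induction on $k = \lev(q\red r)$, the statement: for every parallel program with a program hole $\para(\ldots,p\{\place\},\ldots)$ and every rewrite relation $q\red r$ with $\lev(q\red r) = k$, equation~(\ref{equ:parallel-lemma-1}) holds. The base case $k=1$ is exactly Lemma~\ref{lemma:parallel-lemma}, so nothing new is needed there. For the inductive step, assume equation~(\ref{equ:parallel-lemma-1}) holds for every program hole and every rewrite relation of level at most $k$, and let $q\red r$ satisfy $\lev(q\red r) = k+1$. By Def.~\ref{def:Level of a Rewrite Relation}, such a relation can only be obtained through rule $(r2)$, so there exist SPs $q'$, $r'$ and a program hole $t\{\place\}$ with $q = t\{q'\}$, $r = t\{r'\}$, $q'\red r'$ and $\lev(q'\red r') = k$.

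The one auxiliary fact I would first establish is that program holes are closed under composition: substituting the hole $t\{\place\}$ into the place of $p\{\place\}$ (rather than substituting an SP) again produces a program hole --- call it the \emph{composite hole} --- with the property that filling its place with any SP $s$ yields $p\{t\{s\}\}$. This is a routine structural induction on $p\{\place\}$ following the grammar of program holes; the parallel production $\para(p,\ldots,p\{\place\},\ldots,p)$ is handled directly, so that the composite hole still sits in the same argument position of the outer $\para$ as $p\{\place\}$ did. Granting this, $p\{q\} = p\{t\{q'\}\}$ is precisely the result of filling the composite hole with $q'$, and $p\{r\}$ the result of filling it with $r'$; hence $\para(\ldots,p\{q\},\ldots)$ and $\para(\ldots,p\{r\},\ldots)$ arise from the parallel program with the composite hole by filling with $q'$ and $r'$ respectively. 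Applying the induction hypothesis to that parallel program and to the rewrite $q'\red r'$, which has level $k$, gives equation~(\ref{equ:parallel-lemma-1}) with $q'$, $r'$ in place of $q$, $r$ --- and this is exactly equation~(\ref{equ:parallel-lemma-1}) for $q\red r$. This closes the induction.

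I expect the main obstacle to be nothing deeper than this closure-under-composition bookkeeping for program holes, plus the small observation that a rewrite possessing a level never invokes rule $(\para,\mathit{seq})$ --- which holds because, by Def.~\ref{def:Level of a Rewrite Relation} and the side condition on $(r2)$, every rewrite with a defined level is built only from the rules of types $(b)$ and $(c)$ and from $(r2)$. Beyond that, the inductive step is purely syntactic: it does not touch the semantics of SPs at all, since all the semantic reasoning is already packaged inside the base case Lemma~\ref{lemma:parallel-lemma}.
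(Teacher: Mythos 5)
Your proof is correct and follows essentially the same route as the paper's: induction on $\lev(q\red r)$, unfolding the level-$(k{+}1)$ rewrite as $q=t\{q'\}$, $r=t\{r'\}$ with $\lev(q'\red r')=k$, forming the composite hole $p\{t\{\place\}\}$, and applying the induction hypothesis to it. The only difference is that you explicitly isolate the closure-of-holes-under-composition fact, which the paper uses silently; that is a reasonable bit of added care, not a different argument.
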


\begin{proof}
    We prove by induction on $\lev(q\red r)$, where Lemma.~\ref{lemma:parallel-lemma} has shown the base case.
    Now for any $n>1$, we assume that the equation (\ref{equ:parallel-lemma-1}) holds for $\lev(q\red r)<n$,
    next we prove that the equation (\ref{equ:parallel-lemma-1}) holds for $\lev(q\red r) = n$.

    Since $\lev(q\red r) >1$, we know that there exist a program $u$ and two programs $u_1$, $r_1$ such that $q = u\{q_1\}$, $r = u\{r_1\}$ and $q_1\red r_1$.
    Let $p\{q_1\}_1 = p\{u\{q_1\}\}$ and $p\{r_1\}_1 = p\{u\{r_1\}\}$ (where we use $p\{\place\}_1$ to distinguish itself from $p\{\place\}$),
    since $\lev(q_1\red r_1) < n$, by induction hypothesis, we immediately have $\val(\para(...,p\{q_1\}_1,...)) = \val(\para(...,p\{r_1\}_1,...))$.
    So $\val(\para(...,p\{q\},...)) = \val(\para(...,p\{r\},...))$ holds.

\end{proof}

\begin{prop}[Soundness of Rule $(r2)$ on level $1$]
    \label{prop:Soundness of the Rewrite Rule (r2) on level 1}
    Given a closed SP $p$ and a program hole $p\{\place\}$ of $p$, for any SPs $q$ and $r$ that satisfy $q\red r$ and $\lev(q\red r) = 1$, we have
    \begin{equation}
    \label{equ:Soundness of the Rewrite Rule (r2)}
    \val(p\{q\}) = \val(p\{r\}).
    \end{equation}
\end{prop}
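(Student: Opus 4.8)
The plan is to prove $\val(p\{q\}) = \val(p\{r\})$ by structural induction on the program hole $p\{\place\}$. The key observation is that the semantics of the sequential constructors $\seq$, $\cup$ and ${}^*$ is compositional in the valuations of the immediate subprograms (Def.~\ref{def:Valuation of Closed SPs}), so those induction steps are routine; by contrast the semantics of $\para$ is \emph{not} a function of the valuation of a component --- it is defined through trecs and the function $\Merge$ --- so that case must be argued separately. Fortunately, exactly this case has already been isolated and settled in Lemma~\ref{lemma:parallel-lemma}, so the present proposition reduces to assembling these pieces.

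For the base case $p\{\place\} = \place$ we have $p\{q\} = q$ and $p\{r\} = r$; since $\lev(q\red r) = 1$, the rewrite $q\red r$ is an instance of one of the rules of types $(b)$ or $(c)$ of Table~\ref{table:Rules for Parallel SPs}, so $\val(q) = \val(r)$ follows directly from Prop.~\ref{prop:Soundness of the Rewrite Rules for Closed Sequential Programs} and Prop.~\ref{prop:Soundness of the Rewrite Rules for Parallel Programs}. For the cases $p\{\place\} = p_1\{\place\}\seq p_2$, $p_1\seq p_2\{\place\}$, $p_1\{\place\}\cup p_2$, $p_1\cup p_2\{\place\}$ and $(p_1\{\place\})^*$, the induction hypothesis yields $\val(p_1\{q\}) = \val(p_1\{r\})$ (resp. $\val(p_2\{q\}) = \val(p_2\{r\})$), and the claim follows by substituting this equality into the corresponding clause of Def.~\ref{def:Valuation of Closed SPs}: $\val(p\seq q) = \val(p)\circ\val(q)$, $\val(p\cup q) = \val(p)\cup\val(q)$ and $\val(p^*) = \bigcup_{n=0}^{\infty}\val^n(p)$, each of which depends only on its constituent valuations.

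For the remaining case $p\{\place\} = \para(p_1,\dots,p_{i-1},p_i\{\place\},p_{i+1},\dots,p_n)$, I would first apply rule $(r2)$ to $q\red r$ to obtain the one-step structural rewrite $p_i\{q\}\red p_i\{r\}$ (still with $\lev(q\red r) = 1$), and then invoke Lemma~\ref{lemma:parallel-lemma}, with its program hole instantiated to $p_i\{\place\}$, to get $\val(\para(\dots,p_i\{q\},\dots)) = \val(\para(\dots,p_i\{r\},\dots))$, which is exactly $\val(p\{q\}) = \val(p\{r\})$.

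The main obstacle is precisely this parallel case, since the valuation of a parallel program cannot be reconstructed from the valuations of its components; but the heavy lifting for it --- a further case analysis on which type-$(b)$/$(c)$ rule is applied, together with the trec/$\Merge$ bookkeeping --- has already been carried out inside Lemma~\ref{lemma:parallel-lemma}, so here it amounts to a single invocation. I would also keep in mind the standing well-definedness assumption on $p\{q\}$ and $p\{r\}$: it guarantees that all valuations appearing above exist, and that along the sequential/choice/star descent of the induction the subprograms reached remain closed, so that Prop.~\ref{prop:Soundness of the Rewrite Rules for Closed Sequential Programs} and Prop.~\ref{prop:Soundness of the Rewrite Rules for Parallel Programs} legitimately apply in the base case (a hole lying inside a $\para$ block is never descended into by the induction, but is dispatched wholesale to Lemma~\ref{lemma:parallel-lemma}).
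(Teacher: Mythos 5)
Your proposal is correct and follows essentially the same route as the paper's proof: structural induction on the program hole $p\{\place\}$, with the base case discharged by the soundness propositions for the type-$(b)$ and type-$(c)$ rewrite rules, the sequential/choice/star cases by the induction hypothesis together with the compositional clauses of Def.~\ref{def:Valuation of Closed SPs}, and the parallel case delegated wholesale to Lemma~\ref{lemma:parallel-lemma}. No gaps.
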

\begin{proof}
    We proceed the proof by induction on the syntactic structure of $p$.

    If $p\{\place\} = \place$, i.e., $p\{\place\}$ is just a hole, then we need to prove $\val(q) = \val(r)$ for the closed programs $q$ and $r$, which are the direct results of Prop.~\ref{prop:Soundness of the Rewrite Rules for Closed Sequential Programs}, Prop.~\ref{prop:Soundness of the Rewrite Rules for Parallel Programs} and Prop.~\ref{???} for all the cases when $\lev(q\red r) = 1$.

    If $p\{\place\} = p_1\{\place\}\seq p_2$, by induction hypothesis we have that $\val(p_1\{q\}) = \val(p_1\{r\})$, hence $\val(p\{q\}) = \val(p_1\{q\})\circ \val(p_2) = \val(p_1\{r\})\circ \val(p_2) = \val(p\{r\})$.

    The case for $p\{\place\} = p_1\seq p_2\{\place\}$ is similar.

    If $p\{\place\} = p_1\{\place\}\cup p_2$, by induction hypothesis we have that $\val(p_1\{q\}) = \val(p_1\{r\})$, hence $\val(p\{q\}) = \val(p_1\{q\})\cup \val(p_2) = \val(p_1\{r\})\cup \val(p_2) = \val(p\{r\})$.

    The case for $p\{\place\} = p_1\cup p_2\{\place\}$ is similar.

    If $p\{\place\} = (p_1\{\place\})^*$, by induction hypothesis we have that $\val(p_1\{q\}) = \val(p_1\{r\})$, so $\val(p\{q\}) = \bigcup^\infty_{i=0}\val^n(p_1\{q\}) = \bigcup^\infty_{i=0}\val^n(p_1\{r\}) = \val(p\{r\})$.

    If $p\{\place\} = \para(p_1,...,p_i\{\place\},...,p_n)$, we directly obtain the result by Lemma~\ref{lemma:parallel-lemma}.
\end{proof}

With Prop.~\ref{prop:Soundness of the Rewrite Rule (r2) on level 1}, now we consider the soundness of rule $(r2)$ on an arbitrary level.

\begin{prop}[Soundness of Rule $(r2)$]
\label{prop:Soundness of the Rewrite Rule r2}
\ifx
    Given a closed SP $p$ and a program hole $p\{\place\}$ of $p$, for any SPs $q$ and $r$ that satisfy $q\red r$, we have
    \begin{equation}
    \label{equ:Soundness of the Rewrite Rule (r2)}
    \val(p\{q\}) = \val(p\{r\}).
    \end{equation}
\fi
    The equation (\ref{equ:Soundness of the Rewrite Rule (r2)}) of Prop.~\ref{prop:Soundness of the Rewrite Rule (r2) on level 1} holds for any $\lev(q\red r)$.
\end{prop}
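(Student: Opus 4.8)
The plan is to prove the proposition by induction on $\lev(q \red r)$, in exactly the way that Lemma~\ref{lemma:parallel-lemma-2} extends Lemma~\ref{lemma:parallel-lemma}. The base case $\lev(q\red r) = 1$ is precisely Prop.~\ref{prop:Soundness of the Rewrite Rule (r2) on level 1}, which is already at hand, so the only work lies in the inductive step. I would phrase the induction hypothesis as: for \emph{every} program hole and \emph{every} rewrite relation of level strictly less than $n$, the equation (\ref{equ:Soundness of the Rewrite Rule (r2)}) holds; this universal quantification over holes is what makes the step go through.

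For the inductive step, fix $n > 1$ and a rewrite $q \red r$ with $\lev(q\red r) = n$. By Def.~\ref{def:Level of a Rewrite Relation}, such a rewrite arises from rule $(r2)$, so there exist SPs $q_1, r_1$ and a program hole $u\{\place\}$ with $q = u\{q_1\}$, $r = u\{r_1\}$, $q_1 \red r_1$, and $\lev(q_1 \red r_1) = n - 1$. I would then observe that nesting $u\{\place\}$ inside $p\{\place\}$ produces $p\{u\{\place\}\}$, which is again a program hole: filling the $\place$ of $p\{\place\}$ with $u\{\place\}$ yields an expression still matching the grammar of program holes (a routine check, since $u\{\place\}$ is itself a well-formed program-hole body). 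Consequently $p\{q\} = p\{u\{q_1\}\}$ and $p\{r\} = p\{u\{r_1\}\}$, both of which are the well-defined closed SPs whose valuations appear in the statement. Applying the induction hypothesis to the hole $p\{u\{\place\}\}$ and the rewrite $q_1 \red r_1$ of level $n-1 < n$ gives $\val(p\{u\{q_1\}\}) = \val(p\{u\{r_1\}\})$, i.e. $\val(p\{q\}) = \val(p\{r\})$, which completes the induction.

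The only delicate point — really bookkeeping rather than a genuine difficulty — is the setup: the induction hypothesis must range over all program holes so that it can be re-applied to the composite hole $p\{u\{\place\}\}$, and one must note that composition of program holes is again a program hole and that well-definedness is inherited. Both facts follow directly from the definitions already given in the paper, and no new semantic computation is required; the argument is exactly the ``peel off one application of $(r2)$'' move used in the proof of Lemma~\ref{lemma:parallel-lemma-2}.
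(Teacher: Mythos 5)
Your proof is correct, and it reorganizes the argument in a way that is genuinely different from (and somewhat cleaner than) the paper's. The paper runs a \emph{nested} induction: an outer induction on $\lev(q\red r)$ and, inside the inductive step, a second induction on the structure of the hole $p\{\place\}$, whose base case $p\{\place\}=\place$ peels off one application of rule $(r2)$ and applies the outer hypothesis to the inner hole $u\{\place\}$, whose structural cases redo the compositional semantic computations of Prop.~\ref{prop:Soundness of the Rewrite Rule (r2) on level 1}, and whose parallel case invokes Lemma~\ref{lemma:parallel-lemma-2}. You instead keep a single induction on the level, state the hypothesis so that it quantifies over all program holes, and observe that $p\{u\{\place\}\}$ is again a program hole, so that $p\{q\}=(p\{u\{\place\}\})\{q_1\}$ and the hypothesis applies directly to the rewrite $q_1\red r_1$ at level $n-1$. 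This is the same ``peel off one $(r2)$'' move, but composing the holes lets you skip the inner structural induction entirely: your recursion bottoms out at Prop.~\ref{prop:Soundness of the Rewrite Rule (r2) on level 1} applied to the composite hole, so you never need Lemma~\ref{lemma:parallel-lemma-2} (only Lemma~\ref{lemma:parallel-lemma}, already used inside the level-$1$ proposition). The price is the two bookkeeping facts you correctly flag --- closure of program holes under composition, and the universally quantified induction hypothesis --- both immediate from the grammar of holes and from Def.~\ref{def:Level of a Rewrite Relation}. The paper's version is more explicit about where each constructor's semantics enters; yours is shorter and makes the logical dependencies cleaner.
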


\begin{proof}
    We prove by induction on $\lev(q\red r)$.
    By Prop.~\ref{prop:Soundness of the Rewrite Rule (r2) on level 1} we obtain the base case.
    Now for any $n>1$, we assume that the equation (\ref{equ:Soundness of the Rewrite Rule (r2)}) holds for $\lev(q\red r) < n$,
    next we show that the equation (\ref{equ:Soundness of the Rewrite Rule (r2)}) holds for $\lev(q\red r) = n$.

    We prove the proposition by induction on the structure of $p\{\place\}$.

    If $p\{\place\} = \place$, i.e., $p\{\place\}$ is just a hole, then we need to prove $\val(q) = \val(r)$ for the closed programs $q$ and $r$.
    Since $\lev(q\red r) > 1$, there exist a program $u$ and two programs $q_1$, $r_1$ such that $q = u\{q_1\}$, $r = u\{r_1\}$ and $q_1\red r_1$.
    Because $\lev(q_1\red r_1) < n$, by induction hypothesis we immediately obtain that $\val(q) = \val(r)$.

    The proofs for the inductive cases of $p\{\place\} = p_1\{\place\}\seq p_2$, $p\{\place\} = p_1\seq p_2\{\place\}$, $p\{\place\} = p_1\{\place\}\cup p_2$, $p\{\place\} = p_1\cup p_2\{\place\}$ and $p\{\place\} = (p_1\{\place\})^*$
    are similar to those in Prop.~\ref{prop:Soundness of the Rewrite Rule (r2) on level 1} and we omit them here.

    The case when $p\{\place\} = \para(p_1,...,p_i\{\place\},...,p_n)$ is straightforward by Lemma~\ref{lemma:parallel-lemma-2}.

\end{proof}

\ifx
\begin{prop}
Given a program $p$ and a program hole $p\{\place\}$ of $p$, for any programs $q$ and $r$, if $\val(q) = \val(r)$, then
$$\val(p\{q\}) = \val(p\{r\}).$$
\end{prop}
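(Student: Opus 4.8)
The natural plan is a structural induction on the shape of the program hole $p\{\place\}$, mirroring the skeleton used for Prop.~\ref{prop:Soundness of the Rewrite Rule (r2) on level 1} but with the single rewrite step replaced by the abstract hypothesis $\val(q)=\val(r)$. In the base case $p\{\place\}=\place$ we have $p\{q\}=q$ and $p\{r\}=r$, so the conclusion is exactly the hypothesis. For $p\{\place\}=p_1\{\place\}\seq p_2$ the induction hypothesis gives $\val(p_1\{q\})=\val(p_1\{r\})$, whence $\val(p\{q\})=\val(p_1\{q\})\circ\val(p_2)=\val(p_1\{r\})\circ\val(p_2)=\val(p\{r\})$ by Def.~\ref{def:Valuation of Closed SPs}; the symmetric case $p_1\seq p_2\{\place\}$ and the choice cases $p_1\{\place\}\cup p_2$, $p_1\cup p_2\{\place\}$ are the same computation with $\circ$ replaced by $\cup$; and for $(p_1\{\place\})^*$ one uses that $\val(p_1\{q\})=\val(p_1\{r\})$ forces $\val^n(p_1\{q\})=\val^n(p_1\{r\})$ for every $n$, so the countable unions coincide. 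All of these amount to routine bookkeeping.

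The real difficulty is concentrated in the parallel case $p\{\place\}=\para(p_1,\ldots,p_i\{\place\},\ldots,p_n)$. Unlike $\seq$, $\cup$ and $*$, the valuation of a parallel program is \emph{not} defined componentwise: by Def.~\ref{def:Par} and Def.~\ref{def:Valuation of the Trecs} it is assembled by splitting each $p_j$ into its trecs (cf.\ Prop.~\ref{prop:trecs property}) and then running the syntactic recursion $\valt$, whose central step $\Merge$ (Def.~\ref{def:Function Merge}) walks through the individual micro-events of the components' macro events and, in the signal cases, inspects their names and values. Two semantically equivalent programs plugged into the $i$-th slot may therefore have completely different trec sets and drive $\Merge$ through different computations, so $\val(\para(\ldots,p_i\{q\},\ldots))=\val(\para(\ldots,p_i\{r\},\ldots))$ does not follow from $\val(p_i\{q\})=\val(p_i\{r\})$ by mere substitution. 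Worse, the hole may sit inside an \emph{open} subprogram that exchanges signals with its siblings; for such a subprogram $\val(\cdot)$ is not even defined, so the induction hypothesis yields no usable equality at all. The plan to push the argument through would be to first establish a separate lemma that $\Par$ respects semantic equivalence in each argument --- equivalently, that the outcome of $\Merge$ on closed macro events depends only on their valuations, which rests essentially on the variable-isolation restriction of Def.~\ref{def:Restriction on Parallel SPs} (assignment targets are bound, hence private to their component, so merged updates commute and are blind to micro-step structure) together with the triviality of $\Merge$ on closed macro events --- but even that lemma does not address the open-subprogram situation.

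For this reason the honest conclusion is that the fully general statement is delicate, which is presumably why in the paper it is replaced by the rule-restricted Prop.~\ref{prop:Soundness of the Rewrite Rule r2}: there $q\red r$ ranges over the finite set of rewrite rules of Table~\ref{table:Rules for Parallel SPs}, whose application is purely syntactic and is meaningful even inside open subprograms, and the parallel case is then handled not by an abstract congruence argument but by the concrete finite case analysis of Lemma~\ref{lemma:parallel-lemma} and its extension Lemma~\ref{lemma:parallel-lemma-2}, which examine rule by rule what happens to the trec sets and to $\Merge$. The main obstacle --- and the reason one prefers the restricted formulation --- is thus precisely the claim that the syntactic merge computation of Def.~\ref{def:Function Merge} is invariant under semantic equivalence of its inputs; every remaining case of the induction is straightforward.
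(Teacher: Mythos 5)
Your assessment is essentially right, and it matches the actual state of the paper more closely than you could have known: this proposition appears only inside a conditional block of the source that is excluded from compilation, and the paper never proves it (nor uses it). The soundness argument the paper actually runs goes entirely through the rule-restricted congruence of Prop.~\ref{prop:Soundness of the Rewrite Rule (r2) on level 1} and Prop.~\ref{prop:Soundness of the Rewrite Rule r2}, supported by Lemma~\ref{lemma:parallel-lemma} and Lemma~\ref{lemma:parallel-lemma-2}, exactly the workaround you surmised. Your handling of the $\seq$, $\cup$ and $*$ cases coincides with the computation the paper performs in those propositions, and your identification of the parallel case as the genuine obstruction is correct for both of the reasons you give: $\val(\para(p_1,\ldots,p_n))$ is not a function of the component valuations but is assembled syntactically through $\tau$, $\valt$ and $\Merge$ (Def.~\ref{def:Par}, \ref{def:Valuation of the Trecs}, \ref{def:Function Merge}), and when the hole sits inside an open component the induction hypothesis $\val(p_i\{q\})=\val(p_i\{r\})$ is not even well formed, so the induction cannot close.

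The one refinement worth recording is how the paper's restricted version actually dodges the problem: Lemma~\ref{lemma:parallel-lemma} does not establish any semantic congruence for $\Par$ at all. It checks, rule by rule over the finite set of rewrites of Table~\ref{table:Rules for Parallel SPs}, that each syntactic step either leaves the trec set $\tau(p\{\cdot\})$ literally unchanged or commutes with $\Merge$ on the affected trecs --- a purely syntactic invariant that remains meaningful for open subprograms. So the restricted proposition is not a special case of the general one proved by the same method; it is proved by a different (syntactic) invariant precisely because the semantic invariant you would need --- that the outcome of $\Merge$ depends only on the valuations of its inputs --- is never established in the paper and would require its own argument resting on Def.~\ref{def:Restriction on Parallel SPs}. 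Your conclusion that the general statement should not be asserted without that missing lemma is the right one.
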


\begin{prop}
Given a formula $\phi$ and a program hole $\phi\{\place\}$ of $\phi$, for any programs $p$ and $q$, if $\val(p) = \val(q)$, then
$$\val(\phi\{p\}) = \val(\phi\{q\}).$$
\end{prop}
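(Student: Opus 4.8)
The plan is to argue by structural induction on the program hole $\phi\{\place\}$, reducing everything to the corresponding statement for program holes of SPs. So I would first dispose of the auxiliary claim that $\val(p)=\val(q)$ implies $\val(r\{p\})=\val(r\{q\})$ for every program hole $r\{\place\}$ of an SP $r$; this is a structural induction on $r\{\place\}$ in exactly the style of the proof of Prop.~\ref{prop:Soundness of the Rewrite Rule r2}, only simpler because there is no rewrite relation and hence no level to induct on. The base case $r\{\place\}=\place$ is the hypothesis; the cases $r_1\{\place\}\seq r_2$, $r_1\seq r_2\{\place\}$, $r_1\{\place\}\cho r_2$, $r_1\cho r_2\{\place\}$ and $(r_1\{\place\})^*$ follow immediately from the matching clauses of Def.~\ref{def:Valuation of Closed SPs}, since $\circ$, $\cup$ and $\bigcup$ are defined as set operations and therefore respect equality of trace sets; and the case $\para(...,r_i\{\place\},...)$ follows from Def.~\ref{def:Par} by re-running the argument of Lemmas~\ref{lemma:parallel-lemma} and~\ref{lemma:parallel-lemma-2} with the hypothesis $q\red r$ weakened to $\val(q)=\val(r)$ for the plugged subprograms.

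With this lemma available I would induct on the grammar of $\phi\{\place\}$. The two leaf cases are $\phi\{\place\}=[r\{\place\}]\psi$ and $\phi\{\place\}=[r\{\place\}]\Box\psi$, where $r\{\place\}$ is a program hole of an SP. In the first, the lemma gives $\val(r\{p\})=\val(r\{q\})$, and then item 6 of Def.~\ref{def:Valuation of sDTL Formulas} computes $\val([r\{p\}]\psi)$ and $\val([r\{q\}]\psi)$ from that single trace set, so the two coincide; the $\Box$-case is identical using item 7 and $\val_\pi$. For $\phi\{\place\}=\neg\psi\{\place\}$, $\phi\{\place\}=\psi\{\place\}\wedge\chi$ and $\phi\{\place\}=\chi\wedge\psi\{\place\}$, the induction hypothesis gives $\val(\psi\{p\})=\val(\psi\{q\})$, and items 3--4 of Def.~\ref{def:Valuation of sDTL Formulas} (set complement and intersection are congruences for set equality) finish these cases.

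The remaining --- and only delicate --- case is $\phi\{\place\}=\forall x.\psi\{\place\}$. Here item 5 of Def.~\ref{def:Valuation of sDTL Formulas} unfolds $\val(\forall x.\psi\{p\})$ as $\{s \mid \text{for all } n\in\mbb{Z},\ s\in\val((\psi\{p\})[n/x])\}$, and since substitution acts pointwise on the syntax tree, $(\psi\{p\})[n/x]=(\psi[n/x])\{p[n/x]\}$: the formula context stays a formula hole, but the plugged program has changed from $p$ to $p[n/x]$. To apply the induction hypothesis one would then need $\val(p[n/x])=\val(q[n/x])$, which is strictly stronger than $\val(p)=\val(q)$. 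I would close this by strengthening the statement being proved to quantify over pairs $p,q$ that remain semantically equal under every substitution of integers for variables --- equivalently, by recording that any pair $q\red r$ produced by the rewrite rules of Table~\ref{table:Rules for Parallel SPs} has this stability, checked rule by rule (each rule either leaves free variables untouched or rewrites to $\halt$, whose semantics is substitution-invariant) --- after which the $\forall x$ step reduces to the strengthened induction hypothesis applied to $\psi[n/x]\{\place\}$ uniformly in $n$. This substitution-stability bookkeeping around the $\forall x$ clause is the main obstacle; every other step is a one-line appeal to a single clause of the semantics.
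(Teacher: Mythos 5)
Your top-level plan --- structural induction on $\phi\{\place\}$, with $[r\{\place\}]\psi$ and $[r\{\place\}]\Box\psi$ as base cases discharged by a program-level congruence and the Boolean cases by congruence of the set operations --- is the same route the paper takes (it proves this statement as the body of the soundness proof of rule $(r1)$). The divergence, and the genuine gap, is in the program-level lemma you set up first. The paper never proves that general congruence: in its $(r1)$ proof it only needs $\val(r\{p\})=\val(r\{q\})$ for pairs $p\red q$ produced by the rewrite rules, and it obtains this from Prop.~\ref{prop:Soundness of the Rewrite Rule r2}, whose parallel case (Lemmas~\ref{lemma:parallel-lemma} and~\ref{lemma:parallel-lemma-2}) is proved by case analysis on \emph{which rewrite rule} generated the pair: for the type-$(b)$ rules the argument is that the trec sets are \emph{syntactically} equal, $\tau(p\{q\})=\tau(p\{r\})$, and for the type-$(c)$ rules it tracks the $\Merge$ computation directly. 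Neither argument survives weakening the hypothesis to bare semantic equality. For instance $q=(x:=1\nex x:=2\nex\epsilon)$ and $r=(x:=2\nex\epsilon)$ satisfy $\val(q)=\val(r)$ but $\tau(q)\neq\tau(r)$, and the semantics of a parallel context is computed from the trecs and the micro-step-sensitive $\Merge$, not from $\val$ of the component. So ``re-running the argument with $q\red r$ weakened to $\val(q)=\val(r)$'' is not a proof step; the parallel case of your lemma is a new congruence claim (plausibly true because of Def.~\ref{def:Restriction on Parallel SPs}, but it needs its own argument that $\Merge$ is insensitive to the micro-step decomposition of a closed component once its macro-step endpoints are fixed). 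This is precisely why the lemma is \emph{harder}, not simpler, than Prop.~\ref{prop:Soundness of the Rewrite Rule r2}.

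On the $\forall x$ case you have spotted a real subtlety that the paper waves away with ``similar'': $(\psi\{p\})[n/x]$ does change the plugged program to $p[n/x]$. However, your repair --- restricting attention to pairs that stay semantically equal under all substitutions --- proves a different proposition from the one stated, which quantifies over \emph{all} $p,q$ with $\val(p)=\val(q)$. To prove the literal statement you need the substitution lemma $\val(p)=\val(q)\Rightarrow\val(p[n/x])=\val(q[n/x])$ (which holds when $x\notin\BV(p)\cup\BV(q)$, since a trace of $p[n/x]$ is exactly a trace of $p$ whose states have been shifted in the $x$-coordinate); either prove that or note explicitly that you are establishing only the restricted variant needed for rule $(r1)$.
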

\fi

\begin{prop}[Soundness of Rule $(\para, \mathit{seq})$]
    \label{prop:Soundness of the Rewrite Rule of (para, seq)}
    Given a well-defined parallel program $\para(p_1,...,p_n)$,
    $\val(\para(p_1,...,p_n)) = \val(\ToSeq(\para(p_1,...,p_n))$.
\end{prop}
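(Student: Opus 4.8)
The plan is to show that Algorithm~\ref{alg:ToSeq} preserves the valuation $\val$ phase by phase. Fix a well-defined parallel program $\para(p_1,\dots,p_n)$; being well defined it is a closed SP possessing a semantics. As argued after Algorithm~\ref{alg:ToSeq} (by induction on the syntactic structure of a parallel SP, the case of a subterm $q^{*}$ being dealt with through $(*,\mathit{exp})$ and $(\para,\mathit{dis})$ so that only finitely many distinct parallel derivatives are ever generated), repeated use of $(r2)$ and of the rules of types $(b)$ and $(c)$ reduces $l_1=\para(p_1,\dots,p_n)$ to the shape~(\ref{equ:ToSeq}), and continuing on $l_2,\dots,l_n$ yields the finite system of line~3; moreover every $l_i$ arising is again a well-defined closed parallel program, since a sound rewrite sends a well-defined program to a semantically equal, hence well-defined, one. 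First I would record that this system is \emph{semantically valid}: each step of the reduction deriving $l_i\red b_i\cup\alpha_{i1}\seq l_1\cup\dots\cup\alpha_{in}\seq l_n$ is an instance of $(r2)$ or of a rule of type $(b)$ or $(c)$, all of which are sound by Prop.~\ref{prop:Soundness of the Rewrite Rules for Closed Sequential Programs}, Prop.~\ref{prop:Soundness of the Rewrite Rules for Parallel Programs} and Prop.~\ref{prop:Soundness of the Rewrite Rule r2}; hence $\val(l_i)=\val(b_i\cup\alpha_{i1}\seq l_1\cup\dots\cup\alpha_{in}\seq l_n)$ for every $i$. Reading the $l_j$ as unknowns ranging over sets of traces, the tuple $(\val(l_1),\dots,\val(l_n))$ is therefore a solution of this system.

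Next I would verify that each iteration $k=n,n-1,\dots,1$ of the loop keeps the tuple $(\val(l_1),\dots,\val(l_n))$ a solution. Rearranging equation $(k)$ into the form $l_k\equiv p\cup q\seq l_k$ invokes only associativity and commutativity of $\cup$ and distributivity of $\seq$ over $\cup$, which are semantic identities (equivalently, the sound rules $(\cup,\mathit{ass})$, $(\seq,\mathit{dis}1)$, $(\seq,\mathit{dis}2)$). Replacing $l_k\equiv p\cup q\seq l_k$ by $l_k\equiv q^{*}\seq p$ is exactly the content of Prop.~\ref{prop:Arden's Rule in SPs}, which moreover asserts uniqueness, so \emph{any} solution of $l_k\equiv p\cup q\seq l_k$ must equal $\val(q^{*}\seq p)$. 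Substituting this closed form for $l_k$ into the remaining equations preserves validity because $\val$ commutes with $\seq$, $\cup$ and $*$ (Def.~\ref{def:Valuation of Closed SPs}). After the iteration $k=1$ the single surviving equation reads $l_1\equiv r$ with $r$ a sequential program (a $*$ of a union of macro events, composed with a choice of the $b_i$'s and products of macro events), and it holds of $\val(l_1)$; hence $\val(\para(p_1,\dots,p_n))=\val(l_1)=\val(r)=\val(\ToSeq(\para(p_1,\dots,p_n)))$, which is the claim.

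The one genuine side condition, and the step I expect to be the main obstacle, is that Prop.~\ref{prop:Arden's Rule in SPs} requires $q\not\equiv\noth$ at each application. I would establish, as an invariant of the whole elimination, that in every equation occurring during the procedure each coefficient of an unknown (the program $u$ in a summand $u\seq l_j$) is semantically a finite union of programs each headed by a macro event; by Def.~\ref{def:Valuation of Closed SPs} every trace of such a program has length exactly $2$, whereas $\val(\noth)=\bff{S}$ is nonempty and consists of traces of length $1$, so no such coefficient is $\equiv\noth$. The invariant holds at line~3 because the coefficient of $l_j$ in~(\ref{equ:ToSeq}) is the macro event $\alpha_{ij}$, and it is preserved by substitution: replacing $l_j$ by $q_j^{*}\seq p_j$ turns $u\seq l_j$ into $u\seq q_j^{*}\seq p_j$, still headed by the macro event at the head of $u$, and distributing $p_j$ (a constant term plus, by the invariant, unknown-summands again headed by macro events) through the product produces only further summands headed by macro events. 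Thus when equation $(k)$ is put in the form $l_k\equiv p\cup q\seq l_k$ its coefficient $q$ is one of these unions, so $q\not\equiv\noth$ and Arden's rule legitimately applies; the remaining bookkeeping — termination of the reduction at line~2, finiteness of the system, and closedness and well-definedness of all derivatives — is then routine.
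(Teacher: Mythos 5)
Your proposal is correct and follows the same route as the paper, which disposes of this proposition in one line by appealing to Arden's rule (Prop.~\ref{prop:Arden's Rule in SPs}) together with the soundness of rule $(r2)$ and the rewrite rules of types $(b)$ and $(c)$; you simply spell out the details the paper leaves implicit, and your treatment of the side condition $q\not\equiv\noth$ via the macro-event-headed-coefficient invariant is a genuinely useful addition that the paper omits (note only that traces of composite coefficients such as $u\seq q_j^{*}\seq\alpha_{jm}$ have length \emph{at least} $2$ rather than exactly $2$, which does not affect the conclusion).
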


\begin{proof}
The soundness of rule $(\para, \mathit{seq})$ is directly from Arden's rule (Prop.~\ref{prop:Arden's Rule in SPs}) and the soundness of rule $(r2)$ and other rewrite rules of the types (b) and (c).
\end{proof}

\begin{prop}[Soundness of Rule $(r1)$]
    \label{prop:Soundness of the Rewrite Rule r1}
    The rewrite rule $(r1)$ is sound.
\end{prop}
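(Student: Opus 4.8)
\textbf{Proof proposal for soundness of rule $(\mathit{r1})$.}

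The plan is to obtain soundness of $(\mathit{r1})$ as a short corollary of the soundness of $(\mathit{r2})$ (Prop.~\ref{prop:Soundness of the Rewrite Rule r2}) together with the soundness of the remaining rewrite rules, by adding one outer layer of structural induction that climbs from program holes up through the formula constructors. Since $(\mathit{r1})$ is written with a double inference line, by the reading of $\infer=$ given in Sect.~\ref{section:Sequent Calculus} its soundness amounts to showing that the formula $\phi\{q\}\leftrightarrow \phi\{p\}$ is valid, i.e.\ that $\val(\phi\{p\}) = \val(\phi\{q\})$, for all closed SPs $p,q$ with $p\red q$ and every formula program hole $\phi\{\place\}$ such that $\phi\{p\}$ and $\phi\{q\}$ are SDL formulas. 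Throughout, the well-definedness hypothesis must be kept in view: since $\phi\{p\}$, $\phi\{q\}$ are SDL formulas, whenever the hole sits inside a box the corresponding program fragment $p'\{p\}$, $p'\{q\}$ is a closed (well-defined) SP, so the invoked semantic identities make sense.

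First I would record the program-level fact that $p\red q$ with $p,q$ closed implies $\val(p)=\val(q)$: the single rewrite step instantiates a rule of type $(b)$, of type $(c)$, $(\mathit{r2})$, or $(\para,\mathit{seq})$, and the claim follows from Prop.~\ref{prop:Soundness of the Rewrite Rules for Closed Sequential Programs}, Prop.~\ref{prop:Soundness of the Rewrite Rules for Parallel Programs}, Prop.~\ref{prop:Soundness of the Rewrite Rule r2}, and Prop.~\ref{prop:Soundness of the Rewrite Rule of (para, seq)} respectively. I would also note the routine fact that the rewrite relation $\red$ commutes with substituting a constant for a variable (each rewrite schema is stable under $[n/x]$; for the $\Merge$- and $\ToSeq$-based rules this is a mechanical check), so $p[n/x]\red q[n/x]$ whenever $p\red q$, and substitution preserves closedness.

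Then I would run an induction on the structure of $\phi\{\place\}$ following the grammar of formula program holes. Its leaves are $[p'\{\place\}]\psi$ and $[p'\{\place\}]\Box\psi$ with $p'\{\place\}$ a program hole; by clauses (6) and (7) of Def.~\ref{def:Valuation of sDTL Formulas} it suffices in these cases to show $\val(p'\{p\})=\val(p'\{q\})$, which is exactly Prop.~\ref{prop:Soundness of the Rewrite Rule r2} applied to the program hole $p'\{\place\}$ and the reduction $p\red q$ (the parallel sub-case inside $p'\{\place\}$ being handled there through Lemma~\ref{lemma:parallel-lemma} and Lemma~\ref{lemma:parallel-lemma-2}). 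The inductive cases $\neg\phi_1\{\place\}$, $\phi_1\{\place\}\wedge\phi$ and $\phi\wedge\phi_1\{\place\}$ are immediate from the induction hypothesis together with clauses (3) and (4) of Def.~\ref{def:Valuation of sDTL Formulas}. For $\forall x.\phi_1\{\place\}$, using clause (5) I would push $[n/x]$ through the filling, obtaining $(\phi_1\{p\})[n/x] = (\phi_1[n/x])\{p[n/x]\}$, and then apply the induction hypothesis to the same-shaped (hence smaller than $\forall x.\phi_1\{\place\}$) hole $\phi_1[n/x]\{\place\}$ with the closed programs $p[n/x]\red q[n/x]$.

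The only genuinely new bookkeeping compared with the proof of $(\mathit{r2})$ is this $\forall x$ case: one must be slightly careful when the hole is filled with a closed program mentioning $x$ freely and then $n$ is substituted for $x$. This is dealt with by the substitution-commutation remark above, or equivalently by the bound-variable renaming convention of Sect.~\ref{section:Syntax of SDL}, which keeps the bound $x$ fresh with respect to $p$ and $q$ so that $p[n/x]=p$, $q[n/x]=q$ and one appeals directly to $\val(p)=\val(q)$. Apart from that wrinkle I expect no real obstacle: rule $(\mathit{r1})$ is essentially Prop.~\ref{prop:Soundness of the Rewrite Rule r2} with the first-order formula constructors tacked on top, and the main care-point is purely to carry the well-definedness hypotheses along so every $\val(\cdot)$ appearing in the argument is defined.
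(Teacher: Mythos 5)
Your proposal matches the paper's own proof: both proceed by structural induction on the formula program hole $\phi\{\place\}$, discharge the base cases $[r\{\place\}]\psi$ and $[r\{\place\}]\Box\psi$ via the soundness of rule $(\mathit{r2})$ and the type-(b)/(c) rewrite rules to get $\val(r\{p\})=\val(r\{q\})$, and handle the connective cases directly from the induction hypothesis and Def.~\ref{def:Valuation of sDTL Formulas}. Your extra bookkeeping for the $\forall x$ case (substitution commuting with hole-filling and bound-variable freshness) is a detail the paper dismisses as ``similar,'' but it does not change the argument.
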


\begin{proof}
    Since the programs $p$ and $q$ in rule $(r1)$ are closed programs, it is enough to prove that
    if $\val(p) = \val(q)$, then $\val(\phi\{p\}) = \val(\phi\{q\})$.

    We proceed by induction on the syntactic structure of $\phi\{\place\}$, where the base cases are $\phi\{\place\} = [r\{\place\}] \psi$ and $\phi\{\place\} = [r\{\place\}]\Box\phi$.

    For the base cases, we consider $\phi\{\place\} = [r\{\place\}] \psi$ for example, the case of $\phi\{\place\} = [r\{\place\}]\Box\psi$ is similar.
    If $\phi\{\place\} = [r\{\place\}] \psi$,
    by the soundness of rule $(r2)$ (Prop.~\ref{prop:Soundness of the Rewrite Rule r2}) and the rules of the types (b) and (c) (Prop.~\ref{prop:Soundness of the Rewrite Rules for Closed Sequential Programs}, \ref{prop:Soundness of the Rewrite Rules for Parallel Programs} and \ref{prop:Soundness of the Rewrite Rule of (para, seq)}), we have $\val(r\{p\}) = \val(r\{q\})$.
    By Def.~\ref{def:Valuation of sDTL Formulas}, we immediately obtain the result.

    The cases for $\phi\{\place\} = \neg \psi\{\place\}$, $\phi\{\place\} = \phi_1\{\place\}\wedge \phi_2$, $\phi\{\place\} = \phi_1\wedge \phi_2\{\place\}$ and $\forall x. \psi\{\place\}$ are similar, we only take
    $\phi\{\place\} = \neg \psi\{\place\}$ as an example.
    If $\phi\{\place\} = \neg \psi\{\place\}$, by induction hypothesis we have $\val(\psi\{p\}) = \val(\psi\{q\})$, then the result is straightforward by Def.~\ref{def:Valuation of sDTL Formulas}.
\end{proof}

\section{Proofs for the Relative Completeness of SDL}
\label{section:Proofs for the Relative Completeness of SDL}

In the proofs below, we often denote an AFOL formula $\phi$ as $\phi^\flat$ to distinguish it from an SDL formula.

The proof of Theorem~\ref{theorem:The ``Main Theorem''} mainly follows that of the ``main theorem'' in~\cite{Harel79}, but is augmented to fit the condition \rmn{4}.

\begin{proof}[The proof of Theorem~\ref{theorem:The ``Main Theorem''}]
    For an SDL formula $\phi$, we can convert $\phi$ into a semantical equivalent conjunctive normal form: $C_1\wedge ...\wedge C_n$.
Each clause $C_i$ is a disjunction of literals: $C_i = l_{i,1}\vee...\vee l_{i,m_i}$, where $l_{i,j}$ ($1\le i\le n$, $1\le j\le m_i$) is an atomic SDL formula, or its negation.
By the FOL rules, it is sufficient to prove that for each clause $C_i$, $\models C_i$ implies $\reld C_i$.
We proceed by induction on the sum $n$, of the number of the appearances of $[p]$ and $\la p\ra$ and the number of quantifiers $\forall x$ and $\exists x$ prefixed to dynamic formulas,
in $C_i$.

If $n=0$, there are no appearances of $[p]$ and $\la p\ra$ in $C_i$, so $C_i$ is an AFOL formula, thus we immediately obtain $\reld C_i$.

Suppose $n>0$, it is sufficient to consider the following cases:
$$C_i = \phi_1 \vee \mathop{op} \phi_2, C_i = \phi_1\vee [p] \Box\phi_2, C_i = \phi_1\vee \la p\ra \Diamond\phi_2$$
where $\mathop{op}\in \{[p],\la p\ra, \forall x, \exists x\}$.

If $C_i= \phi_1 \vee \mathop{op} \phi_2$, which is equivalent to $\neg \phi_1\to \mathop{op} \phi_2$, by the condition \rmn{1}, there exist two AFOL formulas $\psi^\flat_1$ and $\psi^\flat_2$ such that
$\models\psi^\flat_1\leftrightarrow \neg \phi_1$ and $\models \psi^\flat_2\leftrightarrow \phi_2$. Then we have $\models \psi^\flat_1\to \mathop{op} \psi^\flat_2$ holds.
By the condition \rmn{2} we have \begin{equation}\label{eq:M1}\reld\psi^\flat_1\to \mathop{op} \psi^\flat_2.\end{equation}
Since in $\psi^\flat_1\leftrightarrow \neg \phi_1$ and $\psi^\flat_2\leftrightarrow \phi_2$ the sum is strictly less than $n$, by inductive hypothesis we can get that
\begin{equation}\label{eq:M2}\reld\neg \phi_1\to \psi^\flat_1\end{equation} and $\reld\psi^\flat_2\to \phi_2$ hold.
By the condition \rmn{3} we know that \begin{equation}\label{eq:M3}\reld \mathop{op} \psi^\flat_2\to \mathop{op} \phi_2\end{equation} holds.
Based on (\ref{eq:M1}), (\ref{eq:M2}), (\ref{eq:M3}) and the FOL rules in Table~\ref{table:Rules of FOL} and \ref{table:Other Rules of FOL} we can conclude that $\reld\neg \phi_1\to \mathop{op} \phi_2$.

If $C_i = \phi_1\vee [p]\Box\phi_2$, which is equivalent to $\neg \phi_1\to [p]\Box\phi_2$, now we prove that $\reld \neg \phi_1\to [p]\Box\phi_2$.
By the condition \rmn{1} there exists an AFOL formula $\psi^\flat_1$ such that $\models \psi^\flat_1\leftrightarrow \neg \phi_1$.
Note that in $\psi^\flat_1\leftrightarrow \neg \phi_1$ the sum is strictly less than $n$, by inductive hypothesis we have $\reld \neg \phi_1\to \psi^\flat_1$.
On the other hand,
by the condition \rmn{4}, $\models \neg\phi_1\to [p]\Box\phi_2$ and $\models \psi^\flat_1\leftrightarrow \neg \phi_1$, we have that $\reld\psi^\flat_1\to [p]\Box\phi_2$ holds. Thus we have $\reld\neg\phi_1\to [p]\Box\phi_2$.

Similar for the case $C_i = \phi_1\vee \la p\ra \Diamond\phi_2$.

\end{proof}


\begin{proof}[The proof of Theorem~\ref{theorem:The ``Main Theorem''}\rmn{1}]
It is known that an FODL formula of the form $[p]\phi^\flat$ can be expressed as an AFOL formula (refer to page 326 of~\cite{Harel00}).
In SDL, it is easy to see that closed sequential SPs have the same expressiveness as the regular programs in FODL.
Because a closed sequential SP can be taken as a regular program by simply ignoring the differences between macro and micro events in this SP which only play their roles in parallel SPs.
For example, an SP $p= (x:=1\nex x>=1?\nex \epsilon \seq y:=x+2)$ can be taken as a regular program $p' = (x:=1\seq x>=1?\seq \true?\seq y:=x+2)$ by replacing all operators $\nex$ with the operator $\seq$ and the event $\epsilon$ with $\true?$.
$p$ and $p'$ have the same semantics.

Therefore,
\begin{equation}
\label{equ:MT2-1}
\begin{gathered}
\mbox{any SDL formula of the form $[p]\phi^\flat$, where $p$ is a sequential program,}\\
\mbox{can be expressed as an AFOL formula. }
\end{gathered}
\end{equation}
Based on this fact, we now show that any SDL formula is expressible in AFOL.
Actually, we only need to consider the case when all programs of an SDL formula are sequential, because a parallel SP is semantically equivalent to a sequential one according to the procedure $\ToSeq$.

Given an SDL formula $\phi$, in which all programs are sequential, we prove by induction on the number of the appearances of $[p]$ in $\phi$. The base case is trivial.
For the inductive cases, the only non-trivial cases are $\phi = [p]\psi$ and $\phi = [p]\Box\psi$, where by inductive hypothesis there exists an AFOL formula such that $\models \psi \leftrightarrow \psi^\flat_1$.
From the fact~\ref{equ:MT2-1} above, we already prove the case $[p]\psi^\flat_1$.
It remains to show that $[p]\Box\psi^\flat_1$ is expressible according to different cases of $p$.
Below we prove $[p]\Box\psi^\flat_1$ by induction on the structure of $p$.

\begin{itemize}
\item The base cases are $p=\noth$, $p=\halt$ and $p=\alpha$.
We only take $p=\alpha$ for example, other cases are similar.
By the soundness of rule $(\alpha, \Box\phi)$, we obtain that $\models [\alpha]\Box\psi^\flat_1\leftrightarrow (\psi^\flat_1\wedge [\alpha]\psi^\flat_1)$, where by~\ref{equ:MT2-1} $[\alpha]\psi^\flat_1$ is expressible in AFOL.
So $\psi^\flat_1\wedge [\alpha]\psi^\flat_1$ is also expressible in AFOL.

\item If $p=p_1\seq p_2$, by the soundness of rule $(\seq, \Box\phi)$, we have that $\models [p_1\seq p_2]\Box\psi^\flat_1\leftrightarrow ([p_1]\Box\psi^\flat_1\wedge [p_1][p_2]\Box\psi^\flat_1)$.
By inductive hypothesis, we have $[p_1]\Box\psi^\flat_1$ and $[p_2]\Box\psi^\flat_1$ are expressible in AFOL. So by~\ref{equ:MT2-1}, $[p_1][p_2]\Box\psi^\flat_1$ is also expressible in AFOL.
So $[p_1]\Box\psi^\flat_1\wedge [p_1][p_2]\Box\psi^\flat_1$ is expressible in AFOL.

\item If $p = p_1\cup p_2$, by the soundness of rule $(\cup)$, we have that $\models [p_1\cup p_2]\Box\psi^\flat_1\leftrightarrow ([p_1]\Box\psi^\flat_1\wedge [p_2]\Box\psi^\flat_2)$.
By inductive hypothesis, $[p_1]\Box\psi^\flat_1$ and $[p_2]\Box\psi^\flat_1$ are expressible in AFOL, so is $[p_1]\Box\psi^\flat_1\wedge [p_2]\Box\psi^\flat_2$.

\item If $p=q^*$, by the soundness of rule $(*, \Box\phi)$, we have that $\models [q^*]\Box\psi^\flat_1\leftrightarrow [q^*][q]\Box\psi^\flat_1$.
By inductive hypothesis, $[q]\Box\psi^\flat$ is expressible in AFOL.
By~\ref{equ:MT2-1} we then get that $[q^*][q]\Box\psi^\flat$ is expressible in AFOL.
\end{itemize}
\end{proof}

\begin{proof}[The proof of Theorem~\ref{theorem:The ``Main Theorem''}\rmn{2}]
We prove by induction on the syntactic structure of $p$.
In $\phi^\flat\to \mathop{op} \psi^\flat$, when $\mathop{op}$ is $\forall x$ or $\exists x$, the proof is trivial, because $\phi^\flat\to \mathop{op} \phi^\flat$ itself is an AFOL formula so there must be $\reld \phi^\flat\to \mathop{op} \psi^\flat$.

We first consider the case when $\mathop{op}$ is $[p]$.

\begin{itemize}
    \item The base cases are $p=\noth$, $p=\halt$ and $p=\alpha$, we only take $p=\alpha$ for example, other cases are similar.
    We prove by induction on the number $k$ of micro events in $\alpha$. If $k=1$, then $\alpha = \epsilon$. From $\models \phi^\flat\to [\epsilon]\psi^\flat$, by the soundness of rule $(\epsilon)$,
    we have $\models \phi^\flat \to \psi^\flat$.
    Since $\phi^\flat \to \psi^\flat$ is an AFOL formula, $\reld \phi^\flat \to \psi^\flat$ holds.
    By rule $(\epsilon)$ and the FOL rules in Table~\ref{table:Rules of FOL} and \ref{table:Other Rules of FOL}, we obtain that $\reld \phi^\flat\to [\epsilon]\psi^\flat$.
    Suppose $k>1$, $\alpha$ is either of the form $\psi_0?\nex \beta$ or $x:=e\nex \beta$. We only consider the case $\alpha = (x:=e\nex \beta)$, the other case is similar.
    From $\models \phi^\flat\to [x:=e\nex \beta]\psi^\flat$, by the soundness of rule $(x:=e)$ we have that $\models \phi^\flat\to ([\beta]\psi^\flat)[e/x]$.
    By inductive hypothesis, we know that $\reld \phi^\flat\to ([\beta]\psi^\flat)[e/x]$, therefore by rule $(x:=e)$ and the FOL rules we can derive $\reld \phi^\flat\to [x:=e\nex \beta]\psi^\flat$.

    \item If $p=p_1\seq p_2$, from $\models \phi^\flat\to [p_1;p_2]\psi^\flat$, by the soundness of rule $(\seq, \phi)$, we obtain $\models \phi^\flat\to [p_1][p_2]\psi^\flat$.
    By the condition \rmn{1}, there exists an AFOL formula $\psi^\flat_1$ such that $\models [p_2]\psi^\flat \leftrightarrow \psi^\flat_1$.
    Hence $\models \psi^\flat_1\to [p_2]\psi^\flat$ and $\models \phi^\flat\to [p_1]\psi^\flat_1$. By inductive hypothesis, we have $\reld \psi^\flat_1\to [p_2]\psi^\flat$ and $\reld \phi^\flat\to [p_1]\psi^\flat_1$.
    Applying rule $([], \mathit{gen})$ on $\reld \psi^\flat_1\to [p_2]\psi^\flat$, we obtain $\reld [p_1]\psi^\flat_1\to [p_1][p_2]\psi^\flat$.
    From $\reld \phi^\flat\to [p_1]\psi^\flat_1$ and $\reld [p_1]\psi^\flat_1\to [p_1][p_2]\psi^\flat$, by applying rule $(\seq, \phi)$ and the FOL rules in Table~\ref{table:Rules of FOL} and \ref{table:Other Rules of FOL},
    we can derive $\reld \phi^\flat\to [p_1\seq p_2]\psi^\flat$.

    \item If $p=p_1\cup p_2$, from $\models \phi^\flat\to [p_1\cup p_2]\psi^\flat$, by the soundness of rule $(\cup)$, there is $\models \phi^\flat\to ([p_1]\psi^\flat\wedge [p_2]\psi^\flat)$, which is equivalent to
    $\models \phi^\flat\to [p_1]\psi^\flat$ and $\models \phi^\flat\to [p_2]\psi^\flat$.
    By inductive hypothesis we have that $\reld \phi^\flat\to [p_1]\psi^\flat$ and $\reld \phi^\flat\to [p_2]\psi^\flat$.
    By rule $(\cup)$ and the FOL rules in Table~\ref{table:Rules of FOL} and \ref{table:Other Rules of FOL}, we have $\reld \phi^\flat\to [p_1\cup p_2]\psi^\flat$.

    \item If $p=q^*$, by the condition \rmn{1}, there exists an AFOL formula $\phi^\flat_1$ such that $\models [q^*]\psi^\flat\leftrightarrow \phi^\flat_1$.
    From $\models \phi^\flat\to [q^*]\psi^\flat$, we also have $\models \phi^\flat \to \phi^\flat_1$.
    From $\models [q^*]\psi^\flat\leftrightarrow \phi^\flat_1$, by the soundness of the rules $(*)$, $(\cup)$, $(\noth)$ and $(\seq, \phi)$, it is not hard to see that
    $\models \phi^\flat_1\leftrightarrow [q^*]\psi^\flat\leftrightarrow [\noth\cup q\seq q^*]\psi^\flat \leftrightarrow [\noth]\psi^\flat\wedge [q\seq q^*]\psi^\flat \leftrightarrow \psi^\flat\wedge [q][q^*]\psi^\flat \leftrightarrow \psi^\flat\wedge [q]\phi^\flat_1$.
    From these logical equivalences we can see that $\models \phi^\flat_1\to [q]\phi^\flat_1$ and $\models\phi^\flat_1\to \psi^\flat$.
    By inductive hypothesis, from $\models \phi^\flat \to \phi^\flat_1$, $\models \phi^\flat_1\to [q]\phi^\flat_1$ and $\models\phi^\flat_1\to \psi^\flat$, we get that
    $\reld \phi^\flat \to \phi^\flat_1$, $\reld \phi^\flat_1\to [q]\phi^\flat_1$ and $\reld\phi^\flat_1\to \psi^\flat$.
    By rule $([*])$ and the FOL rules, finally there is $\reld \phi^\flat \to [q^*]\psi^\flat$.

    \item If $p=\para(q_1,...,q_n)$, by applying rule $(\para, \mathit{seq})$, we can transform $p$ into a sequential program $p'$, i.e., $p\red p'$.
    By the soundness of rule $(\mathit{r1})$, from $\models \phi^\flat\to [p]\psi^\flat$, we can get that $\models \phi^\flat\to [p']\psi^\flat$.
    Since $p'$ is sequential, we can analyze it based on the cases given above.
    Using inductive hypothesis, we can prove that $\reld \phi^\flat\to [p']\psi^\flat$.
    By rule $(\mathit{r1})$ we can obtain $\reld \phi^\flat\to [p]\psi^\flat$.
\end{itemize}

For the case when $\mathop{op}$ is $\la p\ra$, the proofs for the cases $p=\noth$, $p=\halt$, $p=\alpha$, $p=p_1\seq p_2$, $p=p_1\cup p_2$ and $p=\para(q_1,...,q_n)$ are similar to the proofs above, since $\la p\ra\psi^\flat$ equals to $\neg[p]\neg \psi^\flat$ and the rules $(\epsilon), (x:=e), (\seq, \phi), (\cup), (*), (\cup), (\noth)$ used in the proofs above are bidirectional. (For rule $([],\mathit{gen})$, we have the rule $(\la\ra, \mathit{gen})$.)
In the following we only prove the case $p=q^*$.

\begin{itemize}
    \item If $p=q^*$, by the condition \rmn{1} and the way of expressing regular programs in AFOL in~\cite{???}, we know that for any $n$, $\la q^n\ra\psi^\flat$ can be expressed as an AFOL formula $\phi^\flat_1(n)$,
    i.e., $\models \la q^n\ra \psi^\flat\leftrightarrow \phi^\flat_1(n)$.
    From the semantics of $\la q^*\ra \psi^\flat$, it is easy to see that $\models \la q^*\ra \psi^\flat\leftrightarrow \exists n\ge 0. \phi^\flat_1(n)$.
    From $\models \phi^\flat\to \la q^*\ra\psi^\flat$, there is $\models \phi^\flat\to \exists n\ge 0. \phi^\flat_1(n)$.
    On the other hand, when $n>0$, by the soundness of rule $(\seq, \phi)$, we have $\models \phi^\flat_1(n)\leftrightarrow \la q^n\ra \psi^\flat\leftrightarrow \la q\seq q^{n-1}\ra\psi^\flat\leftrightarrow \la q\ra\la q^{n-1}\ra\psi^\flat\leftrightarrow \la q\ra \phi^\flat_1(n-1)$.
    From these logical equivalences we can get that $\models \phi^\flat_1(n)\to \la q\ra \phi^\flat_1(n-1)$.
    When $n=0$, from $\models \la q^n\ra \psi^\flat\leftrightarrow \phi^\flat_1(n)$, by the soundness of rule $(\epsilon)$, we have $\models \la q^0\ra\psi^\flat \leftrightarrow \la \noth\ra \psi^\flat\leftrightarrow \psi^\flat \leftrightarrow \phi^\flat_1(0)$. So $\models \phi^\flat_1(0)\to \psi^\flat$.
    By inductive hypothesis, from $\models \phi^\flat\to \exists n\ge 0. \phi^\flat_1(n)$, $\models \phi^\flat_1(n)\to \la q\ra \phi^\flat_1(n-1)$ and $\models \phi^\flat_1(0)\to \psi^\flat$,
    we have that $\reld \phi^\flat\to \exists n\ge 0. \phi^\flat_1(n)$, $\reld \phi^\flat_1(n)\to \la q\ra \phi^\flat_1(n-1)$ and $\reld \phi^\flat_1(0)\to \psi^\flat$.
    By rule $(\la*\ra)$ and the FOL rules in Table~\ref{table:Rules of FOL} and \ref{table:Other Rules of FOL}, we obtain $\reld \phi^\flat\to \la q^*\ra \psi^\flat$.
\end{itemize}
\end{proof}

\begin{proof}[The proof of Theorem~\ref{theorem:The ``Main Theorem''} \rmn{3}]
    When $\mathop{op}\in \{[p], \la p\ra\}$, the condition \rmn{3} is in fact stated as the rules $([], \mathit{gen})$ and $(\la\ra, \mathit{gen})$.
    So we only need to prove when $\mathop{op}\in \{\forall x, \exists x\}$.
    Below we only consider the case when $\mathop{op}$ is $\forall x$. The case when $\mathop{op}$ is $\exists x$ can be similarly obtained by using the dual rules of the rules used in the proof below.

Actually, using the FOL rules in Table~\ref{table:Rules of FOL} and \ref{table:Other Rules of FOL}, we can construct the following deductions:
\begin{center}
$\infer[^{(\to r)}]
    {\cdot\Rightarrow \forall x.\phi\to \forall x.\psi}
    {\infer[^{(\forall r)}]
        {\forall x.\phi\Rightarrow \forall x.\psi}
        {\infer[^{(\forall l)}]
            {\forall x.\phi\Rightarrow \psi[x'/x]}
            {\infer[^{(\to l)}]
                {\phi[x'/x]\Rightarrow \psi[x'/x]}
                {\cdot\Rightarrow \phi[x'/x]\to \psi[x'/x]}
            }
        }
    }
$
\end{center}
where $x'$ is a new variable w.r.t. $\phi$ and $\psi$.
\end{proof}

\begin{proof}[The proof of Theorem~\ref{theorem:The ``Main Theorem''} \rmn{4}]
    As the proof of Theorem~\ref{theorem:The ``Main Theorem''} \rmn{2}, we proceed by induction on the syntactic structure of $p$.
    Below we only consider the case $\phi^\flat\to [p]\Box\psi^\flat$.
    The proof of the case $\phi^\flat\to \la p\ra \Diamond\psi^\flat$ is similar by the relation between $\la p\ra \Diamond\psi^\flat$ and its dual form $[p]\Box\psi^\flat$.

    The cases for $p=\noth$, $p=\halt$, $p=p_1\cup p_2$ and $p=\para(q_1,...,q_n)$ are similar to the corresponding cases in the proof of Theorem~\ref{theorem:The ``Main Theorem''} \rmn{2} above.
    We omit them here.

    \begin{itemize}
    \item For the base case, we only consider $p=\alpha$.
    From $\models \phi^\flat \to [\alpha]\Box\psi^\flat$, by the soundness of rule $(\alpha, \Box\phi)$, there is $\models \phi^\flat\to (\psi^\flat\wedge [\alpha]\psi^\flat)$, which is equivalent to
    $\models \phi^\flat\to \psi^\flat$ and $\models \phi^\flat\to [\alpha]\psi^\flat$.
    Obviously there is $\reld \phi^\flat\to \psi^\flat$.
    By the condition \rmn{2}, we have $\reld \phi^\flat\to [\alpha]\psi^\flat$.
    Therefore, by the FOL rules in Table~\ref{table:Rules of FOL} and \ref{table:Other Rules of FOL} we have $\reld \phi^\flat \to (\psi^\flat\wedge [\alpha]\psi^\flat)$.
    By rule $(\alpha, \Box\phi)$ and the rules in FOL, we obtain that $\reld \phi^\flat\to [\alpha]\Box\psi^\flat$.

    \item If $p=p_1\seq p_2$, from $\models \phi^\flat \to [p_1\seq p_2]\Box\psi^\flat$, by the soundness of rule $(\seq, \Box\phi)$, there is $\models \phi^\flat\to ([p_1]\Box\psi^\flat\wedge [p_1][p_2]\Box\psi^\flat)$,
    which is equivalent to $\models \phi^\flat \to [p_1]\Box\psi^\flat$ and $\models \phi^\flat\to [p_1][p_2]\Box\psi^\flat$.
    From $\models \phi^\flat \to [p_1]\Box\psi^\flat$, by inductive hypothesis we can have $\reld \phi^\flat \to [p_1]\Box\psi^\flat$.
    According to the condition \rmn{1}, there is an AFOL formula $\psi^\flat_1$ such that $\models [p_2]\Box\psi^\flat \leftrightarrow \psi^\flat_1$.
    So $\models \phi^\flat\to [p_1]\psi^\flat_1$ and $\models \psi^\flat_1 \to [p_2]\Box\psi^\flat$.
    By inductive hypothesis, there are $\reld \phi^\flat\to [p_1]\psi^\flat_1$ and $\reld \psi^\flat_1 \to [p_2]\Box\psi^\flat$.
    From $\reld \psi^\flat_1 \to [p_2]\Box\psi^\flat$, by applying rule $([],\mathit{gen})$, we get that $\reld [p_1]\psi^\flat\to [p_1][p_2]\Box\psi^\flat$.
    By $\reld \phi^\flat\to [p_1]\psi^\flat_1$ and $\reld [p_1]\psi^\flat\to [p_1][p_2]\Box\psi^\flat$, we conclude that $\reld \phi^\flat\to [p_1][p_2]\Box\psi^\flat$.
    From $\reld \phi^\flat \to [p_1]\Box\psi^\flat$ and $\reld \phi^\flat\to [p_1][p_2]\Box\psi^\flat$, by applying rule $(\seq, \Box\phi)$ and other FOL rules, we can derive $\reld \phi^\flat\to [p_1\seq p_2]\Box\psi^\flat$.

    \item If $p=q^*$, from $\models \phi^\flat\to [q^*]\Box\psi^\flat$, by the soundness of rule $(*, \Box\phi)$, we have that $\models \phi^\flat\to [q^*][q]\Box\psi^\flat$.
    According to the condition \rmn{1}, there exists an AFOL formula $\psi^\flat_1$ such that $\models [q]\Box\psi^\flat\leftrightarrow \psi^\flat_1$.
    Hence there are $\models \phi^\flat\to [q^*]\psi^\flat_1$ and $\models \psi^\flat_1\to [q]\Box\psi^\flat$.
    By the condition \rmn{2}, from $\models \phi^\flat\to [q^*]\psi^\flat_1$ we have $\reld \phi^\flat\to [q^*]\psi^\flat_1$.
    From $\models \psi^\flat_1\to [q]\Box\psi^\flat$, by inductive hypothesis, we have $\reld \psi^\flat_1\to [q]\Box\psi^\flat$.
    Applying rule $([],\mathit{gen})$, there is $\reld [q^*]\psi^\flat_1\to [q^*][q]\Box\psi^\flat$.
    From $\reld \phi^\flat\to [q^*]\psi^\flat_1$ and $\reld [q^*]\psi^\flat_1\to [q^*][q]\Box\psi^\flat$, we conclude that $\reld \phi^\flat\to [q^*][q]\Box\psi^\flat$.
    Applying rule $(*, \Box\phi)$ and other rules in FOL, it is easy to see that $\reld \phi^\flat \to [q^*]\Box\psi^\flat$.

    \end{itemize}
\end{proof}

\end{document}